\numberwithin{equation}{section}
\def\ExtendSymbol#1#2#3#4#5{\ext@arrow 0099{\arrowfill@#1#2#3}{#4}{#5}}
\def\RightExtendSymbol#1#2#3#4#5{\ext@arrow 0359{\arrowfill@#1#2#3}{#4}{#5}}
\def\LeftExtendSymbol#1#2#3#4#5{\ext@arrow 6095{\arrowfill@#1#2#3}{#4}{#5}}
\newtheorem{thm}{Theorem}
\newtheorem{prop}{Proposition}
\newtheorem{lem}{Lemma}
\newtheorem{rmk}{Remark}
\newtheorem{assu}{Assumption}
\def\be{\begin{equation}}
\def\ee{\end{equation}}
\def\bea{\begin{eqnarray}}
\def\eea{\end{eqnarray}}
\newcommand{\non}{\nonumber \\}
\def\beginn{\begin{eqnarray*}}
\def\endn{\end{eqnarray*}}
\def\beginy{\begin{eqnarray}}
\def\endy{\end{eqnarray}}
\def\begine{\begin{enumerate}}
\def\ende{\end{enumerate}}
\newcommand{\bbA}{{\bf A}}
\newcommand{\bba}{{\bf a}}
\newcommand{\bbB}{{\bf B}}
\newcommand{\bbe}{{\bf e}}
\newcommand{\bbf}{{\bf f}}
\newcommand{\bbF}{{\bf F}}
\newcommand{\bbK}{{\bf K}}
\newcommand{\bbS}{{\bf S}}
\newcommand{\bbu}{{\bf u}}
\newcommand{\bbv}{{\bf v}}
\newcommand{\bbX}{{\bf X}}
\newcommand{\bbx}{{\bf x}}
\begin{document}{

\title{Estimation of Cross--Sectional Dependence \\ in Large Panels}
\author
{ \ {Jiti Gao}$^{1}$, \ {Guangming Pan}$^{2}$, \ {Yanrong Yang}$^{3}$ and {Bo Zhang}$^{1}$\\
\normalsize{$^{1}$Monash University, Australia}\\
\normalsize{$^{2}$Nanyang Technological University, Singapore}\\
\normalsize{$^{3}$The Australian National University, Australia}\\
}
\maketitle
\begin{abstract}
Accurate estimation for extent of cross-sectional dependence in large panel data analysis is paramount to further statistical analysis on the data under study. Grouping more data with weak relations (cross--sectional dependence) together often results in less efficient dimension reduction and worse forecasting. This paper describes cross-sectional dependence among a large number of objects (time series) via a factor model and parameterizes its extent in terms of strength of factor loadings. A new joint estimation method is proposed for the parameter representing such extent and some other parameters involved in the estimation procedure. In particular, a marginal estimator is also proposed as the related other parameters are observed. Asymptotic distributions for the joint estimator and marginal estimator are both established. Various simulation designs illustrate the effectiveness of the proposed estimation method in the finite sample performance. Applications in cross-country macro-variables and stock returns from S$\&$P 500 are studied.
\smallskip

\textbf{Keywords}: Cross-sectional dependence; factor model; joint estimation; large panel data analysis; marginal estimation.
\smallskip

\textbf{JEL Classification}: C21, C32.

\end{abstract}

\section{Introduction}
Large panel data analysis attracts ever-growing interest in the modern literature of economics and finance. Cross--sectional dependence is popular in large panel data and the relevant literature focuses on testing existence of cross-sectional dependence. A survey on description and testing of cross-sectional dependence is given in \cite{SW2012}. \cite{P2004} utilizes sample correlations to test cross-sectional dependence while \cite{BFK2012} extend the classical Lagrangian multiplier (LM) test to the large dimensional case. \cite{CGL2012} and \cite{HPP2012} consider cross-sectional dependence tests for nonlinear econometric models. As more and more cross-sections are grouped together in panel data, it is quite natural and common for cross-sectional dependence to appear. Cross-sectional independence is an extreme hypothesis. Rejecting such a hypothesis does not provide much information about the relationship between different cross-sections under study. In view of this, measuring the degree of cross-sectional dependence is more important than just testing its presence. As we know, in comparison with cross-sectional dependence tests, few studies contribute to accessing the extent of cross-sectional dependence. \cite{Ng2006} uses spacings of cross-sectional correlations to exploit the ratio of correlated subsets over all sections. \cite{BKP2016} use a factor model to describe cross-sectional dependence and develop estimators that are based on a method of moments.

In this paper, we will contribute to this area: description and measure of the extent of cross-sectional dependence for large dimensional panel data with $N$ cross-section units and $T$ time series. The first natural question is how to describe cross-sectional dependence in panel data efficiently? To address this issue, the panel data literature mainly discusses two different ways of modelling cross-sectional dependence: the spatial correlation and the factor structure approach (see, for example, \cite{SW2012}). This paper utilizes the factor model to describe cross-sectional dependence as well as capturing time serial dependence, which can benefit further statistical inference such like forecasting. Actually, the factor model is not only a powerful tool to characterize cross-sectional dependence for economic and financial data, but also efficient in dealing with statistical inference for high dimensional data from a dimension-reduction point of view. Some related studies include \cite{FFL2008}, \cite{FLM2013} and \cite{PY2008}.

While it is a rare phenomenon to have cross-sectional independence for all $N$ sections, it is also unrealistic to assume that all $N$ sections are dependent. Hence the degree or extent of cross-sectional dependence is more significant in statistical inference for panel data. \cite{BN2006} illustrate that more data usually result in worse forecasting due to heterogeneity in the presence of cross-sectional dependence. Grouping strong-correlated cross-sections together is very significant in further study. In factor model, relation among cross-sections is described by common factors and the strength of this relation is reflected via factor loading for each cross-section. Larger factor loading for one cross-section means stronger relation of this cross-section with common factors. In this paper,  we suppose that some factor loadings are bounded away from zero while others are around zero. In detail, it is assumed that only $[N^{\alpha_0}] (0\leq\alpha_0\leq 1)$ of all $N$ factor loadings are individually important. Instead of measuring the extent by $\alpha_0N$, we adopt the parametrization $[N^{\alpha_0}]$. The proportion of $[N^{\alpha_0}]$ over the total $N$ is quite small which tends to $0$ as $0<\alpha_0<1$, while $\alpha_0N$ is comparable to $N$ because of the same order. In this sense, our model covers some ``sparse" cases that only a small part of the sections are cross-sectionally dependent.

With such parametrization of extent for cross-sectional dependence, one goal is to propose an estimation method for the parameter $\alpha_0$. This paper proposes a unified estimation method which incorporates two classical types of cross-sectional dependence: static and dynamic principal components. In fact, factor model is equivalent to principal component analysis (PCA) in some sense (see \cite{FLM2013}). Static PCA provides the common factor with most variation while dynamic PCA finds the common factor with largest ``aggregated" time-serial covariances. The existing literature, including \cite{Baing2002}, \cite{FFL2008}, \cite{FLM2013}, focuses on common factors from static PCA. In high dimensional time series, researchers prefer using dynamic PCA to derive common factors that can keep time-serial dependence. This is very important in high dimensional time series forecasting, e.g. \cite{LY2012}. 

In this paper, for our panel data $x_{it}, i=1, 2, \ldots, N; t=1, 2, \ldots, T$, an estimator for $\alpha_0$ is proposed based on the criterion of covariance between $\bar x_t$ and $\bar x_{t+\tau}$ for $0\leq\alpha_0\leq 1$, where $\bar x_t=\frac{1}{N}\sum^{N}_{i=1}x_{it}$ and $\tau\geq 0$. When $\tau=0$, it reduces to the approach proposed in \cite{BKP2016}. However, the criterion with $\tau=0$ can derive consistent estimation for $\alpha_0$ under the restriction $\alpha_0>0.5$. This is due to the interruption of variance of error components. We overcome this disadvantage by benefiting from possibility of disappearance of time-serial covariance in error components in dynamic PCA. The criterion $cov(\bar{x}_t, \bar{x}_{t+\tau})$ with $\tau>0$, under the scenario of common factors from dynamic PCA, is proposed to obtain consistent estimation for all ranges of $\alpha_0$ in $[0, 1]$. Furthermore, joint estimation approach for $\alpha_0$ and another population parameter (necessary in estimation) is established. If this population parameter is observed, marginal estimation is also provided. From the aspect of theoretical contribution, asymptotic distributions of the joint and marginal estimators are both developed.

The main contribution of this paper is summarized as follows. 
\begin{enumerate}
\item We construct two estimators for $\alpha_0$ by utilizing both joint estimation and marginal estimation, respectively. If the parameter $\kappa_0$ involved in the proposed criterion is unknown, the joint estimation of $\alpha_0$ and $\kappa_0$ will be adopted. Otherwise, we use the marginal estimation for $\alpha_0$. This estimation method is unified in the sense of covering two types of common factors derived from static PCA and dynamic PCA, respectively. Moreover, it includes the approach by \cite{BKP2016} as a special case.

\item We establish new asymptotic distributions for both the joint and the marginal estimators. The asymptotic marginal distribution coincides with that for the joint estimator for the case where $\kappa_0$ is assumed to be known. An estimator for the asymptotic variance involved in asymptotic distribution of the joint estimation method is established.

\item In practice, finite sample performances of the proposed estimators for several different types of common factors are provided.

\end{enumerate}

The rest of the paper is organized as follows. The model and the main assumptions are introduced in Section 2. Section 3 proposes both joint and marginal estimators that are based on the second moment criterion. Asymptotic properties for these estimators are established in Section 4. Section 5 reports the simulation results. Section 6 provides empirical applications to both cross-country macro-variables and stock returns in S$\&$P 500 market. Conclusions are included in Section 7. Main proofs are provided in Appendix A while some lemmas are listed in Appendix B. The proofs of lemmas are given in a supplementary material. 

\section{The model}

Let $x_{it}$ be a double array of random variables indexed by $i=1,\ldots,N$ and $t=1,\ldots,T$, over space and time, respectively.
The aim of this paper is to measure the extent of the cross-sectional dependence of the data $\{x_{it}: i=1,\ldots,N\}$.
In panel data analysis, there are two common models to describe cross-sectional dependence: spatial models and factor models.
In \cite{BKP2016}, a static approximate factor model is used. We consider a factor model as follows:
\begin{eqnarray}\label{factor1}
x_{it}&=&\mu_i+\boldsymbol{\beta}_{i0}^{'}\bbf_t+\boldsymbol{\beta}_{i1}^{'}\bbf_{t-1}+\cdots+\boldsymbol{\beta}_{is}^{'}\bbf_{t-s}+u_{it}\non
&=&\mu_i+\boldsymbol{\beta}_{i}^{'}(L)\bbf_t+u_{it}, \ \ i=1,2,\ldots,N; \ t=1,2,\ldots,T,
\end{eqnarray}
where $\bbf_t$ is the $m\times 1$ vector of unobserved factors (with m being fixed),
\begin{eqnarray*}
\boldsymbol{\beta}_i(L)=\boldsymbol{\beta}_{i0}+\boldsymbol{\beta}_{i1}L+\boldsymbol{\beta}_{i2}L^2+\cdots+\boldsymbol{\beta}_{is}L^{s},
\end{eqnarray*}
in which $\boldsymbol{\beta}_{i\ell}=(\beta_{i\ell 1}, \beta_{i\ell 2}, \ldots, \beta_{i\ell m})^{'}$, $\ell=0,1,\ldots,s$ are the associated vectors
of unobserved factor loadings and $L$ is the lag operator, here $s$ is assumed to be fixed, and $\mu_i, i=1,2,\ldots,N$ are constants that represent the mean values for all sections, and $\{u_{it}: i=1,\ldots,N; t=1,\ldots,T\}$ are idiosyncratic components.

Clearly, we can write (\ref{factor1}) in the static form:
\begin{eqnarray}\label{model1}
x_{it}=\mu_i+\boldsymbol{\beta}_i^{'}\bbF_t+u_{it},\ i=1,2,\ldots,N; t=1,2,\ldots,T,
\end{eqnarray}
where
\begin{eqnarray*}
\boldsymbol{\beta}_i=\left(
                       \begin{array}{c}
                         \boldsymbol{\beta}_{i0} \\
                         \boldsymbol{\beta}_{i1} \\
                         \vdots \\
                         \boldsymbol{\beta}_{is} \\
                       \end{array}
                     \right)_{m(s+1)} \ \ \ and \ \ \ \bbF_t=\left(
                                                      \begin{array}{c}
                                                        \bbf_t \\
                                                        \bbf_{t-1} \\
                                                        \vdots \\
                                                        \bbf_{t-s} \\
                                                      \end{array}
                                                    \right)_{m(s+1)}.
\end{eqnarray*}
This model has been studied in \cite{SW2002} and \cite{F2009}.

The dimension of $\bbf_t$ is called the number of dynamic factors and is denoted by $m$. Then the dimension of $\bbF_t$ is equal to $r=m(s+1)$. In factor analysis, $\boldsymbol{\beta}_i^{'}\bbF_t$ is called the common components of $x_{it}$.

We first introduce the following assumptions.

\begin{assu}\label{A1}
The idiosyncratic component $\{\bbu_t=(u_{1t}, u_{2t}, \ldots, u_{Nt})^{'}: t=1,2,\ldots,T\}$ follows a linear stationary process of the form:
\begin{eqnarray}
u_{it}=\sum^{+\infty}_{j=0}\phi_{ij}\Big(\sum^{+\infty}_{s=-\infty}\xi_{js}\nu_{j,t-s}\Big),
\end{eqnarray}
where $\{\nu_{is}: i=\ldots,-1,0,1,\ldots; s=0,1,\ldots\}$ is a double sequence of i.i.d. random variables with mean zero and unit variance, and
\begin{eqnarray}\label{u15}
\sup_{0<j<+\infty}\sum^{N}_{i=1}|\phi_{ij}|<+\infty \ \ \mbox{and} \ \ \sup_{0<j<+\infty}\sum^{+\infty}_{s=-\infty}|\xi_{js}|\leq+\infty.
\end{eqnarray}
Moreover,
\begin{eqnarray}\label{0503}
E(u_{it}u_{j,t+\tau})=\gamma_1(\tau)\gamma_2(|i-j|),
\end{eqnarray}
where $\gamma(\tau)$ is defined such that $\gamma_1(\tau) \gamma_2(0) = E\left[u_{it} u_{i, t+\tau}\right]$ and $\gamma_2(|i-j|)$ satisfies
\begin{eqnarray}\label{u14}
\sum^{N}_{i,j=1}\gamma_2(|i-j|)=O(N).
\end{eqnarray}
\end{assu}
\begin{rmk}\label{kefen}
Condition (\ref{u14}) borrows the idea of Assumption C of \cite{Baing2002} to describe weak cross-sectional dependence in error components.

\end{rmk}

\begin{assu}\label{A3}
For $\ell=0,1,2,\ldots,s$ and $k=1,2,\ldots,m$,
\begin{eqnarray}
&&\beta_{i\ell k}=v_{i\ell k}, \ i=1,2,\ldots,[N^{\alpha_{\ell k}}] \ \ \mbox{and} \ \ \sum^{N}_{i=[N^{\alpha_{\ell k}}]+1}\beta_{i\ell k}=O(1),
\end{eqnarray}
where $[N^{\alpha_{\ell k}}]\leq N^{\alpha_{\ell k}}$ is the largest integer part of $N^{\alpha_{\ell k}}$, $0<\alpha_{\ell k}\leq 1$ and $v_{i\ell k}\sim i.i.d.(\mu_{v},\sigma_{v}^2)$ has finite sixth moments, with $\mu_{v}\neq 0$ and $\sigma_{v}^2>0$. Moreover, $\{v_{i\ell k}: i=1,2,\ldots, N; \ell=0,1,\ldots,s; k=1,2,\ldots,m\}$ are assumed to be independent of the factors $\{\bbf_t: t=1,2,\ldots,T\}$ and the idiosyncratic components $\{u_{it}: i=1,2,\ldots,N; t=1,2,\ldots,T\}$.
\end{assu}

\begin{assu}\label{A2}
The factors $\{\bbf_t, t=1,2,\ldots,T\}$ are covariance stationary with the following representation:
\begin{eqnarray}
\bbf_t=\sum^{\infty}_{j=0}b_j\boldsymbol{\zeta}_{t-j},
\end{eqnarray}
where $\{\boldsymbol{\zeta}_t, t=\ldots,-1,0,1,\ldots\}$ is an i.i.d sequence of $m$-dimensional random vectors whose components are i.i.d with zero mean and unit variance, the fourth moments of $\{\boldsymbol{\zeta}_t, -\infty<t<\infty\}$ are finite, and the coefficients $\{b_j: j=0,1,2,\ldots\}$ satisfy $\sum^{\infty}_{j=0}|b_j|<\infty$.
Furthermore, the unobserved factors $\{\bbf_t: t=1,2,\ldots, T\}$ are independent of the idiosyncratic components $\{\bbu_{t}: t=1,2,\ldots,T\}$.

\end{assu}

Now we provide some justification for these two assumptions.
\begin{enumerate}
\item Justification of Assumption 1:
The weak stationarity assumption on the idiosyncratic components $\{\bbu_{t}: t=1,2,\ldots,T\}$ is commonly used in time series analysis. Rather than imposing an independence assumption, weak cross--sectional correlation and serial correlation are imposed via $\gamma_2(|i-j|)$ and $\gamma_1(\tau)$, respectively. The levels of weakness are described by (2.6). Note that when $\{u_{it}\}$ is independent across $(i,t)$, we have $\gamma_1(\tau)=0$ and $\gamma_2(|i-j|)=0$ which satisfy Conditions (2.6).

\item Justification of Assumption 2: The degree of cross-sectional dependence in $\{\bbx_{t}: t=1,2,\ldots,N\}$ crucially depends on the nature of the factor loadings. This assumption groups the factor loadings into two categories: a strong category with effects that are bounded away from zero, and a weak category with transitory effects that tend to zero. From this point, the first $[N^{\alpha_0}]$ sections are dependent while the rest are independent. Here $\alpha_0=\max(\alpha_{\ell k}: \ell=0,1,2,\ldots,s; k=1,2,\ldots,m)$.

To simplify the proof of Theorem 2 to be established below, we require the factor loadings to have the finite sixth moments. However, we believe that the finite second moment condition might just be sufficient by performing the truncation technique in the proof of Lemma 3 below.

\end{enumerate}

\section{The estimation method}

The aim of this paper is to estimate the exponent $\alpha_0=\max_{\ell,k}(\alpha_{\ell k})$, which describes the extent of cross-sectional dependence. As in \cite{BKP2016} , we consider the cross-sectional average $\bar x_t=1/N\sum^{N}_{i=1}x_{it}$ and then derive an estimator for $\alpha_0$ from the information of $\{\bar x_t: t=1,2,\ldots,T\}$. \cite{BKP2016} use the variance of the cross-sectional average $\bar x_t$ to estimate $\alpha_0$ and carry out statistical inference for an estimator of $\alpha_0$. Specifically they show that
\begin{eqnarray}\label{premethod}
Var(\bar x_t)=\widetilde{\kappa}_0[N^{2\alpha_0-2}]+N^{-1}c_N+O(N^{\alpha_0-2}),
\end{eqnarray}
where $\widetilde{\kappa}_0$ is a constant associated with the common components and $c_N$ is a bias constant incurred by the idiosyncratic errors.
From (\ref{premethod}), we can see that, in order to estimate $\alpha_0$, \cite{BKP2016} assume that $2\alpha_0-2>-1$, i.e. $\alpha_0>1/2$. Otherwise, the second term will have a higher order than the first term. So the approach by \cite{BKP2016} fails in the case of $0<\alpha_0<1/2$.

This paper is to propose a new estimator that is applicable to the full range of $\alpha_0$, i.e., $0\leq\alpha_0\leq1$. Based on the assumption that the common factors possess serial dependence that is stronger than that of the idiosyncratic components, we construct a so--called covariance criterion $Cov(\bar x_t, \bar x_{t+\tau})$, whose leading term does not include the idiosyncratic components for $0\leq\alpha_0\leq1$. In other words, the advantage of this covariance criterion over the variance criterion $Var(\bar x_t)$ lies on the fact that there is no interruption brought by the idiosyncratic components
$\{u_{it}: i=1,2,\ldots,N; t=1,2,\ldots,T\}$ in $Cov(\bar x_t, \bar x_{t+\tau})$.

We define \begin{eqnarray}\label{160530}
\kappa_{\tau}=\mu_v^2\sum^{s}_{\ell_1,\ell_2=0}\sum^{m}_{k=1}E(f_{k,t-\ell_1}f_{k,t+\tau-\ell_2}),
\end{eqnarray}
in which $\mu_v = E[v_{i \ell k}]$, and $s$ and $m$ are the same as in (\ref{factor1}). Here $\kappa_{\tau}$ comes from the leading term of $Cov(\bar x_t, \bar x_{t+\tau})$.

Next, we illustrate how the covariance $Cov(\bar x_t, \bar x_{t+\tau})$ implies the extent parameter $\alpha_0$ in detail. Let $[N^{a}]$          ($a\geq 0$) denote the largest integer part not greater than $N^a$. For simplicity, let $[N^{b}]$ ($b\leq0$) denote $\frac{1}{[N^{-b}]}$. Moreover, to simplify the notation, throughout the paper we also use the following notation:
\begin{eqnarray}
[N^{ka}]:=[N^{a}]^k, \ \ [N^{a-k}]:=\frac{[N^{a}]}{N^{k}}, \ \ \forall a, k\in\mathbb{R}.
\end{eqnarray}
But we would like to remind the reader that $[N^{ka}]$ is actually not equal to $[N^{a}]^k$. Next, we will propose an estimator for $\alpha_0$ under two different scenarios: the joint estimator $(\widetilde{\alpha}_{\tau}, \widetilde{\kappa}_{\tau})$ under the case of some other parameters being unknown while the marginal estimator $\widehat\alpha_{\tau}$ for the case of some other parameters being known.

\subsection{The marginal estimator $\widehat\alpha_{\tau}$ when $\kappa_{\tau}$ is known}

At first we consider the marginal estimator $\widehat\alpha_{\tau}$ to deal with the case where $\kappa_{\tau}$ is known. The parameter $\kappa_{\tau}$ describes the temporal dependence in the common factors. If we know this information in advance, the estimation of the extent of cross-sectional dependence becomes easy. We propose the following marginal estimation method.

Without loss of generality, we assume that $\alpha_{\ell k}=\alpha_0, \forall \ell=0,1,2,\ldots,s; k=1,2,\ldots,m$. Let Assumption 2 hold. Let $\bar x_{nt}$ be the cross--sectional average of $x_{it}$ over $i=1,2,\ldots,n$ with $n\leq N$. Similarly, $\bar\beta_{n\ell k}:=\frac{1}{n}\sum^{n}_{i=1}\beta_{i\ell k}$. Then
\begin{eqnarray*}
E(\bar\beta_{n\ell k})=\Big\{\begin{array}{cc}
                           \mu_v, & n\leq [N^{\alpha_0}] \\
                           \frac{[N^{\alpha_0}]}{n}\mu_v+\frac{K_{n\ell k}}{n}, & n>[N^{\alpha_0}],
                         \end{array}
\end{eqnarray*}
and
\begin{eqnarray*}
Var(\bar\beta_{n\ell k})=\Big\{\begin{array}{cc}
                             \frac{\sigma^2_v}{n}, & n\leq [N^{\alpha_0}] \\
                             \frac{[N^{\alpha_0}]}{n^2}\sigma^2_v, & n>[N^{\alpha_0}],
                           \end{array}
\end{eqnarray*}
where $K_{n\ell k}=\sum^{n}_{i=[N^{\alpha_0}]+1}\beta_{i\ell k}$.

It follows that
\begin{eqnarray}\label{f1}
Cov(\bar x_{nt}, \bar x_{n,t+\tau})&=&\sum^{s}_{\ell=0}\sum^{m}_{k=1}\Big((E[\bar\beta_{n\ell k}])^2+Var(\bar\beta_{n\ell k})\Big)E(f_{k,t-\ell}f_{k,t+\tau-\ell})\non
&&+\sum^{s}_{\ell_1\neq\ell_2}\sum^{m}_{k=1}E(\bar\beta_{n\ell_1k})E(\bar\beta_{n\ell_2k})E(f_{k,t-\ell_1}f_{k,t+\tau-\ell_2})+E(\bar u_{nt}\bar u_{n,t+\tau})\non
&=&\Big\{\begin{array}{cc}
           \kappa_{\tau}+O(n^{-1}), & n\leq [N^{\alpha_0}] \\
           \kappa_{\tau}\frac{[N^{2\alpha_0}]}{n^2}+O(\frac{[N^{\alpha_0}]}{n^2}), & n>[N^{\alpha_0}].
         \end{array}
\end{eqnarray}

A simple calculation indicates that
\begin{eqnarray*}
\ln\Big({\rm Cov}(\bar x_t, \bar x_{t+\tau})\Big)^2=\ln(\kappa_{\tau}^2)+(4\alpha_0-4)\ln(N)+O\left(\frac{1}{[N^{\alpha_0}]\kappa_{\tau}}\right),
\end{eqnarray*}
which implies
\begin{eqnarray}\label{a5}
\alpha_0=\frac{\ln\Big({\rm Cov}(\bar x_t, \bar x_{t+\tau})\Big)^2-\ln(\kappa_0^2)}{4\ln(N)}+1+O\left(\frac{1}{[N^{\alpha_0}]\kappa_{\tau}\ln(N)}\right),
\end{eqnarray}
where $\kappa_{\tau}$ is defined in (\ref{160530}).

Hence, for $0\leq\alpha_0\leq 1$, $\alpha_0$ can be estimated from (\ref{a5}) using a consistent estimator for ${\rm Cov}(\bar x_t, \bar x_{t+\tau})$ given by
\begin{eqnarray}\label{a7}
\widehat\sigma_N(\tau)=\frac{1}{T-\tau}\sum^{T-\tau}_{t=1}(\bar x_t-\bar x^{(1)})(\bar x_{t+\tau}-\bar x^{(2)}),
\end{eqnarray}
where $\bar x^{(1)}=\frac{1}{T-\tau}\sum^{T-\tau}_{t=1}\bar x_t$ and $\bar x^{(2)}=\frac{1}{T-\tau}\sum^{T-\tau}_{t=1}\bar x_{t+\tau}$ and $\bar{x}_t = \frac{1}{N} \sum_{i=1}^N x_{it}$. Thus, a consistent estimator for $\alpha_0$ is given by
\begin{eqnarray}\label{a6}
\widehat\alpha_{\tau}=\frac{\log\Big(\widehat\sigma_N(\tau)\Big)^2-\ln(\kappa_{\tau})^2}{4\ln(N)}+1.
\end{eqnarray}

%%%%%%%%%%%%%%%%%%%%%%%%%%%%%
%%%%%%%%%%%%%%%%%%%%%%%%%%%%%
\subsection{The joint estimator $(\widetilde{\alpha}_{\tau}, \widetilde{\kappa}_{\tau})$}
Now we consider the case where $\kappa_{\tau}$ is unknown. Recalling (\ref{f1}), we minimize the following quadratic form in terms of $\alpha$ and $\kappa$:
\be
Q_{NT}^{(1)}(\alpha,\kappa,\tau)=\sum^{[N^{\alpha}]}_{n=1}n^3\Big(\widehat\sigma_n(\tau)-\kappa\Big)^2+\sum^{N}_{n=[N^{\alpha}]+1}n^3\Big(\widehat\sigma_n(\tau)-\frac{[N^{2\alpha}]}{n^2}\kappa\Big)^2,
\label{jiti1}
\ee
where $\widehat\sigma_n(\tau)$ is a consistent estimator for $Cov(\bar x_{nt}, \bar x_{n,t+\tau})$ of the form:
\begin{eqnarray*}
\widehat\sigma_n(\tau)=\frac{1}{T-\tau}\sum^{T-\tau}_{t=1}\big(\bar x_{nt}-\bar x_{n}^{(1)}\big)\big(\bar x_{n,t+\tau}-\bar x_{n}^{(2)}\big),
\end{eqnarray*}
with $\bar x^{(1)}_n=\frac{1}{T-\tau}\sum^{T-\tau}_{t=1}\bar x_{nt}$ and $\bar x^{(2)}_n=\frac{1}{T-\tau}\sum^{T-\tau}_{t=1}\bar x_{n,t+\tau}$.

The joint estimator $(\widetilde{\alpha}_{\tau}, \widetilde\kappa_{\tau})$ can then be obtained by
\begin{eqnarray}\label{yry01}
\widetilde{\alpha}_{\tau}=\arg\max_{\alpha}\widehat{Q}^{(1)}_{NT}(\alpha,\tau) \ \ \mbox{and} \ \ \widetilde\kappa_{\tau}=\frac{\widehat q_1^{(1)}(\widetilde{\alpha}_{\tau},\tau)+[N^{4\widetilde{\alpha}_{\tau}}]\widehat q_2^{(1)}(\widetilde{\alpha}_{\tau},\tau)}{N^{(1)}(\widetilde{\alpha}_{\tau})},
\end{eqnarray}
where
\begin{eqnarray*}
&&\widehat q_1^{(1)}(\alpha,\tau)=\sum^{[N^{\alpha}]}_{n=1}n^3\widehat\sigma_n(\tau), \ \ \widehat q_2^{(1)}(\alpha,\tau)=\sum^{N}_{n=[N^{\alpha}]+1}n\widehat\sigma_n(\tau),\non
&&N^{(1)}(\alpha)=\sum^{[N^{\alpha}]}_{n=1}n^3+\sum^{N}_{n=[N^{\alpha}]+1}\frac{[N^{4\alpha}]}{n},
%&&Q_1^{(1)}(\alpha)=\sum^{[N^{\alpha}]}_{n=1}n^3\hat\sigma_n^2(\tau), \ Q_2^{(1)}(\alpha)=\sum^{N}_{n=[N^{\alpha}]+1}n^3\hat\sigma_n^2(\tau)
\end{eqnarray*}
and
$$\widehat{Q}_{NT}^{(1)}(\alpha,\tau)=\frac{(\widehat q_1^{(1)}(\alpha,\tau)+[N^{2\alpha}]\widehat q_2^{(1)}(\alpha,\tau))^2}{N^{(1)}(\alpha)}.$$
We give the full derivation of (\ref{yry01}) in Appendix A.

%%%%%%%%%%%%%%%%%%%%%%%%%%%%%%
%%%%%%%%%%%%%%%%%%%%%%%%%%%%%%
This joint method estimates $\alpha_0$ and $\kappa_{\tau}$ simultaneously. The above derivations show that it is easy to derive $\widetilde{\alpha}_{\tau}$ and then $\widetilde{\kappa}_{\tau}$. Of course, we can also use some other estimation methods to estimate $\kappa_{\tau}$ and then $\alpha_0$.
%%%%%%%%%%%%%%%%%%%%%%%%%%%%%%
%%%%%%%%%%%%%%%%%%%%%%%%%%%%%%
%%%%%%%%%%%%%%%%%%%%%%%%%%%%%%%%%%%%%%%%%%%%%%%
%%%%%%%%%%%%%%%%%%%%%%%%%%%%%%%%%%%%%%%%%%%%%%%
%%%%%%%%%%%%%%%%%%%%%%%%%%%%%%
%%%%%%%%%%%%%%%%%%%%%%%%%%%%%%
Notice that we use the weight function $w(n)=n^3$ in each summation part of the objective function $Q_{NT}^{(1)}(\alpha,\kappa,\tau)$ of (\ref{jiti1}). The involvement of a weight function is due to technical necessity in deriving an asymptotic distribution for $(\widetilde{\alpha}_{\tau},\widetilde{\kappa}_{\tau})$.
\subsection{Asymptotic Properties}

In this section, we will establish asymptotic distributions for the proposed joint estimator $(\widetilde{\alpha}_{\tau}, \widetilde\kappa_{\tau})$ and the marginal estimator $\widehat\alpha_{\tau}$, respectively. We assume that $\alpha_{\ell k}=\alpha_0$, $\forall \ell=0,1,\ldots,s$
and $k=1,2,\ldots,m$ for simplicity. The notation $a\asymp b$ denotes that $a=O(b)$ and $b=O(a)$.

For any $1\leq i,j\leq m$ and $0\leq h\leq T-1$, we define
\begin{eqnarray}\label{w15}
C_{ij}(h)=\frac{1}{T-h}\sum^{T-h}_{t=1}f_{i,t}f_{j,t+h}, \ \ c_{ij}(h) \equiv \sigma_{ij}(h)=E(f_{i,t}f_{j,t+h}).
\end{eqnarray}

%The following theorem establishes an asymptotic distribution for the joint estimator $(\widetilde{\alpha}, \widetilde{\kappa})$.

\medskip

The following theorem establishes an asymptotic distribution for the marginal estimator $\widehat\alpha$.
At first we define some notation. $\boldsymbol{\Sigma}_{\tau}=E(\bbF_t\bbF_{t+\tau}^{'})$ and $\boldsymbol{\mu}_v=\mu_v\bbe_{m(s+1)}$, in which $\bbe_{m(s+1)}$ is an $m(s+1)\times 1$ vector with each element being $1$, $\boldsymbol{\Sigma}_v$ is an $m(s+1)$-dimensional diagonal matrix with each of the diagonal elements being $\sigma_v^2$ and

\begin{eqnarray}\label{h1}
\boldsymbol{\Omega}=\lim_{N,T\rightarrow\infty}var(\sqrt{T-\tau}vec\big(\bbS_{\tau}-\boldsymbol{\Sigma}_{\tau})),
\end{eqnarray}
where
\begin{eqnarray*}
\bbS_{\tau}=\frac{1}{T-\tau}\sum^{T-\tau}_{t=1}(\bbF_t-\bar\bbF_T)(\bbF_{t+\tau}-\bar\bbF_{T+\tau})^{'}
\end{eqnarray*}
and `vec' means that for a matrix $\bbX=(\bbx_1,\cdots,\bbx_n): q\times n$, $vec(\bbX)$ is the $qn\times 1$ vector defined as
\begin{eqnarray}
vec(\bbX)=\left(
            \begin{array}{c}
              \bbx_1 \\
              \vdots \\
              \bbx_n \\
            \end{array}
          \right).
\end{eqnarray}
Define
\bea
&& \sigma_{\tau}^2=\lim_{N,T\rightarrow\infty}\frac{v_{NT}}{[N^{\alpha_0}]}4\boldsymbol{\mu}_v^{'}\boldsymbol{\Sigma}_{\tau}\boldsymbol{\Sigma}_v\boldsymbol{\Sigma}_{\tau}\boldsymbol{\mu}_v
\label{163031}\\
&& + \lim_{N,T\rightarrow\infty}\frac{v_{NT}}{T-\tau}(\boldsymbol{\mu}_v^{'}\otimes\boldsymbol{\mu}_v^{'})\boldsymbol{\Omega}(\boldsymbol{\mu}_v\otimes\boldsymbol{\mu}_v),
\label{y111}
\eea
where $v_{NT}=\min([N^{\alpha_0}],T-\tau)$.

We are now ready to establish the main results of this paper in the following theorems and propositions.

\begin{thm}\label{thm1}
In addition to Assumptions \ref{A1}-\ref{A2}, we assume that

(i) for some constant $\delta>0$,
\begin{eqnarray}\label{05041}
E|\zeta_{it}|^{2+2\delta}<+\infty,
\end{eqnarray}
where $\zeta_{it}$ is the $i$-th component of $\boldsymbol{\zeta}_t$ and $\{\boldsymbol{\zeta}_t: \ldots,-1,0,1,\ldots\}$ is the sequence appeared in Assumption \ref{A2}.

(ii) The lag $\tau$ satisfies
\begin{eqnarray}\label{u12}
\frac{\tau}{(T-\tau)^{\delta/(2\delta+2)}}\rightarrow 0, \ \ as \ \ T\rightarrow\infty,
\end{eqnarray}
where $\delta$ is defined in (\ref{05041}).

(iii) The covariance matrix $\Gamma$ of the random vector
\begin{eqnarray}\label{cij}
\Big(C_{ij}(h^{'}): i=1,\ldots,m; j=1,\ldots,m; h^{'}=\tau-s,\ldots,\tau+s\Big)
\end{eqnarray}
is positive definite.

(iv) As $(N,T)\rightarrow(\infty, \infty)$,
\begin{eqnarray}\label{g3}
 v_{NT}^{1/2}\kappa_{\tau} \rightarrow \infty.
\end{eqnarray}
 Then there are $E_1$ and $E_2$ such that
\be
v_{NT}^{1/2} \, \left(\frac{(N^{4(\widehat\alpha_{\tau} -\alpha_0)}-1)\kappa_{\tau}}
{2}-E_1\right)E_2\rightarrow\mathcal{N}(0,\sigma_{\tau}^2),
\label{jiti2}
\ee
where
\begin{eqnarray}\label{jiti2a}
E_1=O_P((T-\tau)^{-1/2}N^{1/2-\alpha_0})+O_P(\gamma_1(\tau)N^{1-2\alpha_0})+O_P( (T-\tau)^{-1/2}N^{1-2\alpha_0})
\end{eqnarray}

and

\begin{eqnarray}\label{jiti2b}
E_2&=&1+O_P\left(\frac{(T-\tau)^{-1/2}N^{1/2-\alpha_0}}{\kappa_{\tau}}\right)\non
&&+O_P\left(\frac{\gamma_1(\tau)N^{1-2\alpha_0}}{\kappa_{\tau}}\right)+O_P\left( \frac{(T-\tau)^{-1/2}N^{1-2\alpha_0}}{\kappa_{\tau}}\right),
\end{eqnarray}
in which  $\kappa_{\tau}$ is defined in (\ref{160530}) and  $\sigma_{\tau}^2$ is defined in (\ref{163031})

\end{thm}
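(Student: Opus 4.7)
The plan is to reduce the statement to a bivariate central limit theorem combining two independent sources of randomness: the $[N^{\alpha_0}]$ strong factor loadings and the $T-\tau$-length sample autocovariance of the factor process.

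First, I would substitute the closed form of $\widehat\alpha_\tau$ from (\ref{a6}) into the left-hand side. Direct computation gives $N^{4(\widehat\alpha_\tau-\alpha_0)}=N^{4-4\alpha_0}\widehat\sigma_N(\tau)^{2}/\kappa_\tau^{2}$, so, writing $A:=N^{2-2\alpha_0}\widehat\sigma_N(\tau)$, we obtain the algebraic identity
\begin{equation*}
\frac{(N^{4(\widehat\alpha_\tau-\alpha_0)}-1)\kappa_\tau}{2}\;=\;(A-\kappa_\tau)\cdot\frac{A+\kappa_\tau}{2\kappa_\tau}.
\end{equation*}
This already suggests the choice $E_2=(A+\kappa_\tau)/(2\kappa_\tau)$: once we show that $A-\kappa_\tau\to0$ in probability at the right rate, we will automatically have $E_2=1+O_P$ with the three rates claimed in (\ref{jiti2b}).

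Next I decompose $\widehat\sigma_N(\tau)$ using the static form (\ref{model1}). Since $\bar x_t-\bar x^{(1)}=\bar\beta'(\mathbf{F}_t-\bar\mathbf{F}_T)+(\bar u_t-\bar u^{(1)})$, we can write
\begin{equation*}
\widehat\sigma_N(\tau)=\bar\beta'\,\mathbf{S}_\tau\,\bar\beta+R_{\mathrm{cross}}+R_{uu},
\end{equation*}
with $R_{\mathrm{cross}}$ collecting the two factor--error cross products and $R_{uu}$ the sample autocovariance of $\bar u_t$. By (\ref{0503})--(\ref{u14}), $R_{uu}=\gamma_1(\tau)\gamma_2(0)/N+O_P((T-\tau)^{-1/2}/N)$; by Cauchy--Schwarz and the independence in Assumption 2, the bound $\|\bar\beta\|=O_P(N^{\alpha_0-1})$ gives $R_{\mathrm{cross}}=O_P(N^{\alpha_0-3/2}(T-\tau)^{-1/2})$. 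Multiplication by $N^{2-2\alpha_0}$ converts these three contributions into exactly the three $O_P$ rates appearing in (\ref{jiti2a}); I would set $E_1$ equal to this negligible part of $N^{2-2\alpha_0}(R_{\mathrm{cross}}+R_{uu})$.

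The heart of the proof is a joint CLT for $N^{2-2\alpha_0}\bar\beta'\mathbf{S}_\tau\bar\beta-\kappa_\tau$. From Assumption 2, $N\bar\beta=[N^{\alpha_0}]\boldsymbol\mu_v+\sum_{i=1}^{[N^{\alpha_0}]}(\mathbf{v}_i-\boldsymbol\mu_v)+O(1)$, and a multivariate i.i.d.\ CLT (the finite sixth moments of Assumption 2 are more than enough) yields
\begin{equation*}
\mathbf{Z}_N:=\frac{1}{\sqrt{[N^{\alpha_0}]}}\sum_{i=1}^{[N^{\alpha_0}]}(\mathbf{v}_i-\boldsymbol\mu_v)\;\Rightarrow\;\mathcal{N}(\mathbf{0},\boldsymbol\Sigma_v).
\end{equation*}
Expanding the bilinear form $\bar\beta'\mathbf{S}_\tau\bar\beta$ around the mean loading vector $[N^{\alpha_0}]\boldsymbol\mu_v/N$ and substituting $\mathbf{S}_\tau=\boldsymbol\Sigma_\tau+(\mathbf{S}_\tau-\boldsymbol\Sigma_\tau)$ isolates two dominant fluctuations,
\begin{equation*}
N^{2-2\alpha_0}\bar\beta'\mathbf{S}_\tau\bar\beta-\kappa_\tau
\;=\;\frac{2\boldsymbol\mu_v'\boldsymbol\Sigma_\tau\mathbf{Z}_N}{\sqrt{[N^{\alpha_0}]}}
\;+\;(\boldsymbol\mu_v'\otimes\boldsymbol\mu_v')\operatorname{vec}(\mathbf{S}_\tau-\boldsymbol\Sigma_\tau)\;+\;o_P(v_{NT}^{-1/2}).
\end{equation*}
Conditions (i)--(iii), via a Bernstein blocking argument for the stationary linear process of Assumption 3, supply $\sqrt{T-\tau}\operatorname{vec}(\mathbf{S}_\tau-\boldsymbol\Sigma_\tau)\Rightarrow\mathcal{N}(\mathbf{0},\boldsymbol\Omega)$; positive-definiteness of $\boldsymbol\Gamma$ guarantees a non-degenerate limit. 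Independence of loadings and factors (Assumption 2) makes $\mathbf{Z}_N$ and $\operatorname{vec}(\mathbf{S}_\tau-\boldsymbol\Sigma_\tau)$ asymptotically independent, so variances add, and scaling by $v_{NT}^{1/2}=\min([N^{\alpha_0}],T-\tau)^{1/2}$ recovers exactly the two summands of $\sigma_\tau^2$ in (\ref{y111}). Condition (iv) $v_{NT}^{1/2}\kappa_\tau\to\infty$ then ensures that the same three $O_P$ pieces, once divided by $\kappa_\tau$ inside $E_2$, are $o_P(1)$, and a Slutsky argument closes the proof.

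The step I expect to be the main obstacle is the joint weak convergence at the non-standard rate $v_{NT}^{1/2}$: when $[N^{\alpha_0}]/(T-\tau)\to 0$ or $\infty$ one of the two Gaussian pieces effectively degenerates, while in the balanced regime the two must be combined coherently with correct additive variance. Uniform control of the $E_1$ and $E_2$ remainders in the near-sparse regime $\alpha_0\to 0$ is also delicate, which is precisely what condition (iv) is designed to enforce.
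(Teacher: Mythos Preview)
Your proposal is correct and follows essentially the same strategy as the paper: decompose $\widehat\sigma_N(\tau)$ into the bilinear factor piece plus error--error and cross remainders, bound the latter to produce the $E_1$ rates, and reduce everything to a CLT for $\bar\bbv_N'\bbS_\tau\bar\bbv_N-\kappa_\tau$ (which the paper packages as its Lemma~1, proved via Hannan's multivariate sample-autocovariance CLT and a Romano--Wolf $m$-dependent argument for growing $\tau$, rather than Bernstein blocks). The only organizational difference is that the paper passes through logarithms first (equation~(A.6)) before exponentiating to the analogue of your difference-of-squares identity, whereas you factor $A^2-\kappa_\tau^2$ directly; note also that for the theorem's exact form $(Y-E_1)E_2$ one needs $E_2=2\kappa_\tau/(A+\kappa_\tau)$ rather than its reciprocal, though both are $1+O_P(\cdot)$ with the same rates.
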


Under some extra conditions, the conclusion of Theorem 1 can be simplified as given in Proposition 1 below.

\begin{prop}\label{thm1fujia}
Let the conditions of Theorem \ref{thm1} hold.

(i) If, for either $\frac{1}{2}<\alpha_0\leq 1$ or $0<\alpha_0\leq\frac{1}{2}$, as $(N,T)\rightarrow(\infty, \infty)$,
\begin{eqnarray}\label{b4wupian}
\max\Big(\gamma_1(\tau)N^{1-2\alpha_0}, (T-\tau)^{-1/2}N^{1-2\alpha_0}\Big)\rightarrow 0,
\end{eqnarray}
then $\widehat\alpha_{\tau}$ is an asymptotically unbiased estimator of $\alpha_0$.

%\begin{eqnarray}\label{b4}
%&&v_{NT}^{1/2}\max\Big(\frac{\gamma_1(\tau)N^{1-2\alpha_0}}{\kappa_0}, \frac{(T-\tau)^{-1/2}N^{1-2\alpha_0}}{\kappa_0}\Big)\rightarrow 0,\ \ \ as \ \ 0<\alpha_0\leq\frac{1}{2};\non
%&&v_{NT}^{1/2}\max\Big(\frac{\gamma_1(\tau)N^{1-2\alpha_0}}{\kappa_0},\frac{(T-\tau)^{-1/2}N^{1/2-\alpha_0}}{\kappa_0}\Big)\rightarrow 0, \ \ \ as \ \ \frac{1}{2}<\alpha_0\leq 1.
%\end{eqnarray}
(ii) If, moreover, as $(N,T)\rightarrow(\infty, \infty)$,
\begin{eqnarray}\label{b4}
&&v_{NT}^{1/2}\max\Big(\gamma_1(\tau)N^{1-2\alpha_0}, (T-\tau)^{-1/2}N^{1-2\alpha_0}\Big)\rightarrow 0,\ \ \ as \ \ 0<\alpha_0\leq\frac{1}{2};\non
&&v_{NT}^{1/2}\max\Big(\gamma_1(\tau)N^{1-2\alpha_0},(T-\tau)^{-1/2}N^{1/2-\alpha_0}\Big)\rightarrow 0, \ \ \ as \ \ \frac{1}{2}<\alpha_0\leq 1,
\end{eqnarray}
then
\be
v_{NT}^{1/2} \, \frac{(N^{4(\widehat\alpha_{\tau} -\alpha_0)}-1)\kappa_{\tau}}
{2}\rightarrow\mathcal{N}(0,\sigma_{\tau}^2).
\label{jiti2c}
\ee

%\begin{eqnarray}\label{g3}
%\frac{1}{N^{\alpha/2}\boldsymbol{\mu}_v^{'}\boldsymbol{\Sigma}_{\tau}\boldsymbol{\mu}_v}=o(1).
%\end{eqnarray}

\end{prop}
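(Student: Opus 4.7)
The plan is to derive Proposition \ref{thm1fujia} as a direct corollary of Theorem \ref{thm1} by showing that the extra conditions force the error terms $E_1$ and $E_2$ in (\ref{jiti2a})--(\ref{jiti2b}) to behave trivially. Concretely, for part (ii) I would show that $v_{NT}^{1/2} E_1 = o_P(1)$ and $E_2 \xrightarrow{P} 1$; then Slutsky's theorem applied to Theorem \ref{thm1} yields (\ref{jiti2c}). Part (i) follows as a weaker version of the same computation, using only condition (\ref{b4wupian}).

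For part (i), I would rewrite the conclusion of Theorem \ref{thm1} as
\[
\frac{(N^{4(\widehat\alpha_{\tau}-\alpha_0)}-1)\kappa_{\tau}}{2} = E_1 + O_P\!\left(\frac{1}{v_{NT}^{1/2}E_2}\right).
\]
The key step is to verify, term by term, that the three summands in $E_1/\kappa_\tau$ are $o_P(1)$. The middle and third summands, $\gamma_1(\tau)N^{1-2\alpha_0}/\kappa_\tau$ and $(T-\tau)^{-1/2}N^{1-2\alpha_0}/\kappa_\tau$, are controlled directly by (\ref{b4wupian}) combined with the boundedness of $1/\kappa_\tau$. The first summand $(T-\tau)^{-1/2}N^{1/2-\alpha_0}/\kappa_\tau$ splits into two regimes: for $\alpha_0 \le 1/2$ the inequality $1/2-\alpha_0 \le 1-2\alpha_0$ lets it be absorbed into the third summand, and for $\alpha_0 > 1/2$ it vanishes because $N^{1/2-\alpha_0}<1$ and $(T-\tau)^{-1/2}\to 0$. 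Combining with condition (iv) of Theorem \ref{thm1}, which forces $1/(v_{NT}^{1/2}\kappa_\tau)\to 0$, I conclude $N^{4(\widehat\alpha_{\tau}-\alpha_0)}\xrightarrow{P} 1$; taking logarithms then gives $\widehat\alpha_{\tau}-\alpha_0 = o_P(1/\ln N)$, which is stronger than asymptotic unbiasedness.

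For part (ii), I would verify $v_{NT}^{1/2}E_1 = o_P(1)$ by checking each of the three summands under (\ref{b4}). In the regime $0<\alpha_0\le 1/2$, the second and third summands are exactly the quantities kept small by (\ref{b4}), and the first summand $v_{NT}^{1/2}(T-\tau)^{-1/2}N^{1/2-\alpha_0}$ is dominated by the third because $1/2-\alpha_0 \le 1-2\alpha_0$. In the regime $1/2<\alpha_0\le 1$, the first and second summands are the two quantities explicitly in (\ref{b4}), and now $1-2\alpha_0 < 1/2-\alpha_0$, so the third summand is dominated by the first. Hence $v_{NT}^{1/2}E_1 = o_P(1)$ in both regimes. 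The statement $E_2 \xrightarrow{P} 1$ follows because $E_2 - 1$ is $E_1/\kappa_\tau$ up to a constant, and since $v_{NT}^{1/2}\kappa_\tau\to\infty$ by condition (iv) we have $E_1/\kappa_\tau = o_P(v_{NT}^{1/2}E_1)\cdot o(1) = o_P(1)$.

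With these two facts in hand, Slutsky's theorem gives the final conclusion: writing
\[
v_{NT}^{1/2}\,\frac{(N^{4(\widehat\alpha_{\tau}-\alpha_0)}-1)\kappa_\tau}{2} = v_{NT}^{1/2}\!\left(\frac{(N^{4(\widehat\alpha_{\tau}-\alpha_0)}-1)\kappa_\tau}{2}-E_1\right) + v_{NT}^{1/2}E_1,
\]
the first summand equals the Theorem \ref{thm1} quantity divided by $E_2$, which converges in distribution to $\mathcal{N}(0,\sigma_\tau^2)/1 = \mathcal{N}(0,\sigma_\tau^2)$, while the second summand is $o_P(1)$. I do not expect any serious obstacle here; the entire argument is a bookkeeping exercise on polynomial rates in $N$ and $T$. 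The only place one needs to be careful is the switch in which term dominates the first summand of $E_1$ at the threshold $\alpha_0=1/2$, which is precisely why (\ref{b4}) is stated as two separate inequalities.
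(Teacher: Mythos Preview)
Your proposal is correct and takes essentially the same approach as the paper, which does not give a separate proof of Proposition \ref{thm1fujia} but treats it as an immediate consequence of Theorem \ref{thm1} via inspection of the error terms $E_1$ and $E_2$ in (\ref{jiti2a})--(\ref{jiti2b}). Your bookkeeping on which exponent dominates in the two regimes $\alpha_0\le 1/2$ versus $\alpha_0>1/2$ is exactly the point, and the Slutsky argument for part (ii) is the intended one.

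One small remark on part (i): you invoke ``boundedness of $1/\kappa_\tau$'' to pass from $E_1\to 0$ to $E_1/\kappa_\tau\to 0$, but the hypotheses of Theorem \ref{thm1} allow $\kappa_\tau\to 0$ when $\tau\to\infty$. A cleaner way to close this is to write $E_1/\kappa_\tau=(v_{NT}^{1/2}E_1)/(v_{NT}^{1/2}\kappa_\tau)$ and note that already under (\ref{b4wupian}) one can check $v_{NT}^{1/2}E_1$ stays bounded in the relevant regimes (or simply observe that the paper's own condition (\ref{b4wupian}) is stated without a $\kappa_\tau$ in the denominator, so the same implicit assumption is present there). This is a cosmetic point and does not affect your overall argument.
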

From Proposition \ref{thm1fujia}, one can see that $\widehat\alpha$ is a consistent estimator of $\alpha_0$. Moreover, by a careful inspection on (\ref{jiti2a})--(\ref{jiti2b}) in Theorem \ref{thm1} one can see that Condition (\ref{b4}) can be replaced by some weak conditions to ensure the consistency of $\widehat\alpha$ under $(N, T)\rightarrow(\infty, \infty)$.

The following theorem establishes an asymptotic distribution for the joint estimator $(\widetilde{\alpha}, \widetilde{\kappa})$.

\begin{thm}\label{yyr001}
Under the conditions of Theorem \ref{thm1}, there are $E_3$ and $E_4$ such that
%\begin{eqnarray}\label{yry10}
%&&[N^{2(\widetilde{\alpha}-\alpha_0)}]-1=O_P\big(v_{NT}^{-1/2}\kappa_0^{-1}\big), \\
%&&\widetilde{\kappa} -\kappa_0=O_P\big(v_{NT}^{-1/2}\big),
%\end{eqnarray}
%and
\begin{eqnarray}\label{yr40}
\left(
\begin{array}{c}
\widetilde{\kappa}_{\tau}v_{NT}^{1/2}(N^{2(\widetilde{\alpha}_{\tau}-\alpha_0)}-1)-E_3\\
v_{NT}^{1/2}(\widetilde{\kappa}_{\tau}-\kappa_{\tau})-E4\\
\end{array}
\right)
\stackrel{d}{\longrightarrow} \mathcal{N}\left(\left(
\begin{array}{c}
0\\
0\\
\end{array}
\right), \left(
\begin{array}{cc}
4\sigma_{\tau}^2 & -2\sigma_{\tau}^2\\
-2\sigma_{\tau}^2 & \sigma_{\tau}^2
\end{array}
\right)\right),
\end{eqnarray}

\begin{eqnarray}\label{yr40zhang3}
E_3&=&v_{NT}^{1/2} \cdot O_P\left(\frac{(T-\tau)^{-1/2}N^{1/2-\alpha_0}}{\log N}\right)\non
&&+ v_{NT}^{1/2} \cdot \left(O_P\left(\frac{\gamma_1(\tau)N^{1-2\alpha_0}}{\log N}\right)+O_P\left( \frac{(T-\tau)^{-1/2}N^{1-2\alpha_0}}{\log N}\right)\right)
\end{eqnarray}
and
\begin{eqnarray}\label{yr40zhang4}
E_4&=&v_{NT}^{1/2} \cdot O_P\left(\frac{(T-\tau)^{-1/2}N^{1/2-\alpha_0}}{\log N}\right)\non
&&+ v_{NT}^{1/2} \cdot \left(O_P\left(\frac{\gamma_1(\tau)N^{1-2\alpha_0}}{\log N}\right)+O_P\left( \frac{(T-\tau)^{-1/2}N^{1-2\alpha_0}}{\log N}\right)\right),
\end{eqnarray}
where $\kappa_{\tau}$ is defined in (\ref{160530}) and  $\sigma_{\tau}^2$ is defined in (\ref{163031}).

\end{thm}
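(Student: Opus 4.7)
\medskip
\textbf{\textsc{Proof Proposal for Theorem \ref{yyr001}. }}
The plan is to treat $(\widetilde{\alpha}_{\tau},\widetilde{\kappa}_{\tau})$ as an M--estimator based on the quadratic criterion $Q_{NT}^{(1)}(\alpha,\kappa,\tau)$ in (\ref{jiti1}), reduce its randomness to the same underlying fluctuations that drive $\widehat{\alpha}_{\tau}$ in Theorem \ref{thm1}, and then read off the joint limit law. The singularity of the limiting covariance matrix (which has rank one, equal to $\sigma_{\tau}^{2}\,(2,-1)^{'}(2,-1)$) is the clue: both coordinates in (\ref{yr40}) must be deterministic multiples of one and the same linear functional of the $\widehat{\sigma}_{n}(\tau)-\operatorname{Cov}(\bar x_{nt},\bar x_{n,t+\tau})$, so the whole proof reduces to a CLT for that functional.

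First I would write out the first--order condition $\partial Q_{NT}^{(1)}/\partial\kappa=0$, which gives the closed form
\beginn
\widetilde{\kappa}_{\tau}(\alpha)=\frac{\widehat q_{1}^{(1)}(\alpha,\tau)+[N^{2\alpha}]\widehat q_{2}^{(1)}(\alpha,\tau)}{N^{(1)}(\alpha)},
\endn
and the profile objective $\widehat{Q}_{NT}^{(1)}(\alpha,\tau)$. Using (\ref{f1}) I would identify the population analogues, verifying that the limit of $\widehat{Q}_{NT}^{(1)}(\cdot,\tau)$ is uniquely maximized at $\alpha=\alpha_{0}$ with value $\kappa_{\tau}^{2}\,N^{(1)}(\alpha_{0})$, so that an Argmax/continuous mapping argument applied to a suitable compactification of $\alpha\in[0,1]$ delivers the consistency statements $\widetilde{\alpha}_{\tau}\xrightarrow{P}\alpha_{0}$ and $\widetilde{\kappa}_{\tau}\xrightarrow{P}\kappa_{\tau}$. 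The uniform control needed here uses exactly the moment bounds on $\widehat{\sigma}_{n}(\tau)$ already developed for Theorem \ref{thm1} together with Assumptions \ref{A1}--\ref{A2}.

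Next I would linearize around $\alpha_{0}$. Writing $\widetilde{\alpha}_{\tau}-\alpha_{0}$ via the second--order Taylor expansion of $\widehat{Q}_{NT}^{(1)}(\cdot,\tau)$, the leading stochastic piece is
\beginn
\widetilde{\alpha}_{\tau}-\alpha_{0}\;\approx\;\frac{1}{2\kappa_{\tau}\log N}\cdot\frac{\widehat q_{1}^{(1)}(\alpha_{0},\tau)+[N^{2\alpha_{0}}]\widehat q_{2}^{(1)}(\alpha_{0},\tau)-\kappa_{\tau}N^{(1)}(\alpha_{0})}{N^{(1)}(\alpha_{0})},
\endn
the $\log N$ factor coming from the derivative of $[N^{2\alpha}]$, while the explicit form of $\widetilde{\kappa}_{\tau}$ yields
\beginn
\widetilde{\kappa}_{\tau}-\kappa_{\tau}\;\approx\;\frac{\widehat q_{1}^{(1)}(\alpha_{0},\tau)+[N^{2\alpha_{0}}]\widehat q_{2}^{(1)}(\alpha_{0},\tau)-\kappa_{\tau}N^{(1)}(\alpha_{0})}{N^{(1)}(\alpha_{0})}+R_{NT},
\endn
with a negligible remainder $R_{NT}$ absorbed into $E_{4}$. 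Using $N^{2(\widetilde{\alpha}_{\tau}-\alpha_{0})}-1=2(\log N)(\widetilde{\alpha}_{\tau}-\alpha_{0})+o_{P}(\cdot)$, the first coordinate of (\ref{yr40}) becomes $2\kappa_{\tau}\log N\cdot(\widetilde{\alpha}_{\tau}-\alpha_{0})$, which is $+2$ times the common linear functional, while the second coordinate is $-1$ times the same functional. This explains the rank--one structure and shows that all the randomness is funneled through
\beginn
S_{NT}:=\frac{v_{NT}^{1/2}}{N^{(1)}(\alpha_{0})}\sum_{n=1}^{N}w_{n}(\alpha_{0})\,\big[\widehat{\sigma}_{n}(\tau)-\operatorname{Cov}(\bar x_{nt},\bar x_{n,t+\tau})\big],
\endn
with weights $w_{n}(\alpha_{0})=n^{3}\mathbf 1\{n\leq[N^{\alpha_{0}}]\}+n[N^{2\alpha_{0}}]\mathbf 1\{n>[N^{\alpha_{0}}]\}$.

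Finally I would invoke a CLT for $S_{NT}$. The decomposition $\widehat{\sigma}_{n}(\tau)-\operatorname{Cov}(\bar x_{nt},\bar x_{n,t+\tau})$ splits, as in the proof of Theorem \ref{thm1}, into (a) a factor--driven part $(\bar{\bm{\beta}}_{n}\otimes\bar{\bm{\beta}}_{n})^{'}\operatorname{vec}(\bbS_{\tau}-\bm{\Sigma}_{\tau})$, and (b) a factor--loading part coming from the cross--sectional average $\bar{\bm{\beta}}_{n}-\bm{\mu}_{v}$; summing against $w_{n}(\alpha_{0})/N^{(1)}(\alpha_{0})$ and using Assumption \ref{A3} reproduces exactly the two variance components in (\ref{163031})--(\ref{y111}), so $S_{NT}\xrightarrow{d}\mathcal{N}(0,\sigma_{\tau}^{2})$. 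Multiplying by $(2,-1)^{'}$ delivers (\ref{yr40}); the bias terms $E_{3},E_{4}$ are inherited from the three error sources in (\ref{jiti2a})--(\ref{jiti2b}) of Theorem \ref{thm1}, divided by $\log N$ due to the extra $(2\log N)^{-1}$ factor in the linearization of $\widetilde{\alpha}_{\tau}-\alpha_{0}$. The hardest step, in my view, is the passage from the pointwise approximation in Step 2 to the uniform argmax argument that ensures $\widetilde{\alpha}_{\tau}$ is not absorbed at the boundary $\alpha=1$ (where the weight $[N^{4\alpha}]$ makes $N^{(1)}(\alpha)$ blow up differently), and then carefully tracking the remainder $R_{NT}$ so that the deterministic $2:(-1)$ relation is preserved up to the scale of $v_{NT}^{1/2}$ — this is where the explicit $\log N$ factors in (\ref{yr40zhang3})--(\ref{yr40zhang4}) versus (\ref{jiti2a})--(\ref{jiti2b}) arise.
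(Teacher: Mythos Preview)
Your high--level plan --- profile, linearize, recognize the rank--one structure $(2,-1)^{'}(2,-1)\sigma_{\tau}^{2}$, and reduce everything to a single CLT --- is the right intuition.  The gap is in the linearization step, and it is not cosmetic.

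From your own displayed formulas you obtain
\[
\widetilde{\kappa}_{\tau}\bigl(N^{2(\widetilde{\alpha}_{\tau}-\alpha_{0})}-1\bigr)\approx 2\kappa_{\tau}\log N\cdot(\widetilde{\alpha}_{\tau}-\alpha_{0})\approx 2\kappa_{\tau}\log N\cdot\frac{L}{2\kappa_{\tau}\log N}=L,
\qquad
\widetilde{\kappa}_{\tau}-\kappa_{\tau}\approx L,
\]
so both coordinates equal $+L$, i.e.\ the multiplier is $(1,1)^{'}$, not $(2,-1)^{'}$.  That would give covariance $\sigma_{\tau}^{2}\bigl(\begin{smallmatrix}1&1\\1&1\end{smallmatrix}\bigr)$, not $\bigl(\begin{smallmatrix}4&-2\\-2&1\end{smallmatrix}\bigr)$.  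The formula $\widetilde{\alpha}_{\tau}-\alpha_{0}\approx L/(2\kappa_{\tau}\log N)$ is not what a correct expansion of the profile yields; the actual answer is $\widetilde{\alpha}_{\tau}-\alpha_{0}\approx -L/(\kappa_{\tau}\log N)$, i.e.\ opposite sign and twice the magnitude.

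The reason the one--dimensional profile expansion fails is that the localized objective is \emph{degenerate}: at leading order, $Q_{NT}^{(1)}(\alpha,\kappa)-Q_{NT}^{(1)}(\alpha_{0},\kappa_{\tau})$ depends on $(\alpha,\kappa)$ only through the single combination
\[
r=v_{NT}^{1/2}\,\frac{[N^{2\alpha}]\kappa-[N^{2\alpha_{0}}]\kappa_{\tau}}{[N^{2\alpha_{0}}]}
\;\approx\; s_{1}+s_{2},
\]
and takes the form $rZ+\tfrac12 r^{2}$ with $Z=v_{NT}^{1/2}(\bar\bbv_{N}^{'}\bbS_{\tau}\bar\bbv_{N}-\kappa_{\tau})$.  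Minimization therefore only pins down $s_{1}+s_{2}=-Z$; by itself it cannot identify $s_{1}$ separately, and this is exactly what your profile Taylor step is implicitly trying to do.  The paper obtains the second equation by going back to the \emph{closed form} of $\widetilde{\kappa}_{\tau}$ (not the first--order condition in $\alpha$) and showing directly that $s_{2}=v_{NT}^{1/2}(\widetilde{\kappa}_{\tau}-\kappa_{\tau})=Z+o_{P}(1)$; combining, $s_{1}=-2Z$, which gives the correct $(-2,1)^{'}$ and hence the stated covariance.

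A second, smaller point: the paper does not work with $\widehat{\sigma}_{n}(\tau)$ directly in this argument.  It first replaces $\widehat{\sigma}_{n}(\tau)$ by an ``idiosyncratic--free'' surrogate $\check{\sigma}_{n}$ (essentially $\bar\bbv_{n}^{'}\bbS_{\tau}\bar\bbv_{n}$ for $n\leq[N^{\alpha_{0}}]$ and $[N^{2\alpha_{0}}]n^{-2}\bar\bbv_{N}^{'}\bbS_{\tau}\bar\bbv_{N}$ for $n>[N^{\alpha_{0}}]$), proves the CLT for the resulting $(\check{\alpha}_{\tau},\check{\kappa}_{\tau})$, and \emph{then} bounds the discrepancy $\widehat{\sigma}_{n}-\check{\sigma}_{n}=C_{n}$.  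This is precisely how the explicit $E_{3},E_{4}$ with the extra $1/\log N$ appear: the $\log N$ comes from the normalization $N^{(1)}(\alpha)\asymp[N^{4\alpha}]\log N$, not from the $[N^{2\alpha}]$ derivative in a Taylor step.  In your outline the bias terms are asserted rather than derived, and the mechanism producing the $1/\log N$ is misattributed.
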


\begin{prop}\label{yyr001fujia}
Let the conditions of Theorem \ref{yyr001} hold. 

(i) If, for either $\frac{1}{2}<\alpha_0\leq 1$ or when $0<\alpha_0\leq\frac{1}{2}$, as $(N,T)\rightarrow(\infty, \infty)$,
\begin{eqnarray}\label{yyr001b4wupian1}
\max\Big(\frac{\gamma_1(\tau)N^{1-2\alpha_0}}{\log N}, \frac{(T-\tau)^{-1/2}N^{1-2\alpha_0}}{\log N}\Big)\rightarrow 0,
\end{eqnarray}
then $\widetilde{\kappa}_{\tau}$ is an asymptotically unbiased estimator of $\kappa_{\tau}$.

(ii) Let
as $(N,T)\rightarrow(\infty, \infty)$,
\begin{eqnarray}\label{yyr001b4wupian1abc}
&&\max\Big(\frac{\gamma_1(\tau)N^{1-2\alpha_0}}{\kappa_{\tau}\log N}, \frac{(T-\tau)^{-1/2}N^{1-2\alpha_0}}{\kappa_{\tau}\log N}\Big)\rightarrow 0,\ \ \ as \ \ 0<\alpha_0\leq\frac{1}{2};\non
&&\max\Big(\frac{\gamma_1(\tau)N^{1-2\alpha_0}}{\kappa_{\tau}\log N},\frac{(T-\tau)^{-1/2}N^{1/2-\alpha_0})}{\kappa_{\tau}\log N}\Big)\rightarrow 0, \ \ \ as \ \ \frac{1}{2}<\alpha_0\leq 1.
\end{eqnarray}
Then $\widetilde{\alpha}_{\tau}$ is an asymptotically unbiased estimator of $\alpha_0$.

%\begin{eqnarray}\label{b4}
%&&v_{NT}^{1/2}\max\Big(\frac{\gamma_1(\tau)N^{1-2\alpha_0}}{\kappa_0}, \frac{(T-\tau)^{-1/2}N^{1-2\alpha_0}}{\kappa_0}\Big)\rightarrow 0,\ \ \ as \ \ 0<\alpha_0\leq\frac{1}{2};\non
%&&v_{NT}^{1/2}\max\Big(\frac{\gamma_1(\tau)N^{1-2\alpha_0}}{\kappa_0},\frac{(T-\tau)^{-1/2}N^{1/2-\alpha_0}}{\kappa_0}\Big)\rightarrow 0, \ \ \ as \ \ \frac{1}{2}<\alpha_0\leq 1.
%\end{eqnarray}
(iii) If, moreover,  as $(N,T)\rightarrow(\infty, \infty)$,
\begin{eqnarray}\label{yyr001b4}
&&v_{NT}^{1/2}\max\Big(\frac{\gamma_1(\tau)N^{1-2\alpha_0}}{\log N}, \frac{(T-\tau)^{-1/2}N^{1-2\alpha_0}}{\log N}\Big)\rightarrow 0,\ \ \ as \ \ 0<\alpha_0\leq\frac{1}{2};\non
&&v_{NT}^{1/2}\max\Big(\frac{\gamma_1(\tau)N^{1-2\alpha_0}}{\log N},\frac{(T-\tau)^{-1/2}N^{1/2-\alpha_0})}{\log N}\Big)\rightarrow 0, \ \ \ as \ \ \frac{1}{2}<\alpha_0\leq 1,
\end{eqnarray}
then
\begin{eqnarray}\label{yr40zhang}
\left(
\begin{array}{c}
\widetilde{\kappa}_{\tau}v_{NT}^{1/2}(N^{2(\widetilde{\alpha}_{\tau}-\alpha_0)}-1)\\
v_{NT}^{1/2}(\widetilde{\kappa}_{\tau}-\kappa_{\tau})\\
\end{array}
\right)
\stackrel{d}{\longrightarrow} \mathcal{N}\left(\left(
\begin{array}{c}
0\\
0\\
\end{array}
\right), \left(
\begin{array}{cc}
4\sigma_{\tau}^2 & -2\sigma_{\tau}^2\\
-2\sigma_{\tau}^2 & \sigma_{\tau}^2
\end{array}
\right)\right),
\end{eqnarray}
where $\kappa_{\tau}$ and $\sigma_{\tau}^2$ are defined in (\ref{160530}) and (\ref{163031}), respectively.
%\begin{eqnarray}\label{g3}
%\frac{1}{N^{\alpha/2}\boldsymbol{\mu}_v^{'}\boldsymbol{\Sigma}_{\tau}\boldsymbol{\mu}_v}=o(1).
%\end{eqnarray}

\end{prop}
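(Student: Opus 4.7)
The plan is to derive each of the three statements from Theorem \ref{yyr001} by quantifying when the bias quantities $E_3$ and $E_4$ become negligible relative to the scalings that appear in the joint distributional statement (\ref{yr40}); Slutsky's theorem then delivers the claimed simplifications. Throughout, the key structural observation is that $E_3$ and $E_4$ share identically the same expansion (compare (\ref{yr40zhang3}) and (\ref{yr40zhang4})) as $v_{NT}^{1/2}$ times three $O_P$ ingredients, so it is enough to control those ingredients once.

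For part (i), I would write the second coordinate of (\ref{yr40}) as
\[
\widetilde{\kappa}_{\tau}-\kappa_{\tau} \;=\; \frac{E_4}{v_{NT}^{1/2}} + \frac{O_P(1)}{v_{NT}^{1/2}},
\]
and check that $E_4/v_{NT}^{1/2}\to 0$ in probability. Dividing (\ref{yr40zhang4}) by $v_{NT}^{1/2}$ leaves three $O_P$ pieces, and assumption (\ref{yyr001b4wupian1}) is exactly the statement that two of them vanish. For the residual $(T-\tau)^{-1/2}N^{1/2-\alpha_0}/\log N$, I would case-split on $\alpha_0$: when $\alpha_0>1/2$ the exponent $1/2-\alpha_0<0$ gives decay on its own, while when $0<\alpha_0\leq 1/2$ the inequality $1/2-\alpha_0\leq 1-2\alpha_0$ shows this term is dominated by the explicitly controlled $(T-\tau)^{-1/2}N^{1-2\alpha_0}/\log N$ piece. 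Slutsky then yields $\widetilde{\kappa}_{\tau}-\kappa_{\tau}=o_P(1)$.

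For part (ii), I note that $\widetilde{\alpha}_{\tau}-\alpha_0\stackrel{P}{\to}0$ is equivalent to $N^{2(\widetilde{\alpha}_{\tau}-\alpha_0)}-1\stackrel{P}{\to}0$ via the local expansion $N^{2(\widetilde{\alpha}_{\tau}-\alpha_0)}-1\sim 2(\widetilde{\alpha}_{\tau}-\alpha_0)\log N$. Rearranging the first coordinate of (\ref{yr40}) gives
\[
N^{2(\widetilde{\alpha}_{\tau}-\alpha_0)}-1 \;=\; \frac{E_3 + O_P(1)}{\widetilde{\kappa}_{\tau}\,v_{NT}^{1/2}}.
\]
Condition (\ref{g3}) of Theorem \ref{thm1} supplies $\kappa_{\tau}v_{NT}^{1/2}\to\infty$, and part (i) gives $\widetilde{\kappa}_{\tau}/\kappa_{\tau}\to 1$, so the $O_P(1)$ piece vanishes. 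For $E_3/(\widetilde{\kappa}_{\tau}v_{NT}^{1/2})$ I again split into the three pieces divided by $\kappa_{\tau}$; the two case-specific conditions listed in (\ref{yyr001b4wupian1abc}) are designed precisely so that in each range of $\alpha_0$ the surviving piece is dominated by the explicitly controlled one, via the same exponent comparison as in part (i) (note the condition in the $\alpha_0>1/2$ regime already lists $(T-\tau)^{-1/2}N^{1/2-\alpha_0}$, which now is the larger of the two pure-$N$ exponents).

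For part (iii), it suffices to show $E_3\stackrel{P}{\to}0$ and $E_4\stackrel{P}{\to}0$ and then invoke Slutsky's theorem on the bivariate convergence in (\ref{yr40}); the factor $\widetilde{\kappa}_{\tau}$ appearing in the first coordinate is absorbed by the continuous mapping theorem together with part (i). Since both $E_3$ and $E_4$ are $v_{NT}^{1/2}$ multiplied by the same three $O_P$ pieces, (\ref{yyr001b4}) is exactly the assertion that $v_{NT}^{1/2}$ times the two explicitly listed pieces goes to zero in each $\alpha_0$-regime, and the remaining piece is dominated as before. The main obstacle in the whole proposition, more bookkeeping than analysis, is keeping the exponent-domination arguments consistent across the regimes $0<\alpha_0\leq 1/2$ and $1/2<\alpha_0\leq 1$ and ensuring that in part (iii) the multiplication by $\widetilde{\kappa}_{\tau}$ on the first coordinate does not disturb the Slutsky step, which is why part (i) is proved first and used as a plug-in consistency statement in parts (ii) and (iii).
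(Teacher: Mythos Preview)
Your approach is the intended one: the paper does not give a separate proof of Proposition~\ref{yyr001fujia}, and it is meant to follow from Theorem~\ref{yyr001} precisely by arguing, as you do, that the stated rate conditions force the bias blocks $E_3,E_4$ (or their suitably normalized versions) to vanish, after which Slutsky finishes. Your exponent--domination bookkeeping across the two regimes of $\alpha_0$ is correct.

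Two small points of care. First, in part~(ii) you write ``part~(i) gives $\widetilde{\kappa}_{\tau}/\kappa_{\tau}\to 1$''. Strictly, part~(i) is stated under condition~(\ref{yyr001b4wupian1}) (no $\kappa_{\tau}$ in the denominator) and only yields $\widetilde{\kappa}_{\tau}-\kappa_{\tau}\to 0$; when $\kappa_{\tau}\to 0$ this is neither the same condition nor the same conclusion. What you actually need---and what condition~(\ref{yyr001b4wupian1abc}) delivers directly---is that the three $O_P$ pieces in $E_4/v_{NT}^{1/2}$ \emph{divided by} $\kappa_{\tau}$ tend to zero, which together with $\kappa_{\tau}v_{NT}^{1/2}\to\infty$ from (\ref{g3}) gives $(\widetilde{\kappa}_{\tau}-\kappa_{\tau})/\kappa_{\tau}=o_P(1)$ and hence $\widetilde{\kappa}_{\tau}/\kappa_{\tau}\to 1$. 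So the argument is right, but the justification should point to condition~(\ref{yyr001b4wupian1abc}) itself rather than to part~(i). Second, in part~(iii) your remark that ``the factor $\widetilde{\kappa}_{\tau}$ \ldots is absorbed by the continuous mapping theorem together with part~(i)'' is unnecessary: the first coordinate in (\ref{yr40}) already carries $\widetilde{\kappa}_{\tau}$, so once $E_3,E_4\stackrel{P}{\to}0$ the conclusion~(\ref{yr40zhang}) is immediate from Slutsky with nothing further to absorb.
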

\begin{rmk}\label{beizhu}
In fact, when $\frac{\gamma_1(\tau)}{\kappa_{\tau}}$ is bounded and $\frac{1}{2}<\alpha_0\leq 1$, $\widetilde{\alpha}_{\tau}$ is an asymptotically unbiased estimator of $\alpha_0$.

\end{rmk}

When the idiosyncratic components are independent, we can just use a finite lag $\tau$ (for example $\tau=1$). In this case, an asymptotic distribution for the estimator $\widehat{\alpha}$ is established in the following theorem.

\begin{thm}\label{coro1}
Let Assumptions \ref{A3} and \ref{A2} hold. In addition, suppose that $\tau$ is fixed and the following conditions (i)--(iii) hold:

(i) $\{\bbu_t: t=1,\ldots,T\}$ are independent with the mean of $\bbu_t$ being $\textbf{0}_{N\times 1}$ and its covariance matrix being $\boldsymbol{\Sigma}_{\bbu}$, where $\textbf{0}_{N\times 1}$ is an $N\times 1$ vector with zero components and the spectral norm $||\boldsymbol{\Sigma}_{\bbu}||$ is bounded.

(ii)
\begin{eqnarray}
&&\frac{v_{NT}^{1/2} (T-\tau)^{-1/2}N^{1-2\alpha_0}}{\log N} \rightarrow 0,\ \ \ as \ \ 0<\alpha_0<\frac{1}{2};
\label{new6}\\
&&\frac{v_{NT}^{1/2}(T-\tau)^{-1/2}N^{1/2-\alpha_0}}{\log N}\rightarrow 0, \ \ \ as \ \ \frac{1}{2}<\alpha_0\leq 1.
\nonumber
\end{eqnarray}

(iii)
$ v_{NT}^{1/2}\kappa_{\tau} \rightarrow \infty.$
\medskip

Then, as $(N,T)\rightarrow(\infty, \infty)$, we have \begin{eqnarray}\label{yr40zhangduli}
\left(
\begin{array}{c}
\widetilde{\kappa}_{\tau}v_{NT}^{1/2}(N^{2(\widetilde{\alpha}_{\tau}-\alpha_0)}-1)\\
v_{NT}^{1/2}(\widetilde{\kappa}_{\tau}-\kappa_{\tau})\\
\end{array}
\right)
\stackrel{d}{\longrightarrow} \mathcal{N}\left(\left(
\begin{array}{c}
0\\
0\\
\end{array}
\right), \left(
\begin{array}{cc}
4\sigma_{\tau}^2 & -2\sigma_{\tau}^2\\
-2\sigma_{\tau}^2 & \sigma_{\tau}^2
\end{array}
\right)\right),
\end{eqnarray}
 where $\kappa_{\tau}$ and $\sigma_{\tau}^2$ are defined in (\ref{160530}) and (\ref{163031}), respectively.
\end{thm}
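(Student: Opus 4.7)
\medskip

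\textbf{Proof proposal.} The plan is to reduce Theorem \ref{coro1} to Proposition \ref{yyr001fujia}(iii) by exploiting the temporal independence of $\{\bbu_t\}$ to eliminate the $\gamma_1(\tau)$--terms and by using the bounded spectral norm of $\boldsymbol{\Sigma}_{\bbu}$ to reproduce the cross-sectional moment bounds that Assumption \ref{A1} would otherwise supply.

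First, I would observe that under condition (i) the idiosyncratic vectors $\{\bbu_t:t=1,\ldots,T\}$ are mutually independent, so for any fixed lag $\tau\geq 1$,
\begin{eqnarray*}
E(u_{it}u_{j,t+\tau})=E(u_{it})E(u_{j,t+\tau})=0,
\end{eqnarray*}
i.e.\ the serial-dependence factor $\gamma_1(\tau)$ implicit in (\ref{0503}) vanishes. Consequently the $\gamma_1(\tau)N^{1-2\alpha_0}/\log N$ terms in (\ref{yyr001b4}) are identically zero, and the remaining side-conditions in Proposition \ref{yyr001fujia}(iii) collapse precisely onto condition (ii) of Theorem \ref{coro1}; condition (iii) of Theorem \ref{coro1} is identical to condition (iv) of Theorem \ref{thm1}.

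Next, I need to check that, even without the explicit linear-filter structure of Assumption \ref{A1}, the cross-sectional averaging of $\bbu_t$ still enjoys the bounds used in the proofs of Theorems \ref{thm1} and \ref{yyr001}. The key inequality is
\begin{eqnarray*}
Var(\bar u_{nt})=\frac{1}{n^2}\bbe_n^{'}\boldsymbol{\Sigma}_{\bbu,n}\bbe_n\leq\frac{\|\bbe_n\|^2\,\|\boldsymbol{\Sigma}_{\bbu,n}\|}{n^2}\leq\frac{C}{n},
\end{eqnarray*}
where $\bbe_n$ is the $n\times 1$ vector of ones and $\boldsymbol{\Sigma}_{\bbu,n}$ is the leading $n\times n$ principal submatrix of $\boldsymbol{\Sigma}_{\bbu}$; this is exactly the $O(1/n)$ rate that the weak cross-sectional dependence condition (\ref{u14}) delivers in Assumption \ref{A1}. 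Analogous bounds on fourth moments of $\bar u_{nt}$ follow from the temporal independence together with the spectral-norm hypothesis, so the stochastic-order estimates for $\widehat\sigma_n(\tau)-{\rm Cov}(\bar x_{nt},\bar x_{n,t+\tau})$ established during the proofs of Theorems \ref{thm1} and \ref{yyr001} remain valid.

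With these two ingredients, the argument then runs the proof of Theorem \ref{yyr001} verbatim: the quadratic objective (\ref{jiti1}) is expanded around the true $(\alpha_0,\kappa_{\tau})$, the remainder terms are driven to zero at the rate $v_{NT}^{1/2}$ under condition (ii), and the joint CLT for the leading stochastic term (coming from the sample autocovariance matrix $\bbS_{\tau}$ of the factor linear process of Assumption \ref{A2}, plus the now-vanishing idiosyncratic contribution) delivers the bivariate normal limit with the covariance structure $(4\sigma_{\tau}^2,-2\sigma_{\tau}^2;-2\sigma_{\tau}^2,\sigma_{\tau}^2)$. The main obstacle I anticipate is verifying that every moment estimate in the proofs of Theorems \ref{thm1} and \ref{yyr001} that was derived from the explicit linear representation of Assumption \ref{A1} can be re-derived using only bounded spectral norm together with independence in $t$; in particular, controlling cross-product terms such as $\frac{1}{T-\tau}\sum_t\bar u_{nt}\bar\bbF_{t+\tau}^{'}$ requires a martingale-difference argument (each summand is independent of $\bar\bbF_{t+\tau}$ through the factor--error independence in Assumption \ref{A2}) rather than a direct appeal to a linear-process CLT.
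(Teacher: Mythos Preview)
Your proposal is correct and matches the paper's approach: the paper explicitly states that the proof of Theorem \ref{coro1} is omitted because it is similar to that of Theorem \ref{yyr001}, and you have spelled out precisely how that similarity works---namely that temporal independence of $\{\bbu_t\}$ forces $\gamma_1(\tau)=0$, collapsing condition (\ref{yyr001b4}) onto condition (ii), while the bounded spectral norm of $\boldsymbol{\Sigma}_{\bbu}$ replaces the role of (\ref{u14}) in controlling $Var(\bar u_{nt})$. Your identification of the one genuine technical check (re-deriving the moment bounds on the idiosyncratic cross-products without the linear-filter structure of Assumption \ref{A1}) is exactly the point the paper leaves implicit.
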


Theorems \ref{yyr001}--3 and Proposition \ref{yyr001fujia} establish some asymptotic properties for the joint estimator $(\widetilde{\alpha}, \widetilde{\kappa})$. %This result is consistent with that for the marginal estimator $\widehat\alpha$ derived in Theorem \ref{thm1}.
Before we will give the proofs of Theorems 1--3 in Appendices B and C below, we have some brief discussion about Condition (\ref{new6}), which is actually equivalent to the following three cases:

(a) $0<\alpha_0\leq\frac{1}{2}, \ [N^{\alpha_0}]<T-\tau, \ \frac{N^{1-3\alpha_0/2}}{(T-\tau)^{1/2}\log N}=o(1)$;

(b) $ \frac{1}{2}<\alpha_0\leq 1, \ [N^{\alpha_0}]<T-\tau; \ \frac{N^{1/2-\alpha_0/2}}{(T-\tau)^{1/2}\log N}=o(1)$;

(c) $\frac{1}{2}<\alpha_0\leq 1, \ [N^{\alpha_0}]\geq T-\tau,  \ \frac{N^{1/2-\alpha_0}}{\log N}=o(1)$.
\medskip

Under these three cases, we can provide some choices for $(N, T)$ as follows:
\medskip

(d) $0<\alpha_0<\frac{1}{2}, \ [N^{\alpha_0}]<T-\tau; \  T=\tau+[N^{2-3\alpha_0}]$;

(f) $\frac{1}{2}<\alpha_0\leq 1, \ [N^{\alpha_0}]<T-\tau, \ T=\tau+[N^{\alpha_0}]$;

(g) $\frac{1}{2}<\alpha_0\leq 1, \ [N^{\alpha_0}]\geq T-\tau, \ T=\tau+[N^{\alpha_0}/\log(N)]$.
\medskip

When $\tau\rightarrow\infty$, the term $\kappa_0$ will tend to $0$, because of $\boldsymbol{\Sigma}_{\tau}\rightarrow\textbf{0}$. So, as $\tau$ is very large, the value of $\ln (\kappa_0)$ may be negative in practice. Hence Theorem \ref{thm1} provides an alternative form for the asymptotic distribution of $N^{\widehat\alpha-\alpha_0}$ instead of $\widehat\alpha-\alpha_0$, and the case of $\tau$ being fixed is discussed in Theorem \ref{coro1}.

\subsection{Estimation for $\sigma_\tau^2$}
In this section, we propose an estimator for the parameter $\sigma_\tau^2$ in the asymptotic variance of established theorems above.

%%%%%%%%%%%%%%%%%%%%%%%%%%%%%%%%%%%%%%%
Let $n=[N^{\widetilde{\alpha}_{\tau}}]$ and
{\small
\begin{eqnarray}\label{a7gujisigma03}
\widehat\sigma_{i,T}(\tau)=\frac{1}{(T-\tau)}\sum_{t=1}^{T-\tau}\left( x_{it}-\frac{1}{(T-\tau)}\sum_{t=1}^{T-\tau}x_{it}\right) \left(\frac{1}{n}\sum_{i=1}^n x_{i,t+\tau}-\frac{1}{n(T-\tau)}\sum_{t=1}^{T-\tau} \sum_{i=1}^n x_{i,t+\tau}\right).
\end{eqnarray}
}The estimator for the first part of $\sigma_{\tau}^2$ in (\ref{163031}) is
\begin{eqnarray}
\widehat{\sigma}_{\tau,(1)}^2=\frac{4v_{nT}}{n}\widehat{\sigma}_{\tau,n}^2,
\end{eqnarray}
where $\widehat{\sigma}_{\tau,n}^2=\frac{1}{n-1}\sum^{n}_{i=1}\left(\widehat{\sigma}_{i,T}(\tau)-\widehat{\sigma}_{T}(\tau)\right)^2$ with $\widehat{\sigma}_T(\tau)=\frac{1}{n}\sum^{n}_{i=1}\widehat{\sigma}_{i,T}(\tau)$.

For the second term (\ref{y111}), the proposed estimator is
\begin{eqnarray}
\widehat{\sigma}_{\tau,(2)}^2=\frac{v_{nT}}{T-\tau}\widehat{\sigma}_{0,T}^2,
\end{eqnarray}
where
{\small
\begin{eqnarray*}
\widehat{\sigma}_{0,T}^2=\frac{1}{T-\tau-1}\sum^{T-\tau}_{t=1}\left(\widehat{\sigma}_{n,t}(\tau)-\widehat{\sigma}_n(\tau)\right)^2+\sum^{\ell}_{j=1}\frac{2}{T-\tau-j}\sum^{T-\tau-j}_{t=1}\left(\widehat{\sigma}_{n,t}(\tau)-\widehat{\sigma}_{n}(\tau)\right)\left(\widehat{\sigma}_{n,t+j}(\tau)-\widehat{\sigma}_n(\tau)\right),
\end{eqnarray*}
}with
\begin{eqnarray*}
\widehat{\sigma}_{n,t}(\tau)=\left(\frac{1}{n}\sum^{n}_{i=1}x_{it}-\frac{1}{n(T-\tau)}\sum^{T-\tau}_{t=1}\sum^{n}_{i=1}x_{it}\right)\left(\frac{1}{n}\sum^{n}_{i=1}x_{i,t+\tau}-\frac{1}{n(T-\tau)}\sum^{T-\tau}_{t=1}\sum^{n}_{i=1}x_{i,t+\tau}\right)
\end{eqnarray*}
and $\widehat{\sigma}_n(\tau)=\frac{1}{T-\tau}\sum^{T-\tau}_{t=1}\widehat{\sigma}_{n,t}(\tau)$.

Then the estimator for $\sigma^2_{\tau}$ is
\begin{eqnarray}
\widehat{\sigma}^2_{\tau}=\widehat{\sigma}^2_{\tau,(1)}+\widehat{\sigma}^2_{\tau,(2)}.
\end{eqnarray}
\begin{prop}\label{yyr001fujiaguji}
Under the conditions of Theorem \ref{thm1} and $N^{\alpha-1/2}(T-\tau)^{1/2}\rightarrow \infty$ when $\widetilde{\alpha}_{\tau}$ is a consistent estimator of $\alpha_0$,
\begin{eqnarray}\label{a7gujisigma05}
\widehat{\sigma}_{\tau,(1)}^2=\frac{4v_{nT}}{n}\boldsymbol{\mu}_v^{'}\boldsymbol{\Sigma}_{\tau}^{'}\boldsymbol{\Sigma}_v\boldsymbol{\Sigma}_{\tau}\boldsymbol{\mu}_v(1+o_p(1))
\end{eqnarray}
and
\begin{eqnarray}\label{a7gujisigma02a2s}
\widehat{\sigma}^2_{\tau,(2)}=\frac{v_{nT}}{T-\tau}(\boldsymbol{\mu}_v^{'}\otimes\boldsymbol{\mu}_v^{'})\boldsymbol{\Omega}(\boldsymbol{\mu}_v\otimes\boldsymbol{\mu}_v)(1+o_p(1))+o_p(1).
\end{eqnarray}
Then  $\widehat{\sigma}^2_{\tau}$ is a consistent estimator of $\sigma^2_{\tau}$.
\end{prop}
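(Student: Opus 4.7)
The plan is to decompose each $\widehat{\sigma}_{i,T}(\tau)$ and $\widehat{\sigma}_{n,t}(\tau)$ into a dominant factor–loading contribution and asymptotically negligible idiosyncratic remainders, and then treat $\widehat{\sigma}_{\tau,(1)}^2$ as a cross–sectional sample variance (handled by a WLLN across $i$) and $\widehat{\sigma}_{\tau,(2)}^2$ as a Newey–West–style HAC estimator (handled by standard long–run variance consistency results). A preliminary step common to both halves is to replace $n=[N^{\widetilde{\alpha}_{\tau}}]$ by $[N^{\alpha_0}]$: under the consistency of $\widetilde{\alpha}_{\tau}$ established in Proposition \ref{yyr001fujia}, and the extra condition $N^{\alpha-1/2}(T-\tau)^{1/2}\to\infty$, the perturbation only introduces a factor $1+o_p(1)$, so one may freely work on the event $\{i\leq n\Rightarrow i\leq [N^{\alpha_0}]\}$ in which every index $i$ belongs to the strong–loading group.

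For (\ref{a7gujisigma05}), I would substitute $x_{it}=\mu_i+\boldsymbol{\beta}_i^{'}\bbF_t+u_{it}$ into $\widehat{\sigma}_{i,T}(\tau)$. After the two time–demeanings, the $\mu_i$ drop out and the leading term reduces to $\boldsymbol{\beta}_i^{'}\bbS_{\tau}\bar{\boldsymbol{\beta}}_n$, where $\bar{\boldsymbol{\beta}}_n=n^{-1}\sum_{j\leq n}\boldsymbol{\beta}_j$. By the WLLN applied to the i.i.d. coordinates $v_{j\ell k}$ of Assumption \ref{A3}, $\bar{\boldsymbol{\beta}}_n\stackrel{p}{\to}\boldsymbol{\mu}_v$, and $\bbS_{\tau}\stackrel{p}{\to}\boldsymbol{\Sigma}_{\tau}$ by weak stationarity of the factors in Assumption \ref{A2}. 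Hence $\widehat{\sigma}_{i,T}(\tau)=\boldsymbol{\beta}_i^{'}\boldsymbol{\Sigma}_{\tau}\boldsymbol{\mu}_v+o_p(1)$ uniformly. The sample variance across $i\leq n$ of these terms converges, by another WLLN on the i.i.d. variables $\boldsymbol{\beta}_i^{'}\boldsymbol{\Sigma}_{\tau}\boldsymbol{\mu}_v$ (which have finite second moment by the finite sixth moment of $v_{i\ell k}$), to $\mathrm{Var}(\boldsymbol{\beta}_i^{'}\boldsymbol{\Sigma}_{\tau}\boldsymbol{\mu}_v)=\boldsymbol{\mu}_v^{'}\boldsymbol{\Sigma}_{\tau}^{'}\boldsymbol{\Sigma}_v\boldsymbol{\Sigma}_{\tau}\boldsymbol{\mu}_v$ because the coordinates of $\boldsymbol{\beta}_i$ are independent with common variance $\sigma_v^2$. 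Multiplying by $4v_{nT}/n$ yields (\ref{a7gujisigma05}). The $u$–contributions, coming from cross terms of the form $n^{-1}\sum_{j\leq n}E(u_{it}u_{j,t+\tau})$ and their sample analogues, are controlled by the weak cross–sectional dependence (\ref{u14}) in Assumption \ref{A1} and shrink at rate $O_p(1/n+(n(T-\tau))^{-1/2})$, which is smaller than the leading $O_p(1)$ variance after the scaling $v_{nT}/n\le1$.

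For (\ref{a7gujisigma02a2s}), the key observation is that $\widehat{\sigma}_{n,t}(\tau)$ is the demeaned product of $\bar{x}_{nt}$ and $\bar{x}_{n,t+\tau}$. Expanding as above and using $\bar{\boldsymbol{\beta}}_n\stackrel{p}{\to}\boldsymbol{\mu}_v$, the centered sequence $\widehat{\sigma}_{n,t}(\tau)-\widehat{\sigma}_n(\tau)$ equals, up to $o_p(1)$, the scalar stationary process
\begin{equation*}
Y_t:=\boldsymbol{\mu}_v^{'}\bigl(\bbF_t\bbF_{t+\tau}^{'}-\boldsymbol{\Sigma}_{\tau}\bigr)\boldsymbol{\mu}_v=(\boldsymbol{\mu}_v^{'}\otimes\boldsymbol{\mu}_v^{'})\,\mathrm{vec}\bigl(\bbF_t\bbF_{t+\tau}^{'}-\boldsymbol{\Sigma}_{\tau}\bigr).
\end{equation*}
The statistic $\widehat{\sigma}_{0,T}^2$ is then exactly a Bartlett/Newey–West HAC estimator for the long–run variance of $\{Y_t\}$. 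Because $\{\bbf_t\}$ is a linear stationary process with absolutely summable coefficients (Assumption \ref{A2}), $\{Y_t\}$ has absolutely summable autocovariances, and with any bandwidth $\ell=\ell(T)\to\infty$, $\ell/T\to0$ one has $\widehat{\sigma}_{0,T}^2\stackrel{p}{\to}\sum_{h\in\mathbb{Z}}\mathrm{Cov}(Y_t,Y_{t+h})=(\boldsymbol{\mu}_v^{'}\otimes\boldsymbol{\mu}_v^{'})\boldsymbol{\Omega}(\boldsymbol{\mu}_v\otimes\boldsymbol{\mu}_v)$ by the standard consistency theorem for HAC estimators. Multiplying by $v_{nT}/(T-\tau)$ yields (\ref{a7gujisigma02a2s}); the additive $o_p(1)$ absorbs the replacement error from $\bar{\boldsymbol{\beta}}_n\ne\boldsymbol{\mu}_v$ and the idiosyncratic remainders, which after cross–sectional averaging over $i\leq n$ are of order $O_p(1/n+(n(T-\tau))^{-1/2})$.

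Combining the two pieces gives $\widehat{\sigma}_{\tau}^2=\sigma_{\tau}^2+o_p(1)$ once one invokes the defining limit (\ref{163031})–(\ref{y111}). The main obstacle I foresee is the control of the idiosyncratic–error contributions in $\widehat{\sigma}_{n,t}(\tau)$: these are serially correlated through $\gamma_1$ and must be shown to contribute only $o_p((T-\tau)/v_{nT})$ to the HAC sum uniformly across lags $j\leq\ell$, so that they do not distort the long–run variance. This is where the condition $N^{\alpha-1/2}(T-\tau)^{1/2}\to\infty$ enters: it guarantees that the signal term $\boldsymbol{\mu}_v^{'}\bbS_{\tau}\boldsymbol{\mu}_v$ dominates the idiosyncratic noise $O_p((n(T-\tau))^{-1/2})$ that enters $\widehat{\sigma}_{n,t}(\tau)$, so that the HAC estimator picks up the variance of $Y_t$ rather than the noise. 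A secondary but routine difficulty is ensuring that the random bandwidth effect produced by $n=[N^{\widetilde{\alpha}_{\tau}}]$ does not propagate into the HAC smoothing; this is handled by conditioning on $\widetilde{\alpha}_{\tau}$, using the consistency rate from Theorem \ref{yyr001}, and Slutsky's lemma.
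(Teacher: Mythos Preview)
Your approach is essentially the same as the paper's: replace $n=[N^{\widetilde{\alpha}_{\tau}}]$ by $[N^{\alpha_0}]$ via consistency, decompose $\widehat\sigma_{i,T}(\tau)=\boldsymbol{\beta}_i^{'}\bbS_{\tau}\bar{\boldsymbol{\beta}}_n+\text{remainder}$, use a cross-sectional WLLN on $(\boldsymbol{\beta}_i-\bar{\boldsymbol{\beta}}_n)^{'}\bbS_{\tau}\bar{\boldsymbol{\beta}}_n$ for (\ref{a7gujisigma05}), and treat $\widehat{\sigma}_{0,T}^2$ as a HAC estimator of the long-run variance of $Y_t=(\boldsymbol{\mu}_v^{'}\otimes\boldsymbol{\mu}_v^{'})\,\mathrm{vec}(\bbF_t\bbF_{t+\tau}^{'}-\boldsymbol{\Sigma}_{\tau})$ for (\ref{a7gujisigma02a2s}). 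One small sharpening: in the paper the idiosyncratic contribution to the HAC sum is shown to be $O_P\bigl(\ell/(N^{\alpha-1/2}(T-\tau)^{1/2})\bigr)$, so the extra condition $N^{\alpha-1/2}(T-\tau)^{1/2}\to\infty$ enters precisely as a \emph{bandwidth constraint} $\ell=o\bigl(N^{\alpha-1/2}(T-\tau)^{1/2}\bigr)$, in addition to your $\ell\to\infty$, $\ell/T\to 0$; you identify the right role for the condition but should tie it explicitly to $\ell$.
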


The proof of Proposition 3 is provided in Appendix C. Next, we evaluate the finite--sample performance of the proposed estimation methods and the resulting theory in Sections 4 and 5 below.

%%%%%%%%%%%%%%%%%%%%%%%%%%%%%%%%%%%%%%%%%%%%%%
%%%%%%%%%%%%%%%%%%%%%%%%%%%%%%%%%%%%%%%%%%%%%%

\section{Simulation}
In this section, we use three data generating processes (DGPs) to illustrate the finite sample performance of the joint and marginal estimators in different scenarios.

Example 1 studies the case of i.i.d error component and AR(1) modelled common factors. Example 2 extends the i.i.d error component in Example 1 to an AR(1) model. In Example 3, we investigate an MA(q) model for the common factors and an AR(1) type error component.

Before our analysis of each example, we provide a method of choosing an optimal value of $\tau$ in the following way.
\begin{eqnarray}
\widetilde{\tau}=\max_{\tau}\frac{\widetilde{\kappa}_{\tau}}{Q_{NT}^{(1)}\left(\widetilde{\alpha}_{\tau}, \widetilde{\kappa}_{\tau}, \tau\right)}
\end{eqnarray}
The idea of this proposed criterion is that we choose a value of $\tau$ to make larger $\widetilde{\kappa}_{\tau}$ and smaller $Q_{NT}^{(1)}\left(\widetilde{\alpha}_{\tau}, \widetilde{\kappa}_{\tau}, \tau\right)$. As $\widetilde{\kappa}_{\tau}$ contains temporal dependence in the common factors, it is reasonable to consider its large value to take into account the information included in it. For $Q_{NT}^{(1)}\left(\widetilde{\alpha}_{\tau}, \widetilde{\kappa}_{\tau}\right)$, it is the objective function for the joint estimator and hence we expect its small value corresponding to an optimal $\tau$.

\subsection{Example 1: no serial dependence in errors}
First,  we consider the following two-factor model
\begin{eqnarray}\label{yadd1}
x_{it}=\mu+\beta_{i1}f_{1t}+\beta_{i2}f_{2t}+u_{it}, \ \ i=1,2,\ldots,N; t=1,2,\ldots,T.
\end{eqnarray}
The factors are generated by
\begin{eqnarray}\label{yadd2}
f_{jt}=\rho_jf_{j,t-1}+\zeta_{jt}, \ \ j=1,2; \ t=-49,-48,\ldots,0,1,\ldots,T,
\end{eqnarray}
with $f_{j,-50}=0$ for $j=1,2$ and $\zeta_{jt}\stackrel{i.i.d}{\sim}\mathcal{N}(0,1)$. The idiosyncratic components $u_{it}$ are i.i.d from $\mathcal{N}(0, 1)$ and independent of $\{\zeta_{jt}: t=1, 2, \ldots, T; j=1, 2\}$.

The factor loadings are generated as
\begin{eqnarray}\label{yadd3}
&&\beta_{ij}=v_{ij}, \ \ for\ i=1,2,\ldots,M; j=1, 2;\non
&&\beta_{ij}=\rho^{i-M}, \ \ for \ i=M+1,M+2,\ldots,N; j=1, 2,
\end{eqnarray}
where $v_{ir}\stackrel{i.i.d}{\sim}U(0.5, 1.5)$, $M=[N^{\alpha_0}]$ and $\rho=0.8$. Moreover, we set $\mu=1$ and $\rho_j=0.9$ for $j=1,2$.

Under this data generating process, the numerical values of $\widehat{\alpha}$ and $\widetilde{\alpha}$ are reported in Table \ref{tb1}. The confidence interval for $\alpha_0$ is also calculated, i.e.
\begin{eqnarray}\label{yadd6}
\left[\widetilde{\alpha}-\frac{\log\left(1+\frac{z_{0.025}\cdot 4\sigma_{\tau}^2}{\widehat{\kappa}_{\tau}v_{NT}^{1/2}}\right)}{2\log(N)}, \widetilde{\alpha}+\frac{\log\left(1+\frac{z_{0.025}\cdot 4\sigma_{\tau}^2}{\widehat{\kappa}_{\tau}v_{NT}^{1/2}}\right)}{2\log(N)}\right]
\end{eqnarray}
which can be derived for the asymptotic distribution of $\widetilde{\alpha}$ in Theorem \ref{coro1}.
\begin{center}
\fbox{Table \ref{tb1} near here}
\end{center}
In this table, the number of cross-sections was $N=200$ and the time-length was $T=200$. As $\tau=0$, the estimator $\widehat{\alpha}$ is equivalent to the estimator provided in \cite{BKP2016}. Moreover, we calculate the marginal and joint estimates when $\tau=1, 2, 3, 4$. From Table \ref{tb1}, it can be seen that the estimator in \cite{BKP2016} behaves well in the case of $\alpha_0=0.8$ while becomes inconsistent as $\alpha_0=0.5$ and $\alpha_0=0.2$. When $\tau>0$, our estimator performs well for all cases including $\alpha_0=0.8, 0.5, 0.2$. However, as $\tau$ increases, the confidence interval will become larger. This phenomenon is consistent with our theoretical result since the confidence interval in (\ref{yadd6}) depends on $\widetilde{\kappa}_{tau}$ and this parameter decreases for the AR(1) model as $\tau$ increases. Furthermore, compared with the marginal estimation, the joint estimation is a bit worse than marginal estimator as expected due to more information is known in marginal estimation.

From this example, we can see that the estimator with $\tau>0$ is consistent when $\alpha_0<0.5$ at the cost of larger variance. Moreover, the estimator $\widetilde{\alpha}_{\widetilde{\tau}}$, with the choice of $\widetilde{\tau}$, is better than others, although there is a bit deviation. Intuitively, since $\kappa_{\tau}$ will decrease as $\tau$ increases under this example, and the error component has no temporal dependence, the optimal $\tau$ should be $1$ intuitively.

\subsection{Example 2: AR(1) for factors and errors}
In this part, we also consider the factor model (\ref{yadd1}). In this model, the common factors and factor loadings follow (\ref{yadd2}) and (\ref{yadd3}) respectively. The idiosyncratic error component $u_{it}$ follows an AR(1) model as follows:
\begin{eqnarray}
u_{it}=\eta_i\cdot g_t, \ \ g_t=h\cdot g_{t-1}+\epsilon_t, \ \ i=1, 2, \ldots, N; \ t=-49, -48, \ldots, 0, 1, \ldots, T,
\end{eqnarray}
where $g_{-50}=0$ and $\eta_i\sim\mathcal{N}(0, 1)$. Moreover, $u_{it}$ are independent of $\zeta_{jt}$ in (\ref{yadd2}). Here $h=0.2$ and $h=\frac{1}{\sqrt{N}}$ which are smaller than the strength of time serial dependence in the common factors with $\rho=0.9$.

The results when $h=0.2$ or $h=\frac{1}{\sqrt{N}}$ are listed in Tables \ref{tb2} and \ref{tb3}, respectively.
\begin{center}
\fbox{Table \ref{tb2} and Table \ref{tb3} near here}
\end{center}
These two tables are derived as $N=400, T=200$. In Table \ref{tb2}, the values of $\tau$ are comparably large in order to ensure that the common factors and the idiosyncratic error components have different strengths of time-serial dependence. In fact, the autocorrelation of the common factors and the error component with time--lag $\tau$ are of the orders $\rho^{\tau}_j$ and $h^{\tau}$, respectively. When $h$ is constant, the error component has a weak strength order only as $\tau$ tends to infinity. When $h=\frac{1}{\sqrt{N}}$, $h^{\tau}$ tends to zero for any value of $\tau$. This is why $\tau$ in Table \ref{tb3} takes relatively small values.

This example is more complicated than Example 1 due to time--serial dependence in the error component. Similar to Example 1, the marginal or the joint estimator with $\tau=0$ performs inconsistently when $\alpha_0=0.5$ and $\alpha_0=0.2$, while the marginal or the joint estimator with $\tau>0$ is consistent for all cases. Compared with Example 1, all the results have relatively larger variances due to the complex structures.

Furthermore, the choice for $\widetilde{\tau}$ in Table \ref{tb2} is around 5 while that in Table \ref{tb3} is close to $1$. These choices are reasonable due to strong time-serial dependence in the error component for the date generating process in Table \ref{tb2}.

\subsection{Example 3: MA(2) for factors and errors}
Now we consider the factor model (\ref{yadd1}). In this model, factor loadings follow (\ref{yadd3}), respectively.
The common factors $f_{jt}$ and the idiosyncratic error component $u_{it}$ follow MA(2) models as follows:
\begin{eqnarray}
f_{jt}=Z_{jt}+\theta_1Z_{j,t-1}+\theta_2Z_{j,t-2}, \ \ j=1, 2; \ t=1, 2, \ldots, T,
\end{eqnarray}
and
\begin{eqnarray}
u_{it}=\eta_i\cdot g_t, \ \ g_t=K_t+h_1\cdot K_{t-1}+h_2\cdot K_{t-2}, \ \ i=1, 2, \ldots, N; \ t=1, 2,  \ldots, T,
\end{eqnarray}
where $Z_{j,s}\sim\mathcal{N}(0, 1)$, $K_s\sim\mathcal{N}(0, 1)$ with $s=-1, 0, 1, 2, \ldots, T$ and $\eta_i\sim\mathcal{N}(0, 1)$.
Moreover, $u_{it}$ are independent of $\zeta_{jt}$ in (\ref{yadd2}). Here $\theta_1=0.8$, $\theta_2=0.6$, $h_1=\frac{1}{\log(N)}$ and $h_2=\frac{1}{\sqrt{N}}$.

Table \ref{tb4} presents the results for the case of $N=400$ and $T=200$.
\begin{center}
\fbox{Table \ref{tb4} near here}
\end{center}

As well known, MA(2) model has zero autocorrelation when $\tau>2$. Hence Table \ref{tb4} reports the results with $\tau=0, 1, 2$. We impose a ``weaker" MA model for error components in the sense of the coefficients tend to zero. This is to guarantee weaker strength of the time-serial dependence in the error component than that for the common factors.

Except similar observations to the first two examples, the proposed estimator behaves a bit better under MA structure than AR structure in examples 1 and 2. The main reason relies on that MA structure has larger time-serial dependence than that for the AR structure.

Similarly, the value of $\widetilde{\tau}$ is also around $1$. This is because the temporal dependence in the error component is quite weak.

%%%%%%%%%%%%%%%%%%%%%%%%%%%%%%%%%%%%%%%%%%%%
%%%%%%%%%%%%%%%%%%%%%%%%%%%%%%%%%%%%%%%%%%%%

\section{Empirical applications}

In this section, we show how to obtain an estimate for the exponent of cross--sectional dependence, $\alpha_0$, for each of the following panel data sets: quarterly cross-country data used in global modelling and daily stock returns on the constitutes of Standard and Poor $500$ index.

\subsection{Cross-country dependence of macro-variables}

We provide an estimate for $\alpha_0$ for each of the datasets: Real GDP growth (RGDP), Consumer price index (CPI), Nominal equity price index (NOMEQ), Exchange rate of country $i$ at time $t$ expressed in US dollars (FXdol), Nominal price of oil in US dollars (POILdolL), and Nominal short-term and long-term interest rate per annum (Rshort and Rlong) computed over $33$ countries.\footnote{The datasets are downloaded from http://www-cfap.jbs.cam.ac.uk/research/gvartoolbox/download.html.} The observed cross-country time series, $y_{it}$, over the full sample period, are standardized as $x_{it}=(y_{it}-\bar y_i)/s_i$, where $\bar y_{i}$ is the sample mean and $s_i$ is the corresponding standard deviation for each of the time series. Table \ref{tb7} reports the corresponding results.

For the standardized data $x_{it}$, we regress it on the cross-section mean $\bar x_t=\frac{1}{N}\sum^{N}_{i=1}x_{it}$, i.e., $x_{it}=\delta_i\bar x_t+u_{it}$ for $i=1,2,\ldots,N$, where $\delta_i$, $i=1,2,\ldots,N$, are the regression coefficients. With the availability of the OLS estimate $\widehat\delta_i$ for $\delta_i$, we have the estimated versions, $\widehat u_{it}$, of the form: $\widehat u_{it}=x_{it}-\hat\delta_i\bar x_t$.

Since our proposed estimation methods rely on the different extent of serial dependence of the factors and idiosyncratic components, we provide some autocorrelation graphs of $\{\bar x_t=\frac{1}{N}\sum^{N}_{i=1}x_{it}: t=1,2,\ldots,T\}$ and $\{\bar u_t=\frac{1}{N}\sum^{N}_{i=1}u_{it}: t=1,2,\ldots,T\}$ for each group of the real dataset under investigation (see Figures \ref{fig1}--\ref{fig4}). From these graphs, it is easy to see that CPI, NOMEQ, FXdol and POILdolL have distinctive serial dependences in the factor part $\bar x_{t}$ and idiosyncratic part $\bar u_{t}$. All the observed real data $x_{it}$ are serially dependent.

\begin{center}
\fbox{Figures \ref{fig1}-\ref{fig4} near here}
\end{center}

Due to the existence of serial dependence in the idiosyncratic component, we use the proposed second moment criterion. The marginal estimator $\widehat\alpha$ and the joint estimator $\widetilde{\alpha}$ for these real data are provided in Table \ref{tb7}. We use $\tau=10$ for two estimators. We can see from Table \ref{tb7} that the values of $\widehat\alpha$ and $\widetilde{\alpha}$ are different from the those provided by \cite{BKP2016}.  Some estimated values are not $1$. This phenomenon implies that a factor structure might be a good approximation for modelling global dependency, and the value of $\alpha_0=1$ typically assumed in the empirical factor literature might be exaggerating the importance of the common factors for modelling cross-sectional dependence at the expense of other forms of dependency that originate from trade or financial inter-linkage that are more local or regional rather than global in nature. Furthermore, it is noted that our model is different from that given by \cite{BKP2016} and the difference mainly lies on that our model only imposes serial dependence on factor processes and assumes that the idiosyncratic errors are independent. Different models may bring in different exponents.

\begin{center}
\fbox{Table \ref{tb7} near here}
\end{center}

\subsection{Cross--sectional exponent of stock-returns}

One of the important considerations in the analysis of financial markets is the extent to which asset returns are interconnected. The classical model is the capital asset pricing model of \cite{S1964} and the arbitrage pricing theory of \cite{R1976}. Both theories have factor representations with at least one strong common factor and an idiosyncratic component that could be weakly cross-sectionally correlated (see \cite{C1983}). The strength of the factors in these asset pricing models is measured by the exponent of the cross-sectional dependence, $\alpha_0$. When $\alpha_0=1$, as it is typically assumed in the literature, all individual stock returns are significantly affected by the factors, but there is no reason to believe that this will be the case for all assets and at all times. The disconnection between some asset returns and the market factors could occur particularly at times of stock market booms and busts where some asset returns could be driven by some non-fundamentals. Therefore, it would be of interest to investigate possible time variations in the exponent $\alpha_0$ for stock returns.

We base our empirical analysis on daily returns of $96$ stocks in the Standard $\&$ Poor $500$ (S$\&$P500) market during the period of January, 2011-December, 2012. The observations $r_{it}$ are standardized as $x_{it}=(r_{it}-\bar r_i)/s_i$, where $\bar r_i$ is the sample mean of the returns over all the sample and $s_i$ is the corresponding standard deviation. For the standardized data $x_{it}$, we regress it on the cross-section mean $\bar x_t=\frac{1}{N}\sum^{N}_{i=1}x_{it}$, i.e., $x_{it}=\delta_i\bar x_t+u_{it}$ for $i=1,2,\ldots,N$, where $\delta_i$, $i=1,2,\ldots,N$, are the regression coefficients. Based on the OLS estimates: $\widehat\delta_i$ for $\delta_i$, we define $\widehat u_{it}=x_{it}-\hat\delta_i\bar x_t$. The autocorrelation functions (ACFs) of the cross-sectional averages $\bar x_{t}=\frac{1}{N}\sum^{N}_{i=1}x_{it}$ and $\bar u_{t}=\frac{1}{N}\sum^{N}_{i=1}u_{it}$ are presented in Figure \ref{fig5}.

\begin{center}
\fbox{Figure \ref{fig5} near here}
\end{center}

From Figure \ref{fig5}, we can see that the serial dependency of the common factor component is stronger than that of the idiosyncratic component. We use the estimates $\widehat\alpha$ and $\widetilde{\alpha}$ to characterize the serial dependences of the common factors and the idiosyncratic component. The estimates $\widehat\alpha$ and $\widetilde{\alpha}$ are calculated with the choice of $\tau=10$. Table \ref{tb8} reports the estimates with several different sample sizes. As comparison, the estimates from \cite{BKP2016} are also reported. From the table, we can see that their estimation method does not work when $\alpha$ is smaller than $1/2$. The results also show that the cross-sectional exponent of stock returns in S$\&$P500 are smaller than $1$. This indicates the support of using different levels of loadings for the common factor model as assumed in Assumption 2, rather than using the same level of loadings in such scenarios.
\medskip

\begin{center}
\fbox{Table \ref{tb8} near here}
\end{center}

Furthermore, Figure \ref{fig6} provides the marginal estimate $\widehat\alpha$ and the joint estimate $\widetilde{\alpha}$ for the first $130$ days of all the period. It shows that the estimated values for $\alpha_0$ with the two methods are quite similar. On the other hand, since a $130$-day period is short, meanwhile, it is reasonable that the estimates didn't change very much.

\begin{center}
\fbox{Figure \ref{fig6} near here}
\end{center}

\section{Conclusions and discussion}

In this paper, we have examined the issue of how to estimate the extent of cross--sectional dependence for large dimensional panel data. The extent of cross--sectional dependence
is parameterized as $\alpha_0$, by assuming that only $[N^{\alpha_0}]$ sections are relatively strongly dependent. Compared with the estimation method proposed by \cite{BKP2016}, we have developed a unified `moment' method to estimate $\alpha_0$. Especially,  when stronger serial dependence exists in the factor process than that for the idiosyncratic errors (dynamic principal component analysis), we have recommended the use of the covariance function between the cross-sectional average values of the observed data at different lags to estimate $\alpha_0$. One advantage of this new approach is that it can deal with the case of $0\leq\alpha_0\leq1/2$.

Due to some unknown parameters involved in the panel data model, in addition to the proposed marginal estimation method, we have also constructed a joint estimation method for $\alpha_0$ and the related unknown parameters. The asymptotic properties of the estimators have all been established. The simulation results and an empirical application to two datasets have shown that the proposed estimation methods work well numerically.

Future research includes discussion about how to estimate factors and factor loadings in factor models, and determine the number of factors for the case of $0<\alpha_0<1$. Existing methods available for factor models, such as \cite{Baing2002}, \cite{AH2013}, \cite{ABC2010}, \cite{O2010}, for the case of $\alpha_0=1$, may not be directly applicable, and should be extended to deal with the case of $0<\alpha_0<1$. Such issues are all left for future work.

\section{Acknowledgements}

The first, the third and fourth authors acknowledge the Australian Research Council Discovery Grants Program for its support under Grant numbers: DP150101012 \& DP170104421. Thanks from the second author also go to the Ministry of Education in Singapore for its financial support under Grant $\#$ ARC 14/11.

{\small
%%%%%%%%%%%%%%%%%%%%%%%%%%%%%%%%%%%%%%%%%%%%%%%%%%%%%
%%%%%%%%%%%%%%%%%%%%%%%%%%%%%%%%%%%%%%%%%%%%%%%%%%%%%

\begin{figure}[h]
\caption{ACF of RGDP and CPI}\label{fig1}
\centering
\includegraphics[scale=0.4]{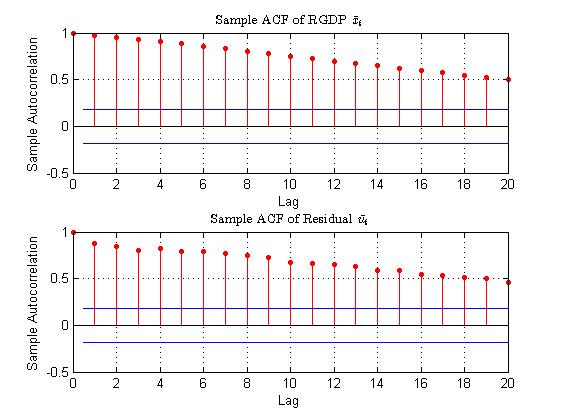}
\includegraphics[scale=0.4]{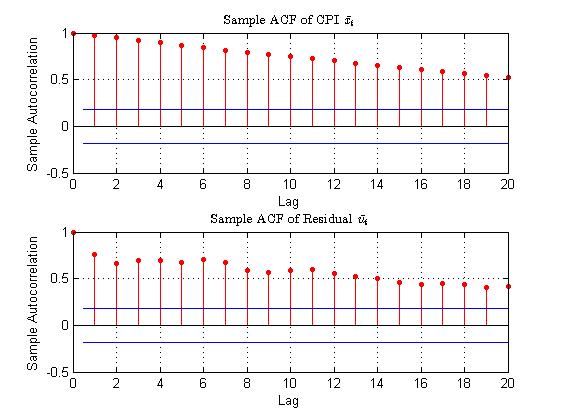}
\\
\end{figure}

\begin{figure}[h]
\caption{ACF of NOMEQ and FXdol}\label{fig2}
\centering
\includegraphics[scale=0.4]{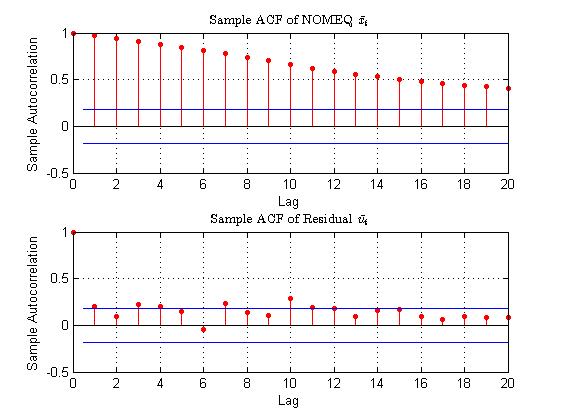}
\includegraphics[scale=0.4]{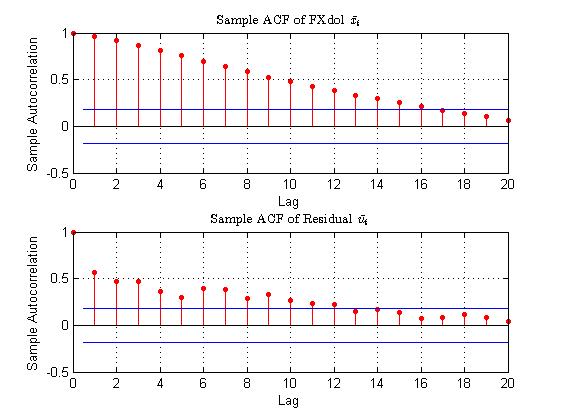}
\\
\end{figure}

\begin{figure}[h]
\caption{ACF of Rshort and Rlong}\label{fig3}
\centering
\includegraphics[scale=0.4]{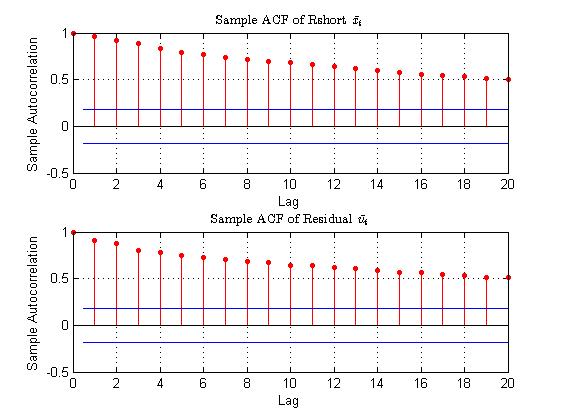}
\includegraphics[scale=0.4]{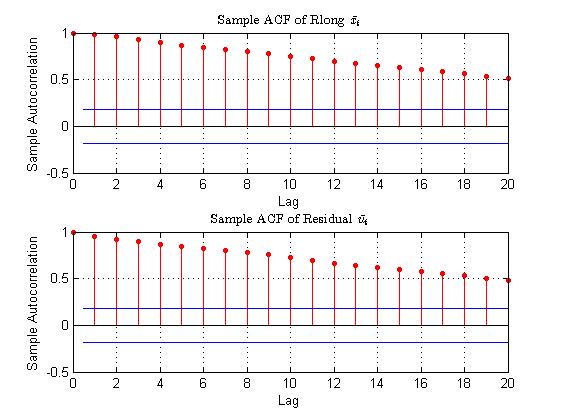}
\\
\end{figure}

\begin{figure}[h]
\caption{ACF of POILdolL}\label{fig4}
\centering
\includegraphics[scale=0.4]{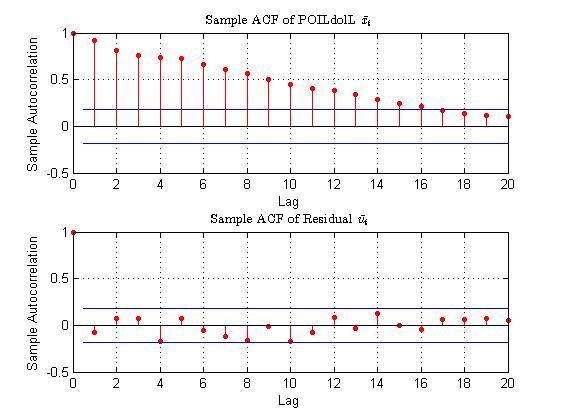}
\\
\end{figure}

%%%%%%%%%%%%%%%%%%%%%%%%%%%%%%%%%%%%%%%%%%%%%%%%%%%%
%%%%%%%%%%%%%%%%%%%%%%%%%%%%%%%%%%%%%%%%%%%%%%%%%%%%

\begin{figure}[h]
\caption{ACF of averages of 96 stock returns}\label{fig5}
\centering
\includegraphics[scale=0.4]{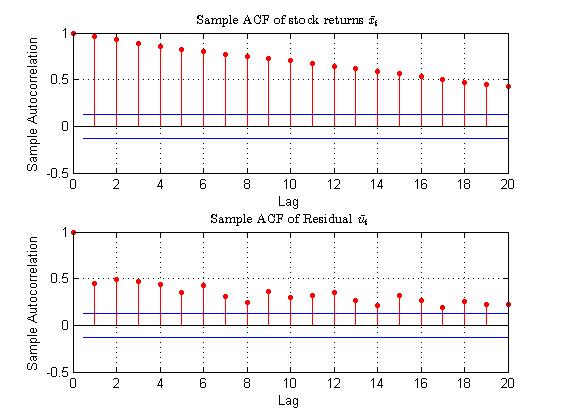}
\\
\end{figure}

\begin{figure}[h]
\caption{130-day joint and marginal estimators for $96$ stocks of S$\&$P 500}\label{fig6}
\centering
\includegraphics[scale=0.4]{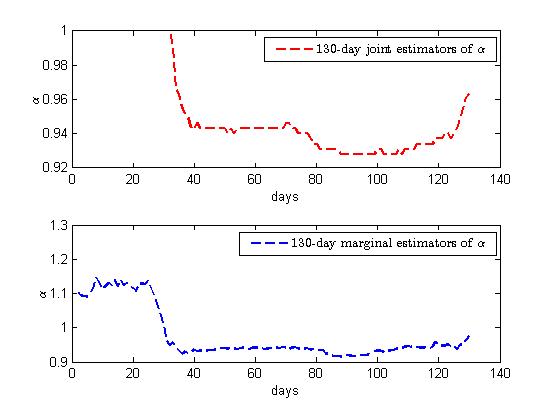}
\\
\end{figure}

\newpage

\begin{table}[h]
{\small \caption{\label{tb1}: Example 1} }
\begin{center}
{\scriptsize
\begin{tabular}{ccccccccc}
\hline
 $\tau$  &  0  & 1   & 2   & 3   &  4  \\
\hline
$\alpha_0=0.8$           &       $\widetilde{\tau}=1$        &              &    &    &    \\
$\widehat{\alpha}$     &  0.7912 & 0.7899 & 0.8101& 0.8112 & 0.8127\\
$\widetilde{\alpha}$   &  0.7901 & 0.7892 & 0.8123  & 0.8120   & 0.8134\\
$90\%$ CI Upper       &  0.7933 & 0.7929 & 0.8407 & 0.8420 & 0.8467 \\
$90\%$ CI Lower      &  0.7869 & 0.7855 &  0.7839 & 0.7820 & 0.7801 \\
\hline
$\alpha_0=0.5$ & $\widetilde{\tau}=1$ & & & & &\\
$\widehat{\alpha}$ & 0.5972    & 0.4901   & 0.5112   & 0.5127   &  0.5130 \\
$\widetilde{\alpha}$   & 0.6011   & 0.4900   & 0.5119   & 0.5131   &   0.5135 \\
$95\%$ CI Upper        & 0.6153 & 0.4945 & 0.5399 & 0.5441 & 0.5469 \\
$95\%$ CI Lower        & 0.5869 & 0.4855 & 0.4839 & 0.4821 & 0.4801\\
\hline
$\alpha_0=0.2$ & $\widetilde{\tau}=2$ & & & & &\\
$\widehat{\alpha}$ &  0.3140  & 0.2004   & 0.2042   & 0. 2086  & 0.2101 \\
$\widetilde{\alpha}$   &  0.3213  &  0.2010  &  0.2055  &  0.2091  &  0.2114 \\
$90\%$ CI Upper & 0.3457 & 0.2166 & 0.2272 & 0.2361 & 0.2427 \\
$90\%$ CI Lower & 0.2969 & 0.1854 & 0.1838 & 0.1821 & 0.1801\\
\hline
\end{tabular}}
\\ \end{center}\medskip
\end{table}

\begin{table}[h]
{\small \caption{\label{tb2}: Example 2 with $h=0.2$} }
\begin{center}
{\scriptsize
\begin{tabular}{cccccccc}
\hline
 $\tau$  &  0  & 5   & 10   & 15   &  20  \\
\hline
$\alpha_0=0.8$           &     $\widetilde{\tau}=4$          &              &    &    &    \\
$\widehat{\alpha}$     &  0.8101 & 0.8091 & 0.8199& 0.8231 & 0.8304\\
$\widetilde{\alpha}$   &  0.8144 & 0.8098 & 0.8205  & 0.8279   & 0.8311\\
$90\%$ CI Upper       &  0.8505 & 0.8334 & 0.8645 & 0.8798 & 0.9295 \\
$90\%$ CI Lower      &  0.7783 & 0.7862 &  0.7765 & 0.77603 & 0.7327 \\
\hline
$\alpha_0=0.5$ &$\widetilde{\tau}=7$ & & & & &\\
$\widehat{\alpha}$ & 0.6018    & 0.5078   & 0.5183   & 0.5217   &  0.5368 \\
$\widetilde{\alpha}$   & 0.6087   & 0.5101   & 0.5214   & 0.5268   &   0.5401 \\
$90\%$ CI Upper& 0.6383 & 0.5340 & 0.5663 & 0.5933 & 0.6475 \\
$90\%$ CI Lower& 0.5791 & 0.4862 & 0.4765 & 0.4603 & 0.4327\\
\hline
$\alpha_0=0.2$ & $\widetilde{\tau}=3$ & & & & &\\
$\widehat{\alpha}$ &  0.3127  & 0.2009   & 0.2147   & 0. 2201  & 0.2366 \\
$\widetilde{\alpha}$   &  0.3203 &  0.2048  &  0.2179  &  0.2293  &  0.2517 \\
$90\%$ CI Upper & 0.3510 & 0.2234 & 0.2593 & 0.2983 & 0.3707 \\
$90\%$ CI Lower & 0.2896 & 0.1862 & 0.1765 & 0.1603 & 0.1327\\
\hline
\end{tabular}}
\\ \end{center}\medskip
\end{table}

\begin{table}[h]
{\small \caption{\label{tb3}: Example 2 with $h=\frac{1}{\sqrt{N}}$} }
\begin{center}
{\scriptsize
\begin{tabular}{cccccccc}
\hline
 $\tau$  &  0  & 1   & 2   & 3   &  4  \\
\hline
$\alpha_0=0.8$           &      $\widetilde{\tau}=2$         &              &    &    &    \\
$\widehat{\alpha}$     &  0.8002 & 0.8011 & 0.8148& 0.8160 & 0.8169\\
$\widetilde{\alpha}$   &  0.8098 & 0.8018 & 0.8191  & 0.8199   & 0.8205\\
$90\%$ CI Upper       &  0.8326 & 0.8237 & 0.8669 & 0.8749 & 0.8807 \\
$90\%$ CI Lower      &  0.7870 & 0.7799 &  0.7713 & 0.7649 & 0.7603 \\
\hline
$\alpha_0=0.5$ & $\widetilde{\tau}=1$& & & & &\\
$\widehat{\alpha}$ & 0.6102    & 0.5099   & 0.5184   & 0.5210   &  0.5221 \\
$\widetilde{\alpha}$   & 0.6183   & 0.5110   & 0.5192   & 0.5222   &   0.5238 \\
$90\%$ CI Upper& 0.6665 & 0.5447 & 0.5681 & 0.5752 & 0.5809 \\
$90\%$ CI Lower& 0.5701 & 0.4773 & 0.4703 & 0.4692 & 0.4667\\
\hline
$\alpha_0=0.2$ &$\widetilde{\tau}=1$ & & & & &\\
$\widehat{\alpha}$ &  0.3201  & 0.2048   & 0.2102   & 0. 2112  & 0.2199 \\
$\widetilde{\alpha}$   &  0.3211  &  0.2100  &  0.2108  &  0.2141  &  0.2205 \\
$90\%$ CI Upper & 0.3531 & 0.2399 & 0.2447 & 0.2554 & 0.2773 \\
$90\%$ CI Lower & 0.2891 & 0.1801 & 0.1769 & 0.1728 & 0.1637\\
\hline
\end{tabular}}
\\ \end{center}\medskip
\end{table}

\begin{table}[htp]
{\small \caption{\label{tb4}: Example 3} }
\begin{center}
{\scriptsize
\begin{tabular}{cccccccc}
\hline
 $\tau$  &  0  & 1   & 2  \\
\hline
$\alpha_0=0.8$           &   $\widetilde{\tau}=2$            &              &     \\
$\widehat{\alpha}$     &  0.8006 & 0.8022 & 0.8104\\
$\widetilde{\alpha}$   &  0.8010 & 0.8038 & 0. 8118 \\
$90\%$ CI Upper       &  0.8102 & 0.8204 & 0.8509   \\
$90\%$ CI Lower      &  0.7918 & 0.7872 &  0.7727 \\
\hline
$\alpha_0=0.5$ & $\widetilde{\tau}=2$& & & & &\\
$\widehat{\alpha}$ & 0.5999    & 0.5034   & 0.5127  \\
$\widetilde{\alpha}$   & 0.6103   & 0.5089   & 0.5189  \\
$90\%$ CI Upper& 0.6288 & 0.5306 & 0.5651\\
$90\%$ CI Lower& 0.5918 & 0.4872 & 0.4727 \\
\hline
$\alpha_0=0.2$ & $\widetilde{\tau}=1$& & & & &\\
$\widehat{\alpha}$ &  0.3048  & 0.2089   & 0.2103  \\
$\widetilde{\alpha}$   &  0.3104  &  0.2103  &  0.2193  \\
$90\%$ CI Upper & 0.3290 & 0.2334 & 0.2659 \\
$90\%$ CI Lower & 0.2918 & 0.1872 & 0.1727 \\
\hline
\end{tabular}}
\\ \end{center}\medskip
\end{table}

\begin{table}[h]
{\small \caption{\label{tb7}Exponent of cross-country dependence of macro-variables}}
\begin{center}
{\footnotesize
\begin{tabular}{cccccccccccccccccc}
&\multicolumn{5}{c}{}\\
\hline
 & N  &  T  &  $\widehat\alpha$   &  $\widetilde\alpha$   &  BKP16  \\
\hline
Real GDP growth(RGDP)                & 33 & 122 &	0.925&	0.916&	0.937   \\
Upper Bound  & & & & 1.028 & 1.038 \\
Lower Bound  & & & & 0.804 & 0.836\\
Consumer Price Index(CPI)             & 33 & 122 &	0.947&	0.933&	0.949    \\
Upper Bound  & & & &  1.054 & 1.073 \\
Lower Bound  & & & & 0.812 & 0.825\\
Nominal equity price index(NOMEQ) & 33 & 122 &	0.952&	0.921&	0.959   \\
Upper Bound  & & & & 1.041 & 1.079 \\
Lower Bound  & & & & 0.801 & 0.839\\
Short-term interest rates(Rshort)      & 33 & 122 &	0.943&	0.938&	0.961   \\
Upper Bound  & & & & & \\
Lower Bound  & & & & 1.010 & 1.032\\
Long-term interest rates(Rlong)       & 33 & 122 &	0.984&	0.971&	0.982    \\
Upper Bound  & & & & 1.061 & 1.070\\
Lower Bound  & & & & 0.881 & 0.894\\
\hline
\end{tabular}}
\\ \end{center}\medskip
\ \ \ \ \ \ \ \ \ \ \ \ \ \ \ \ \ {\footnotesize * BKP16 is the estimator in \cite{BKP2016}.  }
\end{table}

\begin{table}[h]
{\small \caption{\label{tb8}Exponent of cross-sectional exponent of stock returns}}
\begin{center}
{\footnotesize
\begin{tabular}{cccccccccccccccccc}
&\multicolumn{8}{c}{}\\
\hline
(N,T) & (20,60)  &  (50,80)  &  (70,100)   &  (90,110)   &  (96,125)  &  (96,100)  &  (96,80) & (96,60)\\
\hline
$\widehat\alpha$        & 0.469 & 0.799 & 0.809 & 0.813 & 0.822 & 0.843  & 0.812 & 0.833\\
$\widetilde\alpha$      & 0.502 & 0.708 & 0.901 & 0.869 & 0.842 & 0.863  & 0.823 & 0.846\\
Upper Bound & 0.609 & 0.814 & 0.968 & 0.923 & 0.902 & 0.934 & 0.942 & 0.895\\
Lower Bound & 0.395& 0.602 & 0.834 & 0.815 & 0.782& 0.792& 0.704& 0.797\\
BKP16 & 1.002 & 0.639 & 0.793 & 0.842 & 0.882 & 0.901 &   0.898 & 0.859\\
Upper Bound & 1.113 & 0.696 & 0.772 & 0.903 & 0.970 & 1.014 & 1.049 & 0.998\\
Lower Bound & 0.891& 0.582& 0.814 & 0.781 & 0.794& 0.788& 0.747 & 0.720\\
\hline
\end{tabular}}
\\ \end{center}\medskip
\ \ \ \ \ \ \ \ \ \ \ \ \ \ \ \ \ {\footnotesize *BKP16 is the estimator in \cite{BKP2016}.  }
\end{table}

}
\newpage

{\small

This material includes two appendices, i.e. Appendices A and B. Appendix A provides the proofs of Theorems 1 and 2 in the main paper. Some lemmas used in the proofs of Theorems 1 and 2 are given in Appendix B. The proof of Theorem 3 in the main paper is omitted since it is similar to that of Theorem 2.

Throughout this material, we use $C$ to denote a constant which may be different from line to line and $||\cdot||$ to denote the spectral norm or the Euclidean norm of a vector. In addition, the notation $a_n\asymp b_n$ means that $a_n=O_P(b_n)$ and $b_n=O_P(a_n)$.

\renewcommand{\theequation}{A.\arabic{equation}}
\setcounter{equation}{0}

\section{Appendix A: Proofs of the main results}

This section provides the whole derivation of (\ref{yry01}) and the proofs of Theorems 1 and 2. The proofs will use Lemmas 1 and 2, which are given Appendix B below.
For easy of presentation, we first prove Theorem 1 for the marginal estimator.

\subsection{Full derivation of (\ref{yry01})}
Consider the condition that
\begin{eqnarray*}
\frac{\partial Q^{(1)}_{NT}(\alpha,\kappa,\tau)}{\partial\kappa}=0,
\end{eqnarray*}
which is equivalent to $\sum^{[N^{\alpha}]}_{n=1}n^3\Big(\hat\sigma_n(\tau)-\kappa\Big)+[N^{2\alpha}]\sum^{N}_{n=[N^{\alpha}]+1}n\Big(\hat\sigma_n(\tau)-\frac{[N^{2\alpha}]}{n^2}\kappa\Big)=0$.

This derives
\begin{eqnarray}\label{a8}
\kappa=\kappa(\alpha):=\frac{\sum^{[N^{\alpha}]}_{n=1}n^3\hat\sigma_n(\tau)+\sum^{N}_{n=[N^{\alpha}]+1}n[N^{2\alpha}]\hat\sigma_n(\tau)}{\sum^{[N^{\alpha}]}_{n=1}n^3+\sum^{N}_{n=[N^{\alpha}]+1}\frac{[N^{4\alpha}]}{n}}.
\end{eqnarray}
Then we can obtain
\begin{eqnarray}
\kappa=\frac{\hat q_1^{(1)}(\alpha,\tau)+[N^{2\alpha}]\hat q_2^{(1)}(\alpha,\tau)}{N^{(1)}(\alpha)}.
\end{eqnarray}

We now introduce an additional expression:
\begin{eqnarray*}
Q^{(1)}(\tau)=\sum^{N}_{n=1}n^3\hat\sigma_n^2(\tau).
%&&Q_1^{(1)}(\alpha)=\sum^{[N^{\alpha}]}_{n=1}n^3\hat\sigma_n^2(\tau), \ Q_2^{(1)}(\alpha)=\sum^{N}_{n=[N^{\alpha}]+1}n^3\hat\sigma_n^2(\tau)
\end{eqnarray*}

Then
\begin{eqnarray}
Q^{(1)}_{NT}(\alpha,\kappa,\tau)&=&Q^{(1)}(\tau)+\kappa^2\sum^{[N^{\alpha}]}_{n=1}n^3+\kappa^2[N^{4\alpha}]\sum^{N}_{n=[N^{\alpha}]+1}n^{-1}-2\kappa \hat q_1^{(1)}(\alpha,\tau)-2\kappa [N^{2\alpha}]\hat q_2^{(1)}(\alpha,\tau)\non
&=&Q^{(1)}(\tau)+\kappa^2N^{(1)}(\alpha)-2\kappa\big(\hat q_1^{(1)}(\alpha,\tau)+[N^{2\alpha}]\hat q_2^{(1)}(\alpha,\tau)\big)\non
&=&Q^{(1)}(\tau)-\frac{\big(\hat q_1^{(1)}(\alpha,\tau)+[N^{2\alpha}]\hat q_2^{(1)}(\alpha,\tau)\big)^2}{N^{(1)}(\alpha)}.
\end{eqnarray}

Since $Q^{(1)}(\tau)$ does not depend on $\alpha$, minimizing $Q^{(1)}_{NT}(\alpha,\kappa)$ is equivalent to maximizing the term:
%\footnote{This is a good procedure. Is it possible to establish joint distributions of $(\widehat{\alpha}, \widehat{\kappa})$ from here under certain conditions for some individual cases. If so, it would be great.}
$$\hat{Q}_{NT}^{(1)}(\alpha,\tau)=\frac{(\hat q_1^{(1)}(\alpha,\tau)+[N^{2\alpha}]\hat q_2^{(1)}(\alpha,\tau))^2}{N^{(1)}(\alpha)}.$$

\subsection{Proof of Theorem \ref{thm1}}
\begin{proof}
Based on model (2.2) in the main paper, we have
\begin{eqnarray*}
\bar x_t=\frac{1}{N}\sum^{N}_{i=1}x_{it}=\mu+\boldsymbol{\bar\beta}_N^{'}\bbF_t+\bar u_t,
\end{eqnarray*}
where $\boldsymbol{\bar\beta}_N^{'}=1/N\sum^{N}_{i=1}\boldsymbol{\beta}_i$, $\mu=1/N\sum^{N}_{i=1}\mu_i$ and $\bar u_t=\frac{1}{N}\sum^{N}_{i=1}u_{it}$.
Then we have
\begin{eqnarray*}
\bar x^{(1)}=\frac{1}{T-\tau}\sum^{T-\tau}_{t=1}\bar x_t=\mu+\boldsymbol{\bar\beta}_N^{'}\bar\bbF_T+\bar u^{(1)}, \ \
\bar x^{(2)}=\frac{1}{T-\tau}\sum^{T-\tau}_{t=1}\bar x_{t+\tau}=\mu+\boldsymbol{\bar\beta}_N^{'}\bar\bbF_{T+\tau}+\bar u^{(2)},
\end{eqnarray*}
where $\bar\bbF_T=\frac{1}{T-\tau}\sum^{T-\tau}_{t=1}\bbF_t$, $\bar\bbF_{T+\tau}=\frac{1}{T-\tau}\sum^{T-\tau}_{t=1}\bbF_{t+\tau}$,
$\bar u^{(1)}=\frac{1}{T-\tau}\sum^{T-\tau}_{t=1}\bar u_t$ and $\bar u^{(2)}=\frac{1}{T-\tau}\sum^{T-\tau}_{t=1}\bar u_{t+\tau}$.

Then the auto-covariance estimator $\widehat\sigma_N(\tau)$ can be written as
\begin{eqnarray}\label{b2}
\widehat\sigma_N(\tau)&=&\frac{1}{T-\tau}\sum^{T-\tau}_{t=1}\Big(\big(\boldsymbol{\bar\beta}_N^{'}(\bbF_t-\bar\bbF_T)+\bar u_t-\bar u^{(1)}\big)\big(\boldsymbol{\bar\beta}_N^{'}(\bbF_{t+\tau}-\bar\bbF_{T+\tau})+\bar u_{t+\tau}-\bar u^{(2)}\big)\Big)\non
&=&\frac{1}{T-\tau}\sum^{T-\tau}_{t=1}\Big(\boldsymbol{\bar\beta}_N^{'}(\bbF_t-\bar\bbF_T)(\bbF_{t+\tau}-\bar\bbF_{T+\tau})^{'}\boldsymbol{\bar\beta}_N\Big)+C_N,
\end{eqnarray}
where $C_N=c_{N1}+c_{N2}+c_{N3}$ with
\begin{eqnarray*}
&&c_{N1}=\frac{1}{T-\tau}\sum^{T-\tau}_{t=1}\Big((\bar u_t-\bar u^{(1)})(\bar u_{t+\tau}-\bar u^{(2)})\Big),\\
&&c_{N2}=\frac{1}{T-\tau}\sum^{T-\tau}_{t=1}\Big(\boldsymbol{\bar\beta}_N^{'}(\bbF_t-\bar\bbF_T)(\bar u_{t+\tau}-\bar u^{(2)})\Big),\\
&&c_{N3}=\frac{1}{T-\tau}\sum^{T-\tau}_{t=1}\Big(\boldsymbol{\bar\beta}_N^{'}(\bbF_{t+\tau}-\bar\bbF_{T+\tau})(\bar u_t-\bar u^{(1)})\Big).
\end{eqnarray*}

Denote
\begin{eqnarray*}
\bbS_{\tau}=\frac{1}{T-\tau}\sum^{T-\tau}_{t=1}(\bbF_t-\bar\bbF_T)(\bbF_{t+\tau}-\bar\bbF_{T+\tau})^{'}.
\end{eqnarray*}

From (3.10) in the main paper, we can obtain
\begin{eqnarray}\label{b3}
\boldsymbol{\bar\beta}_N^{'}\bbS_{\tau}\boldsymbol{\bar\beta}_N=[N^{2\alpha_0-2}]\bar\bbv_N^{'}\bbS_{\tau}\bar\bbv_N+R_N,
\end{eqnarray}
where
\begin{eqnarray}\label{h6}
R_N=[N^{\alpha_0-2}]\bar\bbv_N^{'}\bbS_{\tau}\bbK_{\rho}+[N^{\alpha_0-2}]\bbK_\rho^{'}\bbS_{\tau}\bar\bbv_N+N^{-2}\bbK_\rho^{'}\bbS_{\tau}\bbK_\rho.
\end{eqnarray}

%Here we would like to remind the reader that $\bbD_N$ becomes an identity matrix since we assume that $\alpha_{lk}=\alpha_0$ for simplicity.
Therefore, from (\ref{b2}) and (\ref{b3}), we have
\begin{eqnarray}\label{e4}
\ln(\widehat\sigma_N(\tau))^2&=&\ln(\boldsymbol{\bar\beta}_N^{'}\bbS_{\tau}\boldsymbol{\bar\beta}_N)^2+\ln(1+\frac{C_N}{\boldsymbol{\bar\beta}_N^{'}\bbS_{\tau}\boldsymbol{\bar\beta}_N})^2\non
&=&4(\alpha_0-1)\ln(N)+\ln(\bar\bbv_N^{'}\bbS_{\tau}\bar\bbv_N)^2\non
&&+\ln\Big(1+\frac{R_N}{[N^{2\alpha_0-2}]\bar\bbv_N^{'}\bbS_{\tau}\bar\bbv_N}\Big)^2+\ln\Big(1+\frac{C_N}{\boldsymbol{\bar\beta}_N^{'}\bbS_{\tau}\boldsymbol{\bar\beta}_N}\Big)^2.
\end{eqnarray}

It follows from (\ref{a6}) in the main paper and (\ref{e4}) that
\begin{eqnarray}\label{b14}
&&4(\widehat\alpha_{\tau}-\alpha_0)\ln(N)+\ln(\kappa_{\tau}^2)-\ln(\bar\bbv_N^{'}\bbS_{\tau}\bar\bbv_N)^2\non
&=&\ln\left(1+\frac{R_N}{[N^{2\alpha_0-2}]\bar\bbv_N^{'}\bbS_{\tau}\bar\bbv_N}\right)^2+\ln\left(1+\frac{C_N}{\boldsymbol{\bar\beta}_N^{'}\bbS_{\tau}\boldsymbol{\bar\beta}_N}\right)^2.
\end{eqnarray}

From Lemma 1 in Appendix C, which provides the central limit theorem for $\bar\bbv_N^{'}\bbS_{\tau}\bar\bbv_N$, and condition (3.24) in the main paper, we conclude that, as $N,T\rightarrow\infty$,
\begin{eqnarray}\label{d7}
\frac{\bar\bbv_N^{'}\bbS_{\tau}\bar\bbv_N-\kappa_{\tau}}{\kappa_{\tau}}\stackrel{i.p.}{\longrightarrow} 0.
%\asymp \gamma(\tau-s).
\end{eqnarray}

Evidently, $\|\bbK_{\rho}\|\leq C$. Moreover, by Assumption 2,
\begin{eqnarray}\label{r1}
E(||\bar\bbv_N||^2)=E\left(\sum^{s}_{\ell=0}\sum^{m}_{k=1}\bar v_{N\ell k}^2\right) =E\left(\sum^{s}_{\ell=0}\sum^{m}_{k=1}\frac{1}{[N^{\alpha_0}]}\sum^{[N^{\alpha_0}]}_{i,j=1}v_{i\ell k}v_{j\ell k}\right) \leq C
\end{eqnarray}
and by Assumption 3, we have
\begin{eqnarray*}
E||\bbS_{\tau}||\leq\frac{1}{T-\tau}\sum^{T-\tau}_{t=1}E||\bbF_t\bbF_{t+\tau}^{'}||
=\frac{1}{T-\tau}\sum^{T-\tau}_{t=1}E(\sum^{s}_{j=0}\bbf_{t+\tau-j}^{'}\bbf_{t-j})\leq C.
\end{eqnarray*}

So $||\bar\bbv_N||=O_P(1)$ and $||\bbS_{\tau}||=O_P(1)$. These derivations, together with (\ref{h6}), ensure
\begin{eqnarray}\label{d6}
R_N=O_P([N^{\alpha_0-2}]).
\end{eqnarray}
%With (\ref{b3}), (\ref{d7}) and (\ref{d6}), we can derive that
%\begin{eqnarray}
%&&[N^{2\alpha-2}]\bar\bbv_N^{'}\bbS_{\tau}\bar\bbv_N\asymp [N^{2\alpha-2}];\label{d8}\\
%&&\boldsymbol{\bar\beta}_N^{'}\bbS_{\tau}\boldsymbol{\bar\beta}_N\asymp [N^{2\alpha-2}].\label{d9}
%\end{eqnarray}

We conclude from (\ref{d6}) and (\ref{d7}) that
\begin{eqnarray}
\frac{R_N}{[N^{2\alpha_0-2}]\bar\bbv_N^{'}\bbS_{\tau}\bar\bbv_N}=O_P\Big(\frac{1}{[N^{\alpha_0}]\kappa_0}).
\end{eqnarray}
Therefore
\begin{eqnarray}\label{tt5}
\ln\Big(1+\frac{R_N}{[N^{2\alpha_0-2}]\bar\bbv_N^{'}\bbS_{\tau}\bar\bbv_N}\Big)^2=r_{NT}+o_P(r_{NT})=O_P\Big(\frac{1}{[N^{\alpha_0}]\kappa_0}\Big),
\end{eqnarray}
where $r_{NT}=\frac{2R_N}{[N^{2\alpha_0-2}]\bar\bbv_N^{'}\bbS_{\tau}\bar\bbv_N}
+\Big(\frac{R_N}{[N^{2\alpha_0-2}]\bar\bbv_N^{'}\bbS_{\tau}\bar\bbv_N}\Big)^2$, and we have used the simple fact that
\be
\lim\limits_{x\rightarrow 0}\frac{\ln (1+x)-x}{x}=0.
\label{g4}
\ee

It follows that
\begin{eqnarray}\label{t4}
\sqrt{\min\big([N^{\alpha_0}], T-\tau\big)}\ln\Big(1+\frac{R_N}{[N^{2\alpha_0-2}]\bar\bbv_N^{'}\bbS_{\tau}\bar\bbv_N}\Big)^2
=O_P\Big(\frac{1}{[N^{\alpha_0/2}]\kappa_0}\Big)=o_P(1).
\end{eqnarray}

Meanwhile, based on the decomposition of $C_N=\sum^{3}_{i=1}c_{Ni}$, we evaluate the orders of the following terms: $\frac{c_{Ni}}{\boldsymbol{\bar\beta}_N^{'}\bbS_{\tau}\boldsymbol{\bar\beta}_N}$ for $i=1,2,3$.

For $c_{N1}$, we need to evaluate the orders of $\bar u^{(i)}$, $i=1,2$ and $\frac{1}{T-\tau}\sum^{T-\tau}_{t=1}\bar u_t\bar u_{t+\tau}$. The order of $\frac{1}{T-\tau}\sum^{T-\tau}_{t=1}\bar u_t\bar u_{t+\tau}$ will be provided in Lemma 2 in Appendix C.

By Assumption 1, we have
%\begin{eqnarray}\label{b6}
%Var(\bar u^{(1)})&=&E\Big(\frac{1}{N^2(T-\tau)^2}\sum^{N}_{i,j=1}\sum^{T-\tau}_{t_1,t_2=1}u_{it_1}u_{jt_2}\Big)
%=E\Big(\frac{1}{N^2(T-\tau)^2}\sum^{N}_{i,j=1}\sum^{T-\tau}_{t=1}u_{it}u_{jt}\Big)\non
%&=&\frac{1}{N^2(T-\tau)}\bbe^{'}\boldsymbol{\Sigma}_u\bbe=O\Big(\frac{1}{NT}\Big),
%\end{eqnarray}
%where $\bbe$ is a $N\times 1$ vector with elements being $1$ and the the last estimate uses the facts that
%$||\bbe||=\sqrt{N}$ and $||\boldsymbol{\Sigma}_u||\leq C$. Similarly, we can derive $Var(\bar u^{(2)})=O\Big(\frac{1}{NT}\Big)$.

\begin{eqnarray}\label{b6}
&&E\Big(\sum^{N}_{i_1,i_2=1}\sum^{T-\tau}_{t_1,t_2=1}u_{i_1t_1}u_{i_2t_2}\Big)\non
&=&E\sum^{N}_{i_1,i_2=1}\sum^{T-\tau}_{t_1,t_2=1}\Big(\sum^{+\infty}_{j_1=0}\phi_{i_1j_1}\sum^{+\infty}_{s_1=-\infty}\xi_{j_1s_1}\nu_{j_1,t_1-s_1}\Big)
\Big(\sum^{+\infty}_{j_2=0}\phi_{i_2j_2}\sum^{+\infty}_{s_2=-\infty}\xi_{j_2s_2}\nu_{j_2,t_2-s_2}\Big)\non
&=&E\sum^{N}_{i_1,i_2=1}\sum^{T-\tau}_{t_1,t_2=1}\sum^{+\infty}_{j_1=0}\phi_{i_1j_1}\sum^{+\infty}_{s_1=-\infty}\xi_{j_1s_1}\nu_{j_1,t_1-s_1}^2\phi_{i_2j_1}\xi_{j_1,t_1-s_1-t_2}\non
&=&\sum^{N}_{i_1,i_2=1}\sum^{T-\tau}_{t_1,t_2=1}\sum^{+\infty}_{j_1=0}\sum^{+\infty}_{s_1=-\infty}\phi_{i_1j_1}\phi_{i_2j_1}\xi_{j_1s_1}\xi_{j_1,t_1-s_1-t_2}\non
&\leq&\sum^{T-\tau}_{t_1=1}\sum^{N}_{i_1=1}\sum^{+\infty}_{j_1=0}|\phi_{i_1j_1}|\sum^{N}_{i_2=1}|\phi_{i_2j_1}|\sum^{+\infty}_{s_1=-\infty}|\xi_{j_1s_1}|\sum^{T}_{t_2=1}|\xi_{j_1,t_1-s_1-t_2}|
=O\big(N(T-\tau)\big).
\end{eqnarray}

From (\ref{b6}) and the fact that $E(\bar u^{(1)})=0$, we have
\begin{eqnarray}
Var(\bar u^{(1)})=\frac{1}{N^2(T-\tau)^2}E\Big(\sum^{N}_{i_1,i_2=1}\sum^{T-\tau}_{t_1,t_2=1}u_{i_1t_1}u_{i_2t_2}\Big)=O\Big(\frac{1}{N(T-\tau)}\Big)
\end{eqnarray}
and then it follows that
\begin{eqnarray}\label{u9}
\bar u^{(1)}=O_P(\frac{1}{\sqrt{N(T-\tau)}}).
\end{eqnarray}

Similarly, we have $\bar u^{(2)}=O_P\left(\frac{1}{\sqrt{N(T-\tau)}}\right)$. Combining (\ref{u9}) and Lemma 2 in Appendix D, we get
\begin{eqnarray}\label{zhang2}
c_{N1}=O_P\Big(\max\left(\frac{\gamma_1(\tau)}{N}, \frac{1}{N\sqrt{T-\tau}}\right)\Big).
\end{eqnarray}

This, together with (\ref{d7}), (\ref{b3}) and (\ref{d6}), implies that
\begin{eqnarray}\label{b12}
\frac{c_{N1}}{\boldsymbol{\bar\beta}_N^{'}\bbS_{\tau}\boldsymbol{\bar\beta}_N}=O_P\Big(\max\left(\frac{\gamma_1(\tau)N^{1-2\alpha_0}}{\kappa_{\tau}}, \frac{(T-\tau)^{-1/2}N^{1-2\alpha_0}}{\kappa_{\tau}}\right)\Big).
\end{eqnarray}

We then prove
\begin{eqnarray}\label{b5}
\frac{c_{N2}}{\boldsymbol{\bar\beta}_N^{'}\bbS_{\tau}\boldsymbol{\bar\beta}_N}=O_P\Big(\frac{(T-\tau)^{-1/2}N^{1/2-\alpha_0}}{\kappa_{\tau}}\Big).
\end{eqnarray}

By Assumption 3, we have $E[c_{N2}]=0$ and then its variance
\begin{eqnarray*}
&&Var\Big[\frac{1}{T-\tau}\sum^{T-\tau}_{t=1}\Big(\boldsymbol{\bar\beta}_N^{'}(\bbF_t-\bar\bbF_T)(\bar u_{t+\tau}-\bar u^{(2)})\Big)\Big]\non
&=&\frac{1}{(T-\tau)^2}\sum^{T-\tau}_{t_1,t_2=1}E\Big(\boldsymbol{\bar\beta}_N^{'}(\bbF_{t_1}-\bar\bbF_T)\boldsymbol{\bar\beta}_N^{'}(\bbF_{t_2}-\bar\bbF_T)\Big)
E\Big((\bar u_{t_1+\tau}-\bar u^{(2)})(\bar u_{t_2+\tau}-\bar u^{(2)})\Big)\non
&=&O\Big(\frac{[N^{2\alpha_0-2}]}{N(T-\tau)}\Big),
\end{eqnarray*}
where the last equality uses (\ref{b6}) and the fact that via (3.10) in the main paper and (\ref{r1}):
\begin{eqnarray*}
E\Big(\boldsymbol{\bar\beta}_N^{'}(\bbF_{t}-\bar\bbF_T)\Big)^2
\leq \Big[[N^{2\alpha_0-2}]E\big(||\bar \bbv_N||^2\big)+n^{-2}\|\bbK_\rho\|\Big]E\big(||\bbF_t-\bar\bbF_T||^2\big)
=O([N^{2\alpha_0-2}]).
\end{eqnarray*}

Hence
\begin{eqnarray}
\frac{1}{T-\tau}\sum^{T-\tau}_{t=1}\Big(\boldsymbol{\bar\beta}_N^{'}(\bbF_t-\bar\bbF_T)(\bar u_{t+\tau}-\bar u)\Big)
=O_P\Big(\frac{[N^{\alpha_0-1}]}{(T-\tau)^{1/2}N^{1/2}}\Big).
\end{eqnarray}

In view of this, (\ref{d7}), (\ref{b3}) and (\ref{d6}), we can obtain (\ref{b5}). Similarly, one may obtain
\begin{eqnarray}\label{b13}
\frac{c_{N3}}{\boldsymbol{\bar\beta}_N^{'}\bbS_{\tau}\boldsymbol{\bar\beta}_N}=O_P\Big(\frac{(T-\tau)^{-1/2}N^{1/2-\alpha_0}}{\kappa_{\tau}}\Big).
\end{eqnarray}

%By (\ref{b12}), (\ref{b5}), (\ref{b13}) and condition (3.23) in the main paper, we have
%\begin{eqnarray}\label{t3}
%\sqrt{\min\big([N^{\alpha_0}],(T-\tau)\big)}\frac{c_{Ni}}{\boldsymbol{\bar\beta}_N^{'}\bbS_{\tau}\boldsymbol{\bar\beta}_N}=o_P(1), \ \ i=1,2,3.
%\end{eqnarray}

%Applying (\ref{t3}) and (\ref{g4}), we obtain
%\begin{eqnarray}\label{t5}
%&&\sqrt{\min\big([N^{\alpha_0}], (T-\tau)\big)}\ln\Big(1+\frac{C_N}{\boldsymbol{\bar\beta}_N^{'}\bbS_{\tau}\boldsymbol{\bar\beta}_N}\Big)^2\non
%&=&\sqrt{\min\big([N^{\alpha_0}], (T-\tau)\big)}\big(c_{NT}+o_P(c_{NT})\big)=o_P(1),
%\end{eqnarray}
%where $c_{NT}=\frac{2C_N}{\boldsymbol{\bar\beta}_N^{'}\bbS_{\tau}\boldsymbol{\bar\beta}_N}
%+\Big(\frac{C_N}{\boldsymbol{\bar\beta}_N^{'}\bbS_{\tau}\boldsymbol{\bar\beta}_N}\Big)^2$.

By (\ref{b14}), we have
\begin{eqnarray}\label{u10}
\kappa_{\tau}^2N^{4(\widehat\alpha_{\tau}-\alpha_0)}=(\bar\bbv_N^{'}\bbS_{\tau}\bar\bbv_N)^2\Big(1+\frac{R_N}{[N^{2\alpha_0-2}]\bar\bbv_N^{'}\bbS_{\tau}\bar\bbv_N}\Big)^2
\Big(1+\frac{C_N}{\boldsymbol{\bar\beta}^{'}_N\bbS_{\tau}\boldsymbol{\bar\beta}_N}\Big)^2.
\end{eqnarray}

From (\ref{d7}), (\ref{tt5}) and (\ref{b5})-(\ref{u10}), it follows that
\begin{eqnarray}\label{u11}
&&\frac{\kappa_{\tau}N^{4(\widehat\alpha_{\tau}-\alpha_0)}-\kappa_{\tau}}
{2(1+o_p(1))}\non
&=&\frac{\kappa_{\tau}^2N^{4(\widehat\alpha_{\tau}-\alpha_0)}-\kappa_{\tau}^2}
{\bar\bbv_N^{'}\bbS_{\tau}\bar\bbv_N+\kappa_{\tau}}\non
&=&\Big(\bar\bbv_N^{'}\bbS_{\tau}\bar\bbv_N-\kappa_{\tau}\Big)
\Big(1+\frac{R_N}{[N^{2\alpha_0-2}]\bar\bbv_N^{'}\bbS_{\tau}\bar\bbv_N}\Big)^2
\Big(1+\frac{C_N}{\boldsymbol{\bar\beta}^{'}_N\bbS_{\tau}\boldsymbol{\bar\beta}_N}\Big)^2\non
&&+\frac{\kappa_{\tau}^2}{\bar\bbv_N^{'}\bbS_{\tau}\bar\bbv_N+\kappa_{\tau}}
\Big[\Big(1+\frac{R_N}{[N^{2\alpha_0-2}]\bar\bbv_N^{'}\bbS_{\tau}\bar\bbv_N}\Big)^2
\Big(1+\frac{C_N}{\boldsymbol{\bar\beta}^{'}_N\bbS_{\tau}\boldsymbol{\bar\beta}_N}\Big)^2-1\Big]\non
&=&\Big(\bar\bbv_N^{'}\bbS_{\tau}\bar\bbv_N-\kappa_{\tau}\Big)\Big(1+O_P(\frac{1}{[N^{\alpha_0}]\kappa_{\tau}})+O_P(\frac{(T-\tau)^{-1/2}N^{1/2-\alpha_0}}{\kappa_{\tau}})\non
&&+O_P(\frac{\gamma_1(\tau)N^{1-2\alpha_0}}{\kappa_{\tau}})+O_P( \frac{(T-\tau)^{-1/2}N^{1-2\alpha_0}}{\kappa_{\tau}})\Big)\non
&&+\kappa_{\tau}\Big(O_P(\frac{1}{[N^{\alpha_0}]\kappa_{\tau}})+O_P(\frac{(T-\tau)^{-1/2}N^{1/2-\alpha_0}}{\kappa_{\tau}})\non
&&+O_P(\frac{\gamma_1(\tau)N^{1-2\alpha_0}}{\kappa_{\tau}})+O_P( \frac{(T-\tau)^{-1/2}N^{1-2\alpha_0}}{\kappa_{\tau}})\Big).
\end{eqnarray}

With  Lemma 1 in Appendix C, we obtain (\ref{jiti2})-(\ref{jiti2b}).
%\begin{eqnarray}
%\sqrt{\min([N^{\alpha_0}], T-\tau)}\frac{\kappa_0^2N^{4(\widehat\alpha-\alpha_0)}-\kappa_0^2}
%{\bar\bbv_N^{'}\bbS_{\tau}\bar\bbv_N+\boldsymbol{\mu}_v^{'}\boldsymbol{\Sigma}_{\tau}\boldsymbol{\mu}_v}\rightarrow\mathcal{N}(0,\sigma_0^2),
%\end{eqnarray}
%where $\sigma_0^2=\lim_{N,T\rightarrow\infty}\frac{\min([N^{\alpha_0}],T-\tau)}{[N^{\alpha_0}]}4\boldsymbol{\mu}_v^{'}\boldsymbol{\Sigma}_{\tau}\boldsymbol{\Sigma}_v\boldsymbol{\Sigma}_{\tau}\boldsymbol{\mu}_v
%+\lim_{N,T\rightarrow\infty}\frac{\min([N^{\alpha_0}],T-\tau)}{T-\tau}(\boldsymbol{\mu}_v^{'}\otimes\boldsymbol{\mu}_v^{'})\boldsymbol{\Omega}(\boldsymbol{\mu}_v\otimes\boldsymbol{\mu}_v)$.

\end{proof}

\subsection{Proof of Theorem \ref{yyr001}}
%Theorem \ref{yyr001} presents the consistency and asymptotic distribution of the joint estimator $(\widehat\alpha_2, \widehat\kappa)$. Next, we will provide the proof of Theorem \ref{yyr001}.
\begin{proof}
Recall that
\begin{eqnarray*}
\widetilde{\alpha}_{\tau}=\arg\max_{\alpha}\widehat{Q}_{NT}^{(1)}(\alpha,{\tau}), \ \ where \ \ \widehat{Q}_{NT}^{(1)}(\alpha,{\tau})=\frac{\big(\widehat q_1^{(1)}(\alpha,{\tau})+[N^{2\alpha}]\widehat q_2^{(1)}(\alpha,{\tau})\big)^2}{N^{(1)}(\alpha)},
\end{eqnarray*}
and
\begin{eqnarray*}
\widehat q^{(1)}_1(\alpha,{\tau})=\sum^{[N^{\alpha}]}_{n=1}n^3\widehat\sigma_n(\tau), \ \ \widehat q_2^{(1)}(\alpha,{\tau})=\sum^{N}_{n=[N^{2\alpha}]+1}n\widehat\sigma_n(\tau), \ \
N^{(1)}(\alpha)=\sum^{[N^{\alpha}]}_{n=1}n^3+\sum^{N}_{n=[N^{\alpha}]+1}\frac{[N^{4\alpha}]}{n}
\end{eqnarray*}
with
\begin{eqnarray}
\widehat\sigma_n(\tau)=\frac{1}{T-\tau}\sum^{T-\tau}_{t=1}(\bar x_{nt}-\bar x_n^{(1)})(\bar x_{n,t+\tau}-\bar x_n^{(2)}).
\end{eqnarray}
Define
\begin{eqnarray}\label{zhang1}
\sigma_n=\Big\{\begin{array}{cc}
           \kappa_{\tau}, & n\leq [N^{\alpha_0}] \\
           \kappa_{\tau}\frac{[N^{2\alpha_0}]}{n^2}, & n>[N^{\alpha_0}],
         \end{array}
\end{eqnarray}
and
\begin{eqnarray}\label{zhang1tidai}
\check{\sigma}_n=\Big\{\begin{array}{cc}
           \bar\bbv_n^{'}\bbS_{\tau}\bar\bbv_n, & n\leq [N^{\alpha_0}] \\
           \frac{[N^{2\alpha_0}]}{n^2}\bar\bbv_{N}^{'}\bbS_{\tau}\bar\bbv_N\frac{[N^{2\alpha_0}]}{n^2}, & n>[N^{\alpha_0}],
         \end{array}
\end{eqnarray}

It is easy to see that the true value $\alpha_0$ satisfies $\alpha_0=\arg\max_{\alpha}Q_N^{(1)}(\alpha,{\tau})$, where $Q_N^{(1)}(\alpha,{\tau})=\frac{\big(q_1^{(1)}(\alpha,{\tau})+[N^{\alpha}]q^{(1)}_2(\alpha,{\tau})\big)^2}{N^{(1)}(\alpha)}$, and $q^{(1)}_1(\alpha,{\tau})$ and $q^{(1)}_2(\alpha,{\tau})$ are respectively obtained from $\widehat q^{(1)}_1(\alpha,{\tau})$ and $\widehat q_2^{(1)}(\alpha,{\tau})$ with $\widehat\sigma_n(\tau)$ replaced by $\sigma_n$.

Similarly, there exists  $\check{\alpha}_{\tau}$ satisfies $\check{\alpha}_{\tau}=\arg\max_{\alpha}\check{Q}_N^{(1)}(\alpha,{\tau})$, where $\check{Q}_N^{(1)}(\alpha,{\tau})=\frac{\big(\check{q}_1^{(1)}(\alpha,{\tau})+[N^{\alpha}]\check{q}^{(1)}_2(\alpha,{\tau})\big)^2}{N^{(1)}(\alpha)}$, and $\check{q}^{(1)}_1(\alpha,{\tau})$ and $\check{q}^{(1)}_2(\alpha,{\tau})$ are respectively obtained from $\widehat q^{(1)}_1(\alpha,{\tau})$ and $\widehat q_2^{(1)}(\alpha,{\tau})$ with $\widehat\sigma_n(\tau)$ replaced by $\check{\sigma}_n$.

We first prove the CLT for $(\check{\alpha}_{\tau},\check{\kappa}_{\tau})$ where $\check{\kappa}_{\tau}=\frac{\check{q}_1^{(1)}(\check{\alpha}_{\tau},{\tau})+[N^{2\check{\alpha}_{\tau}}]\breve{q}^{(1)}_2(\check{\alpha}_{\tau},{\tau})}{N^{(1)}(\check{\alpha}_{\tau})}.$ And then we consider the difference between $(\check{\alpha}_{\tau},\check{\kappa}_{\tau})$ and $(\widetilde{\alpha}_{\tau},\widetilde{\kappa}_{\tau})$.

%\begin{eqnarray}
%\sigma_n(\tau)=cov(\bar x_{nt}, \bar x_{n,t+\tau}).
%\end{eqnarray}
 It follows that
\begin{eqnarray}\label{yry6}
&&\left|\check{Q}_{NT}^{(1)}(\alpha,{\tau})-\check{Q}_{NT}^{(1)}(\alpha_0,{\tau})\right|
\non
&=& \left|\check{Q}_{NT}^{(1)}(\alpha,{\tau})-Q_{NT}^{(1)}(\alpha,{\tau})-[\check{Q}_{NT}^{(1)}(\alpha_0,{\tau})-Q_{NT}^{(1)}(\alpha_0,{\tau})]+Q_{NT}^{(1)}(\alpha,{\tau})-Q_{NT}^{(1)}(\alpha_0,{\tau})\right|
\non
&\leq&2\max_{\alpha}\big|\check{Q}_{NT}^{(1)}(\alpha,{\tau})-Q_{NT}^{(1)}(\alpha,{\tau})\big|+ \left|Q_{NT}^{(1)}(\alpha,{\tau})-Q_{NT}^{(1)}(\alpha_0,{\tau})\right|.
\end{eqnarray}

We next evaluate the two terms on the right hand of (\ref{yry6}). Consider the first term on the right hand of (\ref{yry6}). Rewrite it as
 %we consider $\sup_{\alpha,\alpha_0}\big|\widetilde{Q}_{NT}^{(1)}(\alpha)-\widetilde{Q}_{N}^{(1)}(\alpha)\big|$:
\begin{eqnarray*}
&N^{(1)}(\alpha)\Big(\check{Q}_{NT}^{(1)}(\alpha,{\tau})-Q_{N}^{(1)}(\alpha,{\tau})\Big)\non
&=\Big(\check q_1^{(1)}(\alpha,{\tau})-q^{(1)}_1(\alpha,{\tau})+[N^{2\alpha}]\big(\check q_2^{(1)}(\alpha,{\tau})-q_2^{(1)}(\alpha,{\tau})\big)\Big)\nonumber \\
&\cdot\Big(\check q_1^{(1)}(\alpha,{\tau})+q_1^{(1)}(\alpha,{\tau})+[N^{2\alpha}]\big(\check q_2^{(1)}(\alpha,{\tau})+q_2^{(1)}(\alpha,{\tau})\big)\Big).
\end{eqnarray*}

A direct calculation which is similar to (\ref{zhang2})-(\ref{u10}), together with Lemma 1 in Appendix C, yields
\begin{eqnarray}\label{05n1}
\check\sigma_n(\tau)-\sigma_n
=\big\{\begin{array}{ccc}
  \bar\bbv_n^{'}\bbS_{\tau}\bar\bbv_n-\kappa_{\tau}= O_P(v_{nT}^{-1/2}), & n\leq[N^{\alpha_0}];\\
  \frac{[N^{2\alpha_0}]}{n^2}(\bar\bbv_{N}^{'}\bbS_{\tau}\bar\bbv_N-\kappa_{\tau})
  =O_P\big(\frac{[N^{2\alpha_0}]}{n^2}v_{NT}^{-1/2}\big), & n>[N^{\alpha_0}],
\end{array}
\end{eqnarray}
where $v_{nT}=\min(n,T-\tau)$ for $n\leq [N^{\alpha_0}]$, $v_{NT}=\min([N^{\alpha_0}],T-\tau)$

It follows that %With this, it is easy to derive the orders of each term appearing in $\widetilde{Q}_{NT}^{(1)}(\alpha)-\widetilde{Q}_{N}^{(1)}(\alpha)$ as follows.
%\begin{enumerate}
%\item
\begin{eqnarray*}
&&\check q_1^{(1)}(\alpha,{\tau})-q_1^{(1)}(\alpha,{\tau})=\sum^{[N^{\alpha}]}_{n=1}n^3\big(\check\sigma_n(\tau)-\sigma_n\big)\non
&=&\Big\{\begin{array}{cc}
   O_P\big(\sum^{[N^{\alpha}]}_{n=1}n^3v_{nT}^{-1/2}\big), & \alpha\leq\alpha_0; \\
   O_P\big(\sum^{[N^{\alpha_0}]}_{n=1}n^3v_{nT}^{-1/2}+\sum^{[N^{\alpha}]}_{n=[N^{\alpha_0}]+1}n^3\frac{[N^{2\alpha_0}]}{n^2}v_{NT}^{-1/2}\big), & \alpha>\alpha_0,
 \end{array}\non
&=&\Big\{\begin{array}{cc}
   O_P\big([N^{4\alpha}](v_{NT}^{(1)})^{-1/2}\big), & \alpha\leq\alpha_0; \\
   O_P\big([N^{4\alpha_0}]v_{NT}^{-1/2}+[N^{2\alpha_0}]\cdot|[N^{2\alpha}]-[N^{2\alpha_0}]|v_{NT}^{-1/2}\big), & \alpha>\alpha_0,
 \end{array}
\end{eqnarray*}
where $v_{NT}^{(1)}=\min([N^{\alpha}], T-\tau)$. Similarly, we have
%\item
\begin{eqnarray*}
&&[N^{2\alpha}]\big(\check q_2^{(1)}(\alpha,{\tau})-q_2^{(1)}(\alpha,{\tau})\big)
=[N^{2\alpha}]\sum^{N}_{n=[N^{\alpha}]+1}n\big(\check\sigma_n(\tau)-\sigma_n\big)\non
%&=&\Big\{\begin{array}{cc}
 %  [N^{\alpha}]\big(\sum^{[N^{\alpha_0}]}_{n=[N^{\alpha}]+1}nv_{nT}^{-1/2}+\sum^{N}_{n=[N^{\alpha_0}]+1}n\frac{[N^{2\alpha_0}]}{n^2}v_{NT}^{-1/2}\big), & \alpha\leq\alpha_0; \\
  % \sum^{N}_{n=[N^{\alpha}]+1}[N^{\alpha}]n\frac{[N^{2\alpha_0}]}{n^2}v_{NT}^{-1/2}, & \alpha>\alpha_0,
   %\end{array}\non
&=&\Big\{\begin{array}{cc}
   \ O_P\Big([N^{2\alpha_0+2\alpha}]v_{NT}^{-1/2}-[N^{4\alpha}](v_{NT}^{(1)})^{-1/2}+[N^{2\alpha+2\alpha_0}](\log N^{1-\alpha_0})v_{NT}^{-1/2}\Big), & \alpha\leq\alpha_0;\\
   \ O_P\Big([N^{2\alpha_0+2\alpha}](\log N^{1-\alpha})v_{NT}^{-1/2}\Big), & \alpha>\alpha_0.
   \end{array}
\end{eqnarray*}

It also follows from  (\ref{05n1}) that
\begin{eqnarray*}
&&\check q_1^{(1)}(\alpha,{\tau})+q_1^{(1)}(\alpha,{\tau})
=\sum^{[N^{\alpha}]}_{n=1}n^3\big(\check\sigma_n(\tau)+\sigma_n\big)\non
&=&\Big\{\begin{array}{cc}
    O_P\Big(\sum^{[N^{\alpha}]}_{n=1}n^3\big(\kappa_{\tau}+v_{nT}^{-1/2}\big)\Big), & \alpha\leq\alpha_0;\\
  O_P\Big(\sum^{[N^{\alpha_0}]}_{n=1}n^3\big(v_{NT}^{-1/2}+\kappa_{\tau}\big)+\sum^{[N^{\alpha}]}_{n=[N^{\alpha_0}]+1}n^3\frac{[N^{2\alpha_0}]}{n^2}v_{NT}^{-1/2}\Big), & \alpha>\alpha_0,
\end{array}\non
&=&\Big\{\begin{array}{cc}
    O_P\Big([N^{4\alpha}]\kappa_{\tau}+[N^{4\alpha}](v_{NT}^{(1)})^{-1/2}\Big), & \alpha\leq\alpha_0;\\
    O_P\Big([N^{4\alpha_0}](\kappa_{\tau}+v_{NT}^{-1/2})+([N^{2\alpha+2\alpha_0}]-[N^{4\alpha_0}])v_{NT}^{-1/2}\Big), & \alpha>\alpha_0
\end{array}
\end{eqnarray*}
and that
\begin{eqnarray*}
&&[N^{2\alpha}]\big(\check q_2^{(1)}(\alpha,{\tau})+q_2^{(1)}(\alpha,{\tau})\big)
=[N^{2\alpha}]\sum^{N}_{n=[N^{\alpha}]+1}n\big(\check\sigma_n(\tau)+\sigma_n\big)\non
%&=&\Big\{\begin{array}{cc}
 %  O_P\Big([N^{\alpha}]\big(\sum^{[N^{\alpha_0}]}_{n=[N^{\alpha}]+1}+\sum^{N}_{n=[N^{\alpha_0}]}\big)n\big(\widehat\sigma_n(\tau)+\sigma_n(\tau)\Big), & \alpha\leq\alpha_0;\\
  % O_P\Big([N^{2\alpha_0+\alpha}](v_{NT}^{-1/2}+\kappa_0)\log\frac{N}{[N^{\alpha}]}\Big), & \alpha>\alpha_0,
%\end{array}\non
&=&\Big\{\begin{array}{cc}
   O_P\Big(-[N^{4\alpha}]\big((v_{NT}^{(1)})^{-1/2}+\kappa_{\tau}\big)+[N^{2\alpha+2\alpha_0}]\big(1+\log N^{1-\alpha_0}\big)\big(v_{NT}^{-1/2}+\kappa_{\tau}\big)\Big), & \alpha\leq\alpha_0;\\
   O_P\Big([N^{2\alpha_0+2\alpha}]\big(\log N^{1-\alpha}\big)\big(v_{NT}^{-1/2}+\kappa_{\tau}\big)\Big), & \alpha>\alpha_0.
\end{array}
\end{eqnarray*}
%\end{enumerate}
Moreover,
\begin{equation}\label{05f2}
N^{(1)}(\alpha)=\sum^{[N^{\alpha}]}_{n=1}n^3+\sum^{N}_{n=[N^{\alpha}]+1}\frac{[N^{4\alpha}]}{n}\asymp\Big([N^{4\alpha}]+[N^{4\alpha}]\log (\frac{N}{[N^\alpha]})\Big).
\end{equation}

Summarizing the above derivations implies
\begin{eqnarray}\label{yry4}
\check{Q}^{(1)}_{NT}(\alpha)-Q^{(1)}_N(\alpha)=O_P\Big([N^{4\alpha_0}]v_{NT}^{-1/2}\kappa_{\tau}\log N^{1-\alpha}\Big).
\end{eqnarray}

Consider the second term on the right hand of (\ref{yry6}). To this end, write
\begin{equation}\label{05f3}
Q^{(1)}_N(\alpha,{\tau})-Q^{(1)}_N(\alpha_0,{\tau})
=\frac{1}{N^{(1)}(\alpha)}((a_1+a_2)(a_3+a_4))+\frac{N^{(1)}(\alpha_0)-N^{(1)}(\alpha)}{N^{(1)}(\alpha)N^{(1)}(\alpha_0)}a_5^2,
\end{equation}
where
\begin{eqnarray*}
&a_1=q_1^{(1)}(\alpha,{\tau})-q_1^{(1)}(\alpha_0,{\tau}), \ \ a_2=[N^{2\alpha}]\big(q^{(1)}_2(\alpha,{\tau})-q^{(1)}_2(\alpha_0,{\tau})\big),\ \ a_3=q_1^{(1)}(\alpha,{\tau})+q_1^{(1)}(\alpha_0,{\tau}), \non
&a_4=[N^{2\alpha}]\big(q^{(1)}_2(\alpha,{\tau})+q^{(1)}_2(\alpha_0,{\tau})\big) \ \mbox{and} \ a_5=q^{(1)}_1(\alpha_0,{\tau})+[N^{2\alpha_0}]q^{(1)}_2(\alpha_0,{\tau}).
\end{eqnarray*}

Straightforward calculations indicate that
\begin{eqnarray*}
&a_1=O\big(\big|[N^{4\alpha_0}]-[N^{4\alpha}]\big|\kappa_{\tau}\big), \ \ a_2=O\big([N^{2\alpha}]\big|[N^{2\alpha_0}]-[N^{2\alpha}]\big|\kappa_{\tau}\big), \ \ a_3=O\big(([N^{4\alpha_0}]+[N^{4\alpha}])\kappa_{\tau}\big), \\
& a_4=O\big([N^{2\alpha+2\alpha_0}](\log N^{1-\alpha})\kappa_{\tau}\big) \ \mbox{and} \  a_5=O\Big(\big([N^{4\alpha_0}]+[N^{4\alpha_0}](\log N^{1-\alpha_0})\big)\kappa_{\tau}\Big).
%\frac{1}{N^{(1)}(\alpha)}\asymp\frac{1}{[N^{4\alpha}](1+\log\frac{N}{[N^{\alpha}]})}, \\
\end{eqnarray*}

It follows from (\ref{05f2}) that
\bea
&& \Big|\frac{N^{(1)}(\alpha_0)-N^{(1)}(\alpha)}{N^{(1)}(\alpha_0)N^{(1)}(\alpha)}\Big|
\asymp\frac{\Big|([N^{4\alpha_0}]-[N^{4\alpha}])(1+\log\frac{N}{[N^{\alpha}]})-[N^{4\alpha_0}]\log[N^{\alpha_0-\alpha}]\Big|}
{([N^{4\alpha_0}]\log[N^{1-\alpha_0}])([N^{4\alpha}]\log[N^{1-\alpha}])}
\nonumber\\
&& \geq M \frac{\log N\Big|(1-\alpha)[N^{4\alpha}]-(1-\alpha_0)[N^{4\alpha_0}]\Big|}
{([N^{4\alpha_0}]\log[N^{1-\alpha_0}])([N^{4\alpha}]\log[N^{1-\alpha}])},
\nonumber
\eea
where and in what follows $M$ stands for some positive constant number which may be different values from line to line, to save notation.
From the above orders we conclude that the second term on the right hand of (\ref{05f3}) is the leading term, compared with its first term.
In view of this and the fact that $\alpha_0$ is the maximizer of $Q_N^{(1)}(\alpha,{\tau})$, we obtain from (\ref{05f3}) that
\begin{eqnarray}\label{yry5}
&& Q_N^{(1)}(\alpha_0,{\tau})-Q_N^{(1)}(\alpha,{\tau})
\geq  M\frac{\Big|N^{(1)}(\alpha_0)-N^{(1)}(\alpha)\Big|}{N^{(1)}(\alpha)N^{(1)}(\alpha_0)}a_5^2
\non
&&\geq M\frac{\log N\Big|(1-\alpha)[N^{4\alpha}]-(1-\alpha_0)[N^{4\alpha_0}]\Big|\Big([N^{4\alpha_0}]\kappa_{\tau}\Big)^2\log[N^{1-\alpha_0}]}
{[N^{4\alpha_0}](\log[N^{1-\alpha}])[N^{4\alpha}]}.
\end{eqnarray}

Note that (\ref{05n1}) holds uniformly in $\alpha$ so that (\ref{yry4}) is true when $\alpha$ is replaced with $\check{\alpha}_{\tau}$. Also (\ref{yry5}) holds when $\alpha$ is replaced with $\check{\alpha}_{\tau}$. We conclude from (\ref{yry6})
and the fact that $\check{\alpha}_{\tau}$ is the maximizer of $\check{Q}_{NT}^{(1)}(\alpha,{\tau})$ that
$$
\left|Q_N^{(1)}(\alpha_0,{\tau})-Q_N^{(1)}(\check{\alpha}_{\tau},{\tau})\right|\leq 2\max_{\alpha=\check{\alpha}_{\tau},\alpha_0}\big|\check{Q}_{NT}^{(1)}(\alpha,{\tau})-Q_{N}^{(1)}(\alpha,{\tau})\big|,
$$
which, together with (\ref{yry4}) and (\ref{yry5}), yields
\begin{eqnarray}\label{160531(1)}
\big|(1-\alpha_0)-(1-\check{\alpha}_{\tau})[N^{4(\check{\alpha}_{\tau}-\alpha_0)}]\big|=O_P\left(v_{NT}^{-1/2}\kappa_{\tau}^{-1}\right).
\end{eqnarray}
\medskip

We next consider the consistency of $\check{\kappa}_{\tau}$. It is easy to see that
\begin{eqnarray}\label{160531(30)}
\check{\kappa}_{\tau}=\frac{\check q_1^{(1)}(\check{\alpha}_{\tau},{\tau})+[N^{2\check{\alpha}_{\tau}}]\check q_2^{(1)}(\check{\alpha}_{\tau},{\tau})}{N^{(1)}(\check{\alpha}_{\tau})} \ \mbox{and} \
\kappa_{\tau}=\frac{q_1^{(1)}(\alpha_0,{\tau})+[N^{2\alpha_0}]q^{(1)}_2(\alpha_0,{\tau})}{N^{(1)}(\alpha_0)}.
\end{eqnarray}

It follows that
\begin{eqnarray}\label{yry8}
\check\kappa_{\tau}-\kappa_{\tau}=\frac{1}{N^{(1)}(\check{\alpha}_{\tau})}(b_1+b_2)+\frac{N^{(1)}(\alpha_0)-N^{(1)}(\check{\alpha}_{\tau})}{N^{(1)}(\check{\alpha}_{\tau})N^{(1)}(\alpha_0)}b_3,
\end{eqnarray}
where $b_1=\check{q}_1^{(1)}(\check{\alpha}_{\tau},{\tau})-q_1^{(1)}(\alpha_0,{\tau})$, $b_2=[N^{2\check{\alpha}_{\tau}}]\check{q}_2^{(1)}(\check{\alpha}_{\tau},{\tau})-[N^{2\alpha_0}]q_2^{(1)}(\alpha_0,{\tau})$ and $b_3=q_1^{(1)}(\alpha_0,{\tau})+[N^{2\alpha_0}]q_2^{(1)}(\alpha_0,{\tau})$.

The orders of $b_i$, $i=1,2,3$ are listed below.
\begin{eqnarray*}
&&b_1=O_P\Big(|[N^{4\alpha_0}]-[N^{4\check{\alpha}_{\tau}}]|\kappa_{\tau}+\min\{[N^{4\check{\alpha}_{\tau}}],[N^{4\alpha_0}]\}v_{NT}^{-1/2}\Big),
\nonumber\\ 
&&b_2=O_P\Big(|[N^{2\check{\alpha}_{\tau}}]-[N^{2\alpha_0}]|\cdot\big(v_{NT}^{-1/2}+\kappa_{\tau}\big)\cdot\big([N^{2\check{\alpha}_{\tau}}]+[N^{2\alpha_0}]\log N^{1-\alpha_0}\big)+[N^{4\alpha_0}](\log N^{1-\alpha_0})v_{NT}^{-1/2}\Big),
\nonumber\\
&& b_3=O_P\Big([N^{4\alpha_0}]\kappa_{\tau}+[N^{4\alpha_0}]\cdot(\log N^{1-\alpha_0})\kappa_{\tau}\Big). 
\nonumber
\end{eqnarray*}

We then conclude from these orders, (\ref{yry8}) and (\ref{05f2}) that
\begin{equation}\label{yry10b}
\check\kappa_{\tau}-\kappa_{\tau} =O_P(v_{NT}^{-1/2}).
\end{equation}

The convergence rate of $(\check{\alpha}_{\tau}, \check{\kappa_{\tau}})$ immediately follows. The next aim is to derive an asymptotic distribution for the joint estimator $(\check{\alpha}_{\tau}, \check{\kappa_{\tau}})$. In view of (\ref{160531(1)}) and (\ref{yry10b}), it is enough to consider those $\alpha$ and $\kappa$ within a compact interval $D(C)$:
\begin{eqnarray}\label{05nn2}
D(C)=\left\{(\alpha,\kappa): \quad \label{c1} \alpha=\alpha_0+\frac{1}{2}\frac{\ln(1+s_1\kappa_{\tau}^{-1}v_{NT}^{-1/2})}{\ln N},\quad
\kappa=\kappa_{\tau}+s_2v_{NT}^{-1/2}\right\},
\end{eqnarray} where $|s_j|\leq C, j=1,2$ with $C$ being some positive constant independent of $n$. Consider that
$$
\check{Q}_{NT}^{(1)}(\alpha,\kappa)=\sum^{[N^{\alpha}]}_{n=1}n^3\Big(\check{\sigma}_n(\tau)-\kappa\Big)^2+\sum^{N}_{n=[N^{\alpha}]+1}n^3\Big(\check{\sigma}_n(\tau)-\frac{[N^{2\alpha}]}{n^2}\kappa\Big)^2
$$
and $(\check{\alpha}_{\tau},\check{\kappa}_{\tau})=\arg\min_{\alpha,\kappa}\check{Q}_{NT}^{(1)}(\alpha,\kappa)$.

Without loss of generality, we assume that $\alpha\leq\alpha_0$ below. First, we simplify $\Big(\check{Q}_{NT}^{(1)}(\alpha,\kappa)-\check{Q}_{NT}^{(1)}(\alpha_0,\kappa_{\tau})\Big)$. To this end, write
\begin{eqnarray*}
&&\check{Q}_{NT}^{(1)}(\alpha,\kappa)-\check{Q}_{NT}^{(1)}(\alpha_0,\kappa_{\tau})
=\sum^{[N^{\alpha}]}_{n=1}n^3\Big(\big(\check{\sigma}_n(\tau)-\kappa\big)^2-\big(\check{\sigma}_n(\tau)-\kappa_{\tau}\big)^2\Big)\non
&&+\sum^{N}_{n=[N^{\alpha}]+1}n^3\Big(\big(\check{\sigma}_n(\tau)-\frac{[N^{2\alpha}]}{n^2}\kappa\big)^2-\big(\check{\sigma}_n(\tau)-\frac{[N^{2\alpha}]}{n^2}\kappa_{\tau}\big)^2\Big)
+\Big(\sum^{[N^{\alpha}]}_{n=1}-\sum^{[N^{\alpha_0}]}_{n=1}\Big)n^3\big(\check{\sigma}_n(\tau)-\kappa_{\tau}\big)^2\non
&&+\sum^{N}_{n=[N^{\alpha}]+1}n^3\big(\check{\sigma}_n(\tau)-\frac{[N^{2\alpha}]}{n^2}\kappa_{\tau}\big)^2-\sum^{N}_{n=[N^{\alpha_0}]+1}n^3\big(\check{\sigma}_n(\tau)-\frac{[N^{2\alpha_0}]}{n^2}\kappa_{\tau}\big)^2
=\sum^{8}_{j=1}A_j,
\end{eqnarray*}
where
$$
A_1=\sum^{[N^{\alpha}]}_{n=1}2n^3\check{\sigma}_n(\tau)(\kappa_{\tau}-\kappa);\ \
A_2=\sum^{[N^{\alpha}]}_{n=1}n^3(\kappa^2-\kappa_{\tau}^2);\
A_3=\sum^{N}_{n=[N^{\alpha}]+1}2[N^{2\alpha}]n\check{\sigma}_n(\tau)(\kappa_{\tau}-\kappa);
$$$$
A_4=\sum^{N}_{n=[N^{\alpha}]+1}\frac{[N^{4\alpha}]}{n}(\kappa^2-\kappa_{\tau}^2);
A_5=\sum^{[N^{\alpha_0}]}_{n=[N^{\alpha}]+1}-n^3\big(\check{\sigma}_n(\tau)-\kappa_{\tau}\big)^2;\ \
A_6=\sum^{[N^{\alpha_0}]}_{n=[N^{\alpha}]+1}n^3\big(\check{\sigma}_n(\tau)-\frac{[N^{2\alpha}]}{n^2}\kappa_{\tau}\big)^2;
$$$$
A_7=\sum^{N}_{n=[N^{\alpha_0}]+1}\frac{[N^{4\alpha}]-[N^{4\alpha_0}]}{n}\kappa_{\tau}^2;\ \
A_8=\sum^{N}_{n=[N^{\alpha_0}]+1}2n\kappa_{\tau}\check{\sigma}_n(\tau)([N^{2\alpha_0}]-[N^{2\alpha}]).
$$

The orders of $A_j$, $j=1,\cdots,8$, are evaluated below. %Recalling (\ref{05n1}), it follows that
%\begin{eqnarray}\label{05n3}
%\widehat\sigma_n(\tau)=\left\{
%  \begin{array}{ll}
%   \bar\bbv_n^{'}\bbS_{\tau}\bar\bbv_n+O_p(\frac{1}{n}), & n< [N^{\alpha_0}]; \\
%    \frac{[N^{2\alpha_0}]}{n^2}\bar\bbv_N^{'}\bbS_{\tau}\bar\bbv_N+O_p\Big(\frac{1}{n}\Big), & n\geq [N^{\alpha_0}].
%  \end{array}
%\right.
%\end{eqnarray}

(\ref{05n1}), together with the fact that $\alpha,\kappa \in D(C)$, implies
%Although there are $8$ terms, their orders could be derived directly and then leading terms are naturally obtained. The order of each $A_j$, $j=1,2,\ldots,8$ is listed below.
\begin{eqnarray*}
&&A_1=O_P([N^{4\alpha}]\kappa_{\tau}|\kappa_{\tau}-\kappa|), \ \ A_2=O([N^{4\alpha}]\kappa_{\tau}|\kappa-\kappa_{\tau}|), \non
&&A_3=O_P([N^{2\alpha+2\alpha_0}](\log N^{1-\alpha})\kappa_{\tau}|\kappa-\kappa_{\tau}|), \ \
A_4=O([N^{4\alpha}](\log N)\kappa_{\tau}|\kappa-\kappa_{\tau}|),\non
&&A_5=O_P\big(|[N^{4\alpha}]-[N^{4\alpha_0}]|v_{NT}^{-1}\big), \ \
A_6=O_P\big(|[N^{2\alpha_0}]\kappa_{\tau}-[N^{2\alpha}]\kappa|^2\log N^{\alpha_0-\alpha}\big),\non
&&A_7=O\big(|[N^{4\alpha}]-[N^{4\alpha_0}]|(\log N)\kappa_{\tau}^2\big),\ \
A_8=O_P\big(|[N^{2\alpha_0}]-[N^{2\alpha}]|\cdot[N^{2\alpha_0}](\log N^{1-\alpha_0})\kappa_{\tau}^2\big),
\end{eqnarray*}
with $v_{NT}=\min([N^{\alpha_0}], T-\tau)$.

From the above orders and (\ref{c1}), we see that $A_3, A_4, A_7$ and $A_8$ are the leading terms. We then conclude
\begin{eqnarray}\label{yyr2}
&&\check{Q}_{NT}^{(1)}(\alpha,\kappa)-\check{Q}_{NT}^{(1)}(\alpha_0,\kappa_0)=(A_3+A_{8})+(A_4+A_7)+O_P(\delta^{(1)}_{NT})\non
&=&\sum^{N}_{n=[N^{\alpha_0}]+1}2n\check{\sigma}_N(\tau)\big([N^{2\alpha_0}]\kappa_{\tau}-[N^{2\alpha}]\kappa\big)
+\sum^{N}_{n=[N^{\alpha_0}]+1}\frac{[N^{4\alpha}]\kappa^2-[N^{4\alpha_0}]\kappa_{\tau}^2}{n}+O_P(\delta^{(1)}_{NT})\non
&=&\Big([N^{2\alpha_0}]\kappa_{\tau}-[N^{2\alpha}]\kappa\Big)
\Big(\sum^{N}_{n=[N^{\alpha_0}]+1}\big(2n\check{\sigma}_n(\tau)-\frac{2[N^{2\alpha_0}]\kappa_{\tau}}{n}\big)\non
& + & \sum^{N}_{n=[N^{\alpha_0}]+1}\frac{[N^{2\alpha_0}]\kappa_{\tau}-[N^{2\alpha}]\kappa}{n}\Big)+O_P(\delta^{(1)}_{NT}),
\end{eqnarray}
where $\delta^{(1)}_{NT}=o_P(A_3+A_8+A_4+A_7)$, uniformly on the compact interval $D(C)$.

Recalling (\ref{zhang2})-(\ref{u10}), lemma \ref{lem2} and (\ref{05n1}), when $n \geq [N^{\alpha_0}]$,
%\begin{eqnarray}\label{05n1zhang}
%\hat{\sigma}_n(\tau)&=&\frac{[N^{2\alpha_0}]}{n^2}\bar\bbv_{N}^{'}\bbS_{\tau}\bar\bbv_N\Big(1+O_P(\frac{(T-\tau)^{-1/2}n^{1/2}N^{-\alpha_0}}{\kappa_0})\non
%&&+O_P(\frac{\gamma_1(\tau)nN^{-2\alpha_0}}{\kappa_0})+O_P( \frac{(T-\tau)^{-1/2}nN^{-2\alpha_0}}{\kappa_0})\Big)
%\end{eqnarray}

Then
\begin{eqnarray}\label{yyr10}
\sum^{N}_{n=[N^{\alpha_0}]+1}2n\check{\sigma}_n(\tau)=\bar\bbv_N^{'}\bbS_{\tau}\bar\bbv_N\Big(\sum^{N}_{n=[N^{\alpha_0}]+1}\frac{2[N^{2\alpha_0}]}{n}\Big).
\end{eqnarray}

Let
\begin{eqnarray}\label{160531(5)}
g_N(s_1,s_2)=v_{NT}\frac{\check{Q}_{NT}^{(1)}(\alpha,\kappa)-\check{Q}_{NT}^{(1)}(\alpha_0,\kappa_{\tau})}{[N^{2\alpha_0}]\sum^{N}_{n=[N^{\alpha_0}]+1}\frac{2[N^{2\alpha_0}]}{n}},
\end{eqnarray}
where $s_1$ and $s_2$ are defined in (\ref{05nn2}).

By (\ref{yyr2}) and (\ref{yyr10}) we have
\begin{equation}\label{yyr11}
g_N(s_1,s_2)
=r_{NT}v_{NT}^{1/2}\Big(\bar\bbv_N^{'}\bbS_{\tau}\bar\bbv_N-\kappa_{\tau}\Big)
+\frac{1}{2}r_{NT}^2
+O_P(v_{NT}d^{(1)}_{NT}),
\end{equation}
where $r_{NT}=v_{NT}^{1/2}\frac{[N^{2\alpha}]\kappa-[N^{2\alpha_0}]\kappa_{\tau}}{[N^{2\alpha_0}]}$ and $d^{(1)}_{NT}=\frac{\delta^{(1)}_{NT}+\sum^{N}_{n=[N^{\alpha_0}]+1}2n(\frac{[N^{\alpha_0}]}{n^2})}{[N^{2\alpha_0}]\sum^{N}_{n=[N^{\alpha_0}]+1}\frac{2[N^{2\alpha_0}]}{n}}$.

With notation $\xi=\ln(1+s_1\kappa_{\tau}^{-1}v_{NT}^{-1/2})/\ln N$, we obtain
$$
\ln N^\xi=\ln(1+s_1\kappa_{\tau}^{-1}v_{NT}^{-1/2}),
$$
which implies $N^\xi=1+s_1\kappa_{\tau}^{-1}v_{NT}^{-1/2}$. This, together with (\ref{05nn2}), ensures
\begin{eqnarray}\label{20160623}
[N^{2\alpha-2\alpha_0}]=N^{\xi}=1+s_1\kappa_{\tau}^{-1}v_{NT}^{-1/2}, \ \ \ \ \kappa-\kappa_{\tau}=s_2v_{NT}^{-1/2}.
\end{eqnarray}
By (\ref{20160623}) and the definition of $r_{NT}$, we have
\begin{eqnarray}\label{05n4}
r_{NT}
&=&v_{NT}^{1/2}\frac{[N^{2\alpha}]\kappa-[N^{2\alpha}]\kappa_{\tau}}{[N^{2\alpha_0}]}
+v_{NT}^{1/2}\frac{[N^{2\alpha}]\kappa_{\tau}-[N^{2\alpha_0}]\kappa_{\tau}}{[N^{2\alpha_0}]}\non
&=&v_{NT}^{1/2}[N^{2\alpha-2\alpha_0}](\kappa-\kappa_{\tau})
+v_{NT}^{1/2}\kappa_{\tau}([N^{2\alpha-2\alpha_0}]-1)\non
&=&s_1+s_2+s_1s_2\kappa_{\tau}^{-1}v_{NT}^{-1/2}.
\end{eqnarray}

Here we would like to point out that the last term of (\ref{yyr11}) converges to zero in probability uniformly in $s_1,s_2\in [-C,C]$, in view of (\ref{yyr2}) and the tightness in $s_1$ and $s_2$ is straightforward due to the structure of $r_{NT}$ in (\ref{05n4}).

Let $\check{s}_1$ and $\check{s}_2$ be $s_1$ and $s_2$ respectively with $(\alpha, \kappa)$ replaced by $(\check{\alpha}_{\tau}, \check{\kappa}_{\tau})$. By the definition of $(\check{\alpha}_{\tau}, \check{\kappa}_{\tau})$ in (3.9) of the main paper, we know that $g_N(s_1, s_2)$ takes the minimum value at $(\check{s}_1, \check{s}_2)$. %On the other hand, it follows from (\ref{yyr11}) and (\ref{05n4}) that $g_N(s_1, s_2)$ takes minimum value at $\widetilde{r}_{NT}:=-v_{NT}^{1/2}\left(\bar\bbv_N^{'}\bbS_{\tau}\bar\bbv_N-\kappa_0\right)$.
Moreover, from (\ref{yyr11}) and (\ref{05n4}) a key observation is that
\begin{eqnarray}\label{160531(10)}
\check{s}_1+\check{s}_2=-v_{NT}^{1/2}\left(\bar\bbv_N^{'}\bbS_{\tau}\bar\bbv_N-\kappa_{\tau}\right)+o_p(1)
\end{eqnarray}
(one can verify this by taking derivative with respective to $s_1$ and $s_2$ in (\ref{yyr11})).

%We then conclude from (\ref{05n4}), (\ref{yyr11}) and Lemma 1 that for any $s_1,s_2\in [-C,C]$,
%$$
%g_N(s_1,s_2)
%\stackrel{d}\rightarrow g(s_1,s_2)=(s_1+s_2)Z+\frac{1}{2}(s_1+s_2)^2,
%$$
%where $Z$ is a normal random variable with mean $0$ and variance $\sigma_0^2$, which is the asymptotic distribution derived in Lemma 1.

%%%%%%%%%%%%%%%%%%%%%%%%%%%%%%%%%%%%%%%
%%%%%%%%%%%%%%%%%%%%%%%%%%%%%%%%%%%%%%%
Next, we analyze $\check{s}_2$. Recall that $\check{s}_2=v_{NT}^{1/2}\left(\check{\kappa}_{\tau}-\kappa_{\tau}\right)$.
By the definition of $\check{\kappa}_{\tau}$ in (3.7) of the main paper, we first provide the leading term of $\check{\kappa}_{\tau}$.
It is easy to see that the leading terms of the numerator and the denominator of $\check{\kappa}_{\tau}$ are $[N^{2\check{\alpha}_{\tau}}]\check{q}_2^{(1)}(\check{\alpha}_{\tau})$ and $\sum^{N}_{n=[N^{\check{\alpha}_{\tau}}]+1}\frac{[N^{4\check{\alpha}_{\tau}}]}{n}$ respectively.

Recalling (\ref{yry8}), we have the following evaluations:
\begin{eqnarray}\label{160531(20)zhang}
b_1= \check{q}_1^{(1)}(\check{\alpha}_{\tau})-q_1^{(1)}(\alpha_0)=O_P\Big(|[N^{4\alpha_0}]-[N^{4\check{\alpha}_{\tau}}]|\kappa_{\tau}+\min\{[N^{4\check{\alpha}_{\tau}}],[N^{4\alpha_0}]\}v_{NT}^{-1/2}\Big),
\end{eqnarray}

\begin{eqnarray}\label{160531(20)zhang4}
b_3=O_P\Big([N^{4\alpha_0}]\kappa_{\tau}+[N^{4\alpha_0}]\cdot(\log N^{1-\alpha_0})\kappa_{\tau}\Big),
\end{eqnarray}

\begin{eqnarray}\label{160531(20)zhang2}
b_2=[N^{2\check{\alpha}}_{\tau}]\check{q}_2^{(1)}(\check{\alpha}_{\tau})-[N^{2\alpha_0}]\check{q}_2^{(1)}(\alpha_0)+[N^{2\alpha_0}]\check{q}_2^{(1)}(\alpha_0)-[N^{2\alpha_0}]q_2^{(1)}(\alpha_0),
\end{eqnarray}

\begin{eqnarray}\label{160531(20)zhang3}
&&[N^{2\alpha_0}]\check{q}_2^{(1)}(\alpha_0)-[N^{2\alpha_0}]q_2^{(1)}(\alpha_0)\non
&=&\sum^{N}_{n=[N^{\alpha_0}]+1}n\cdot[N^{2\alpha_0}]\cdot\left(\check{\sigma}_{n}(\tau)-\sigma_n \right)\non
&=&\left(\bar{\bbv}_N^{'}\bbS_{\tau}\bar{\bbv}_N-\kappa_{\tau}\right)\sum^{N}_{n=[N^{\alpha_0}]+1}\frac{[N^{4\alpha_0}]}{n}
\end{eqnarray}

\begin{eqnarray}\label{160531(20)}
&&[N^{2\check{\alpha}_{\tau}}]\check{q}_2^{(1)}(\check{\alpha}_{\tau})-[N^{2\alpha_0}]\check{q}_2^{(1)}(\alpha_0)\non
&=&\left([N^{2\check{\alpha}_{\tau}}]-[N^{2\alpha_0}]\right)\check{q}_2^{(1)}\left(\check{\alpha}_{\tau}\right)
+[N^{2\alpha_0}]\left(\check{q}_2^{(1)}\left(\check{\alpha}_{\tau}\right)-\check{q}_2^{(1)}(\alpha_0)\right)\non
&=&\left([N^{2\check{\alpha}_{\tau}}]-[N^{2\alpha_0}]\right)\cdot\sum^{N}_{n=[N^{\alpha_0}]+1}n\check{\sigma}_n(\tau)\non
&&+\left([N^{2\check{\alpha}_{\tau}}]-[N^{2\alpha_0}]\right)\cdot\sum^{N}_{n=[N^{\alpha_0}]+1}n\check{\sigma}_n(\tau)
+[N^{2\alpha_0}]\sum^{[N^{\alpha_0}]}_{n=[N^{\check{\alpha}_{\tau}}]+1}n\check{\sigma}_n(\tau)\non
&=&\left([N^{2\check{\alpha}_{\tau}}]-[N^{2\alpha_0}]\right)\cdot\left(\sum^{N}_{n=[N^{\alpha_0}]+1}n\check{\sigma}_n(\tau)\right)
\cdot\left(1+o_P(1)\right)\non
&=&O_P\Big(\kappa_{\tau}|[N^{2\check{\alpha}_{\tau}}]-[N^{2\alpha_0}]|N^{2\alpha_0}\log N \Big)
\end{eqnarray}
and
\begin{eqnarray}\label{160531(21)}
&&[N^{4\widetilde{\alpha}}]\cdot\sum^{N}_{n=[N^{\widetilde{\alpha}}]+1}\frac{1}{n}
-[N^{4\alpha_0}]\cdot\sum^{N}_{n=[N^{\alpha_0}]+1}\frac{1}{n}\non
&=&[N^{4\widetilde{\alpha}}]\cdot\sum^{[N^{\alpha_0}]}_{n=[N^{\widetilde{\alpha}}]+1}\frac{1}{n}
+\left([N^{4\widetilde{\alpha}}]-[N^{4\alpha_0}]\right)\cdot\sum^{N}_{n=[N^{\alpha_0}]+1}\frac{1}{n}\non
&=&\left([N^{4\widetilde{\alpha}}]-[N^{4\alpha_0}]\right)\cdot\left(\ln\frac{N}{[N^{\alpha_0}]}\right)\left(1+o(1)\right).
\end{eqnarray}

It follows from (\ref{160531(20)zhang})-(\ref{160531(21)}) and (\ref{160531(30)})-(\ref{yry8})that
\begin{eqnarray}\label{160526(1)}
v_{NT}^{1/2}(\check{\kappa}_{\tau}-\kappa_{\tau})
&=&v_{NT}^{1/2}\left(\bar{\bbv}_N^{'}\bbS_{\tau}\bar{\bbv}_N-\kappa_{\tau}\right)+o_p(1).
\end{eqnarray}
%where the second equality uses (\ref{05n3}).

Then we can get the CLT of $(\check{\alpha}_{\tau},\check{\kappa}_{\tau})$ with (\ref{160531(10)}),(\ref{160526(1)}) and Lemma 1.

\begin{eqnarray}\label{yr40zhongtu}
\left(
\begin{array}{c}
\check{\kappa}_{\tau}v_{NT}^{1/2}(N^{2(\check{\alpha}_{\tau}-\alpha_0)}-1)\\
v_{NT}^{1/2}(\check{\kappa}_{\tau}-\kappa_{\tau})\\
\end{array}
\right)
\stackrel{d}{\longrightarrow} \mathcal{N}\left(\left(
\begin{array}{c}
0\\
0\\
\end{array}
\right), \left(
\begin{array}{cc}
4\sigma_{\tau}^2 & -2\sigma_{\tau}^2\\
-2\sigma_{\tau}^2 & \sigma_{\tau}^2
\end{array}
\right)\right),
\end{eqnarray}
where $\kappa_{\tau}$ is defined in (\ref{160530}) and  $\sigma_{\tau}^2$ is defined in (\ref{163031}).

The difference between $(\check{\alpha}_{\tau},\check{\kappa}_{\tau})$ and $(\widetilde{\alpha}_{\tau},\widetilde{\kappa}_{\tau})$ can be obtained by the similar idea and the fact
\begin{eqnarray}\label{05n1cn}
\hat{\sigma}_n-\check{\sigma}_n=C_n
=\big\{\begin{array}{ccc}
  O_P(\frac{\gamma_1(\tau)}{n})+O_P(\frac{1}{n^{1/2}(T-\tau)^{1/2}}), & n\leq[N^{\alpha_0}];\\
   O_P(\frac{\gamma_1(\tau)}{n})+O_P(\frac{1}{n(T-\tau)^{1/2}})+O_P(\frac{[N^{\alpha_0}]}{n^{3/2}(T-\tau)^{1/2}}), & n>[N^{\alpha_0}].
\end{array}
\end{eqnarray}
Then we complete the proof.

\end{proof}

\subsection{Proof of Proposition \ref{yyr001fujiaguji}}
\begin{proof}
Since $\widetilde{\alpha}_{\tau}$ is a consistent estimator of $\alpha_0$, we consider $n=[N^{\alpha_0}]$. Write
\begin{eqnarray}\label{a7gujisigma03a}
\widehat\sigma_{i,T}(\tau)&=&\frac{1}{(T-\tau)}\sum_{t=1}^{T-\tau}(\mathbf{\beta}_i'(\mathbf{F_t}-\mathbf{\bar{F}_T})+u_{it}-\bar{u}_{iT})\non
&&((\mathbf{F_{t+\tau}}-\mathbf{\bar{F}_{T+\tau}})'\mathbf{\bar{\beta}}_n+\bar{u}_{n,t+\tau}-\bar{u}_{n,T+\tau})\non
&=&\mathbf{\beta}_i'\mathbf{S}_{\tau}\mathbf{\bar{\beta}}_n+\tilde{c}_{i,T}.
\end{eqnarray}
The main part of $\widehat\sigma_{i,T}(\tau)$ is $\mathbf{\beta}_i'\mathbf{S}_{\tau}\mathbf{\bar{\beta}}_n$. It follows that

\begin{eqnarray}\label{a7gujisigma04a}
\widehat\sigma_{i,T}(\tau)-\widehat\sigma_{T}(\tau)=(\mathbf{\beta}_i-\mathbf{\bar{\beta}}_n)'\mathbf{S}_{\tau}\mathbf{\bar{\beta}}_n+\tilde{c}_{i,T}-\frac{1}{n}\sum_{i=1}^n\tilde{c}_{i,T}.
\end{eqnarray}

We can conclude that
\begin{eqnarray}\label{a7gujisigma04a4}
\sum_{i=1}^{n}(\widehat\sigma_{i,T}(\tau)-\widehat\sigma_{T}(\tau))^2&=&\sum_{i=1}^{n}((\mathbf{\beta}_i-\mathbf{\bar{\beta}}_n)'\mathbf{S}_{\tau}\mathbf{\bar{\beta}}_n)^2(1+o_p(1))\non
&=&(n-1)\boldsymbol{\mu}_v^{'}\boldsymbol{\Sigma}_{\tau}^{'}\boldsymbol{\Sigma}_v\boldsymbol{\Sigma}_{\tau}\boldsymbol{\mu}_v(1+o_p(1)).
\end{eqnarray}
This implies  (\ref{a7gujisigma05}).

Note that
\begin{eqnarray}\label{a7gujisigma02aquan}
(T-\tau)E(\widehat\sigma_{n}(\tau)-\mathbf{\bar{\beta}}_n'\mathbf{\Sigma}_{\tau}\mathbf{\bar{\beta}}_n)^2=(\boldsymbol{\mu}_v^{'}\otimes\boldsymbol{\mu}_v^{'})\boldsymbol{\Omega}(\boldsymbol{\mu}_v\otimes\boldsymbol{\mu}_v)(1+o_p(1)).
\end{eqnarray}
Then
\begin{eqnarray}\label{a7gujisigma02a1quan}
\frac{1}{T-\tau}E(\sum_{t=1}^{T-\tau}(\widehat\sigma_{n,t}(\tau)-\mathbf{\bar{\beta}}_n'\mathbf{\Sigma}_{\tau}\mathbf{\bar{\beta}}_n))^2=(\boldsymbol{\mu}_v^{'}\otimes\boldsymbol{\mu}_v^{'})\boldsymbol{\Omega}(\boldsymbol{\mu}_v\otimes\boldsymbol{\mu}_v)(1+o_p(1))
\end{eqnarray}

and
\begin{eqnarray}\label{a7gujisigma02a2quan}
\hat{\sigma}_{0,T}^2&=&\frac{1}{T-\tau-1}\sum_{t=1}^{T-\tau}(\widehat\sigma_{n,t}(\tau)-\widehat\sigma_{n}(\tau))^2+\non
&&\sum_{j=1}^{l}\frac{2}{T-\tau-j}\sum_{t=1}^{T-\tau-j}(\widehat\sigma_{n,t}(\tau)-\widehat\sigma_{n}(\tau))(\widehat\sigma_{n,t+j}(\tau)-\widehat\sigma_{n}(\tau))\non
&=&(\boldsymbol{\mu}_v^{'}\otimes\boldsymbol{\mu}_v^{'})\boldsymbol{\Omega}(\boldsymbol{\mu}_v\otimes\boldsymbol{\mu}_v)(1+o_p(1))+O_P(\frac{l}{N^{\alpha-1/2}(T-\tau)^{1/2}})\non
&&+O_P(\sum_{j=l+1}^{\infty}\{\gamma(1,j)^2+\gamma(1,\tau+j)\gamma(1,|\tau-j|)\}).
\end{eqnarray}
Note that
\begin{eqnarray}\label{a7gujisigma02a3}
\sum_{j=l+1}^{\infty}\gamma(1,j)^2=O(\sum_{j=l+1}^{\infty}\sum_{i=0}^{\infty}|b_i||b_{i+j}|)=O(\sum_{i=0}^{\infty}|b_i|\sum_{k=i+l+1}^{\infty}|b_{k}|)=O(\sum_{k=i+l+1}^{\infty}|b_{k}|)
\end{eqnarray}
and
\begin{eqnarray}\label{a7gujisigma02a4}
\sum_{j=l+1}^{\infty}\gamma(1,\tau+j)\gamma(0,|\tau-j|)=O(\sum_{j=l+1}^{\infty}\sum_{i=0}^{\infty}|b_i||b_{\tau+i+j}|)=O(\sum_{k=i+\tau+l+1}^{\infty}|b_{k}|).
\end{eqnarray}
Then when $l \rightarrow \infty$, $l=o(T)$ and $l=o(N^{\alpha-1/2}(T-\tau)^{1/2})$,
\begin{eqnarray*}
\hat{\sigma}_{0,T}^2&=&\frac{1}{T-\tau-1}\sum_{t=1}^{T-\tau}(\widehat\sigma_{n,t}(\tau)-\widehat\sigma_{n}(\tau))^2+\non
&&\sum_{j=1}^{l}\frac{2}{T-\tau-j}\sum_{t=1}^{T-\tau-j}(\widehat\sigma_{N,t}(\tau)-\widehat\sigma_{n}(\tau))(\widehat\sigma_{n,t+j}(\tau)-\widehat\sigma_{n}(\tau))\non
&=&(\boldsymbol{\mu}_v^{'}\otimes\boldsymbol{\mu}_v^{'})\boldsymbol{\Omega}(\boldsymbol{\mu}_v\otimes\boldsymbol{\mu}_v)(1+o_p(1))+o_p(1).
\end{eqnarray*}
We can conclude (\ref{a7gujisigma02a2s}).
\end{proof}
\renewcommand{\theequation}{C.\arabic{equation}}
\setcounter{equation}{0}

\section{Appendix B: Some lemmas}

In this appendix, we provide the necessary lemmas used in the proofs of the main theorems above.

\subsection{Lemmas \ref{lem1} and \ref{lem2}}
\begin{lem}\label{lem1}
In addition to Assumptions 1 and 3, we assume that
$\tau$ is fixed or $\tau$ tends to infinity satisfying
\begin{eqnarray}\label{u12a}
\frac{\tau}{(T-\tau)^{\delta/(2\delta+2)}}\rightarrow 0, \ \ as \ \ T\rightarrow\infty,
\end{eqnarray}
for some constant $\delta>0$. Moreover, under (\ref{u12}), we assume that
\begin{eqnarray}
E|\zeta_{it}|^{2+2\delta}<+\infty,
\end{eqnarray}
where $\zeta_{it}$ is the $i$-th component of $\boldsymbol{\zeta}_t$ and $\{\boldsymbol{\zeta}_t: \ldots,-1,0,1,\ldots\}$ is the sequence appeared in Assumption 3. And
the covariance matrix $\Gamma$ of the random vector
\begin{eqnarray}
\Big(C_{ij}(h^{'}): i=1,\ldots,m; j=1,\ldots,m; h^{'}=\tau-s,\ldots,\tau+s\Big)
\end{eqnarray}
is positive definite, where $C_{ij}(h)$ is defined in (\ref{w15}) just above Theorem 1 in the main paper.

Then as $N,T\rightarrow\infty$, we have
\begin{eqnarray*}
v_{NT}^{1/2}(\bar\bbv_N^{'}\bbS_{\tau}\bar\bbv_N-\boldsymbol{\mu}_v^{'}\boldsymbol{\Sigma}_{\tau}\boldsymbol{\mu}_v)\stackrel{d}\rightarrow \mathcal{N}\Big(0,\sigma_0^2\Big),
\end{eqnarray*}
where $$\sigma_0^2=\lim_{N,T\rightarrow\infty}\frac{v_{NT}}{[N^{\alpha_0}]}4\boldsymbol{\mu}_v^{'}\boldsymbol{\Sigma}_{\tau}\boldsymbol{\Sigma}_v\boldsymbol{\Sigma}_{\tau}\boldsymbol{\mu}_v
+\lim_{N,T\rightarrow\infty}\frac{v_{NT}}{T-\tau}\ (\boldsymbol{\mu}_v^{'}\otimes\boldsymbol{\mu}_v^{'})var(\sqrt{T-\tau}vec\big(\bbS_{\tau}-\boldsymbol{\Sigma}_{\tau}\big))(\boldsymbol{\mu}_v\otimes\boldsymbol{\mu}_v), $$
$\boldsymbol{\Sigma}_{\tau}=\mathbb{E}(\bbF_t\bbF_{t+\tau}^{'})$,
$\boldsymbol{\mu}_v=\mu_v\bbe_{m(s+1)}$, where $\bbe_{m(s+1)}$ is an $m(s+1)\times 1$ vector with each element being $1$, `vec' means that for a matrix $\bbX=(\bbx_1,\cdots,\bbx_n): q\times n$, $vec(\bbX)$ is the $qn\times 1$ vector defined as
\begin{eqnarray}
vec(\bbX)=\left(
            \begin{array}{c}
              \bbx_1 \\
              \vdots \\
              \bbx_n \\
            \end{array}
          \right).
\end{eqnarray}
%and
%\begin{eqnarray}\label{h1a}
%\boldsymbol{\Omega}=\left(
%                      \begin{array}{ccccccc}
%                        \boldsymbol{\omega}(\tau,\tau)& \cdots & \boldsymbol{\omega}(\tau,\tau+s) & \cdots & \boldsymbol{\omega}(\tau,\tau-s) & \cdots & \boldsymbol{\omega}(\tau,\tau)\\
%                        \boldsymbol{\omega}(\tau+1,\tau)& \cdots & \boldsymbol{\omega}(\tau+1,\tau+s) & \cdots & \boldsymbol{\omega}(\tau+1,\tau-s) & \cdots & \boldsymbol{\omega}(\tau+1,\tau)\\
%                        \vdots & \vdots & \vdots & \vdots & \vdots & \vdots & \vdots \\
%                        \boldsymbol{\omega}(\tau-s,\tau)& \cdots & \boldsymbol{\omega}(\tau-s,\tau+s) & \cdots & \boldsymbol{\omega}(\tau-s,\tau-s) & \cdots & \boldsymbol{\omega}(\tau-s,\tau)\\
%                        \boldsymbol{\omega}(\tau,\tau)& \cdots & \boldsymbol{\omega}(\tau,\tau+s) & \cdots & \boldsymbol{\omega}(\tau,\tau-s) & \cdots & \boldsymbol{\omega}(\tau,\tau)\\
%                      \end{array}
%                    \right),
%\end{eqnarray}
%with $\boldsymbol{\omega}(h,r)=\Big(Cov(f_{i_{1}t}f_{j_{1},t+h}, f_{i_2,t}f_{j_2,t+r}): 1\leq i_{1},j_{1},i_2,j_2\leq m\Big)_{m^2\times m^2}$.

\end{lem}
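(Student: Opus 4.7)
The natural decomposition is to split around the mean by writing $\bar{\bbv}_N=\boldsymbol{\mu}_v+(\bar{\bbv}_N-\boldsymbol{\mu}_v)$ and expanding the quadratic form, giving
\begin{eqnarray*}
\bar{\bbv}_N^{'}\bbS_{\tau}\bar{\bbv}_N-\boldsymbol{\mu}_v^{'}\boldsymbol{\Sigma}_{\tau}\boldsymbol{\mu}_v
&=& \underbrace{\boldsymbol{\mu}_v^{'}(\bbS_{\tau}-\boldsymbol{\Sigma}_{\tau})\boldsymbol{\mu}_v}_{A_{NT}}
+ \underbrace{(\bar{\bbv}_N-\boldsymbol{\mu}_v)^{'}(\bbS_{\tau}+\bbS_{\tau}^{'})\boldsymbol{\mu}_v}_{B_{NT}} \\
&& + \underbrace{(\bar{\bbv}_N-\boldsymbol{\mu}_v)^{'}\bbS_{\tau}(\bar{\bbv}_N-\boldsymbol{\mu}_v)}_{C_{NT}},
\end{eqnarray*}
where $A_{NT}$ carries randomness only from the factor side, $B_{NT}$ is the cross term linear in the loadings, and $C_{NT}$ is purely quadratic in the loadings. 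The standing orders are $A_{NT}=O_P((T-\tau)^{-1/2})$ by any CLT for the sample autocovariance of the MA($\infty$) process $\bbf_t$, $B_{NT}=O_P([N^{\alpha_0}]^{-1/2})$ since $\bar{\bbv}_N-\boldsymbol{\mu}_v$ has covariance $\boldsymbol{\Sigma}_v/[N^{\alpha_0}]$ and $\bbS_{\tau}=O_P(1)$, and $C_{NT}=O_P([N^{\alpha_0}]^{-1})$. Because $v_{NT}=\min([N^{\alpha_0}],T-\tau)$, the scaling $v_{NT}^{1/2}$ makes $C_{NT}$ negligible but keeps $A_{NT}$ and $B_{NT}$ as $O_P(1)$ contributors, so the task reduces to showing joint asymptotic normality of $v_{NT}^{1/2}(A_{NT},B_{NT})$.

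The key structural point is that under Assumption \ref{A3} the loadings $\{v_{i\ell k}\}$ are independent of the factor process $\{\bbf_t\}$. Hence $B_{NT}$, viewed conditionally on the factor $\sigma$-algebra, is a centered sum of $[N^{\alpha_0}]$ iid vectors $\mathbf{v}_i-\boldsymbol{\mu}_v$ sandwiched by a matrix that is deterministic given the factors. Replacing $\bbS_{\tau}$ by $\boldsymbol{\Sigma}_{\tau}$ costs only $o_P(1)$ after scaling by $v_{NT}^{1/2}$ (because $\bbS_{\tau}-\boldsymbol{\Sigma}_{\tau}\to 0$ in probability and $v_{NT}^{1/2}(\bar{\bbv}_N-\boldsymbol{\mu}_v)=O_P(1)$), after which the classical multivariate CLT for iid sums yields $[N^{\alpha_0}]^{1/2}B_{NT}\stackrel{d}{\to}\mathcal{N}(0,V_B)$ with $V_B=\boldsymbol{\mu}_v^{'}(\boldsymbol{\Sigma}_{\tau}+\boldsymbol{\Sigma}_{\tau}^{'})\boldsymbol{\Sigma}_v(\boldsymbol{\Sigma}_{\tau}+\boldsymbol{\Sigma}_{\tau}^{'})\boldsymbol{\mu}_v$; under the componentwise-independence structure of $\boldsymbol{\zeta}_t$ this $V_B$ reduces to the form $4\boldsymbol{\mu}_v^{'}\boldsymbol{\Sigma}_{\tau}\boldsymbol{\Sigma}_v\boldsymbol{\Sigma}_{\tau}\boldsymbol{\mu}_v$ appearing in $\sigma_0^2$. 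Independence of the two probability spaces then makes $v_{NT}^{1/2}A_{NT}$ and $v_{NT}^{1/2}B_{NT}$ asymptotically independent by a characteristic-function/Slutsky argument, so the variance of their sum is the sum of variances, which after the rescalings $v_{NT}/[N^{\alpha_0}]$ and $v_{NT}/(T-\tau)$ matches the two limits in the definition of $\sigma_0^2$.

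The hard part will be the CLT for $A_{NT}=\boldsymbol{\mu}_v^{'}(\bbS_{\tau}-\boldsymbol{\Sigma}_{\tau})\boldsymbol{\mu}_v$ in the regime where the lag $\tau$ grows with $T$. Plugging the linear representation $\bbf_t=\sum_{j\geq 0}b_j\boldsymbol{\zeta}_{t-j}$ into $\bbS_{\tau}=(T-\tau)^{-1}\sum_t(\bbF_t-\bar{\bbF}_T)(\bbF_{t+\tau}-\bar{\bbF}_{T+\tau})^{'}$ and truncating the MA at a slowly growing order $K=K(T)\to\infty$ converts the summand into a $(K+s+\tau)$-dependent sequence; a CLT for $m$-dependent sequences with unbounded $m$ (e.g.\ the Romano--Wolf variant, or the Bardet--Doukhan--Le\'on functional CLT cited in the bibliography) then applies once Lindeberg's condition is verified. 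The moment assumption $E|\zeta_{it}|^{2+2\delta}<\infty$ supplies the Lindeberg input via a standard $(2+2\delta)$-th moment argument, while the rate restriction $\tau/(T-\tau)^{\delta/(2\delta+2)}\to 0$ is exactly what is needed so that $K$ can be chosen to make the MA-truncation error $o_P((T-\tau)^{-1/2})$ without killing the Lindeberg control. Positive-definiteness of the covariance matrix $\Gamma$ of the entries $C_{ij}(h^{'})$ secures a non-degenerate joint limit for $\text{vec}(\bbS_{\tau}-\boldsymbol{\Sigma}_{\tau})$, and the identity $\boldsymbol{\mu}_v^{'}M\boldsymbol{\mu}_v=(\boldsymbol{\mu}_v\otimes\boldsymbol{\mu}_v)^{'}\text{vec}(M)$ translates this into $\sqrt{T-\tau}\,A_{NT}\stackrel{d}{\to}\mathcal{N}(0,(\boldsymbol{\mu}_v\otimes\boldsymbol{\mu}_v)^{'}\boldsymbol{\Omega}(\boldsymbol{\mu}_v\otimes\boldsymbol{\mu}_v))$, yielding the second term of $\sigma_0^2$ once the $v_{NT}^{1/2}/(T-\tau)^{1/2}$ factor is absorbed.
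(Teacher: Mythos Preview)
Your proposal is correct and follows essentially the same route as the paper: the paper also splits $\bar\bbv_N'\bbS_\tau\bar\bbv_N-\boldsymbol{\mu}_v'\boldsymbol{\Sigma}_\tau\boldsymbol{\mu}_v$ into a loadings piece (handled by the iid CLT for $\bar\bbv_N-\boldsymbol{\mu}_v$ after a Slutsky replacement $\bbS_\tau\to\boldsymbol{\Sigma}_\tau$) and a factor piece $\boldsymbol{\mu}_v'(\bbS_\tau-\boldsymbol{\Sigma}_\tau)\boldsymbol{\mu}_v$, then combines them via the independence of $\{v_{i\ell k}\}$ and $\{\bbf_t\}$. For the factor piece with growing $\tau$ the paper proves a separate lemma that truncates the MA($\infty$) at a fixed level $k$ and applies precisely the Romano--Wolf $m$-dependent CLT with $m=k+\tau$, using the moment bound $E|\zeta_{it}|^{2+2\delta}<\infty$ and the rate $\tau=o((T-\tau)^{\delta/(2\delta+2)})$ exactly as you anticipated.
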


\begin{lem}\label{lem2}
Under Assumptions 1 and 3, we have
\begin{eqnarray}\label{u0}
\frac{1}{T-\tau}\sum^{T-\tau}_{t=1}\bar u_t\bar u_{t+\tau}=O_P\Big(\max\big(\frac{\gamma_1(\tau)}{N}, \frac{1}{N\sqrt{T-\tau}}\big)\Big).
\end{eqnarray}
\end{lem}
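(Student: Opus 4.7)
The plan is to split the target quantity into its expectation and a mean-zero fluctuation, then bound each of the two parts separately. Writing
\begin{eqnarray*}
\frac{1}{T-\tau}\sum^{T-\tau}_{t=1}\bar u_t\bar u_{t+\tau}
=E(\bar u_t\bar u_{t+\tau})
+\left(\frac{1}{T-\tau}\sum^{T-\tau}_{t=1}\bar u_t\bar u_{t+\tau}-E(\bar u_t\bar u_{t+\tau})\right),
\end{eqnarray*}
the first summand is constant in $t$ by the weak stationarity part of Assumption \ref{A1}, and by (\ref{0503}) it equals $\frac{\gamma_1(\tau)}{N^2}\sum_{i,j=1}^{N}\gamma_2(|i-j|)$, which by (\ref{u14}) is $O(\gamma_1(\tau)/N)$. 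This is the source of the first term inside the maximum.

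For the centered fluctuation, I would compute the variance directly. Expanding,
\begin{eqnarray*}
\mbox{Var}\!\left(\frac{1}{T-\tau}\sum_{t=1}^{T-\tau}\bar u_t\bar u_{t+\tau}\right)
=\frac{1}{(T-\tau)^2}\sum_{t_1,t_2=1}^{T-\tau}\mbox{Cov}\!\left(\bar u_{t_1}\bar u_{t_1+\tau},\,\bar u_{t_2}\bar u_{t_2+\tau}\right),
\end{eqnarray*}
and each covariance can be written in terms of the fourth joint cumulants and products of second moments of $\bar u_t$. Using the linear stationary representation (2.4) together with the absolute summability in (\ref{u15}), the second moments of $\bar u_t$ are $O(1/N)$ (this is already used in the proof of Theorem~\ref{thm1} via (\ref{b6})), and the fourth-order cumulant contribution can be bounded using the one-sided coefficients $\xi_{js}$ and $\phi_{ij}$ to reduce everything to sums of $\xi\xi\xi\xi$ weighted by $\phi$, each absolutely summable. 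Together with the fact that the temporal covariances $\mbox{Cov}(\bar u_{t_1}\bar u_{t_1+\tau},\bar u_{t_2}\bar u_{t_2+\tau})$ are themselves summable in $|t_1-t_2|$ (again by the $\sum|\xi_{js}|<\infty$ bound), this yields
\begin{eqnarray*}
\mbox{Var}\!\left(\frac{1}{T-\tau}\sum_{t=1}^{T-\tau}\bar u_t\bar u_{t+\tau}\right)=O\!\left(\frac{1}{N^2(T-\tau)}\right).
\end{eqnarray*}
Chebyshev's inequality then gives the fluctuation term an order of $O_P(1/(N\sqrt{T-\tau}))$, matching the second term inside the maximum.

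Combining the mean bound and the stochastic bound yields the conclusion. The main obstacle is the variance calculation: the summands $\bar u_{t_1}\bar u_{t_1+\tau}$ are quadratic in the underlying $\nu_{j,s}$, so $\mbox{Cov}(\bar u_{t_1}\bar u_{t_1+\tau},\bar u_{t_2}\bar u_{t_2+\tau})$ involves a mixture of products of second moments and genuine fourth cumulants of the $\nu$'s. Care must be taken to keep track of which index matchings survive: the summability of the coefficients reduces everything to geometric-type bounds, but bookkeeping the coincidence patterns (and checking that the doubly-infinite sum in $s$ in (\ref{u15}) does not enlarge the order) is where the work is. Once this reduction is done, the remaining bounding is routine, and no additional assumption beyond Assumption \ref{A1} and the (assumed) finiteness of fourth moments of $\nu_{js}$ is needed.
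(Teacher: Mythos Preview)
Your proposal is correct and follows essentially the same route as the paper: compute the mean $E(\bar u_t\bar u_{t+\tau})=O(\gamma_1(\tau)/N)$ via (\ref{0503})--(\ref{u14}), then bound the second moment (equivalently the variance) by expanding the fourth-order product of $u$'s through the linear representation and enumerating the coincidence patterns of the underlying $\nu_{j,s}$'s, obtaining $O\big(N^{-2}(T-\tau)^{-1}\big)$ for the non-mean-squared contributions. The paper organizes the same calculation by computing $E[X^2]$ directly and splitting into the three pairings $\Phi_1,\Phi_2,\Phi_3$ plus the all-equal case $\Phi_4$, which is exactly your ``products of second moments plus fourth cumulant'' decomposition; the only cosmetic difference is that the paper bounds $E[X^2]$ first and deduces the variance bound, whereas you center first.
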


We prove the above lemmas in Appendix C below.

\section{Appendix C: \ Proofs of Lemmas \ref{lem1} and \ref{lem2}}

\begin{proof}[Proof of Lemma \ref{lem1}]
Write
\begin{eqnarray}\label{b8}
\bar\bbv_N^{'}\bbS_{\tau}\bar\bbv_N-\boldsymbol{\mu}_v^{'}\boldsymbol{\Sigma}_{\tau}\boldsymbol{\mu}_v
&=&(\bar\bbv_N^{'}-\boldsymbol{\mu}_v^{'})\bbS_{\tau}\bar\bbv_N
+\boldsymbol{\mu}_v^{'}(\bbS_{\tau}-\boldsymbol{\Sigma}_{\tau})\bar\bbv_N
+\boldsymbol{\mu}_v^{'}\boldsymbol{\Sigma}_{\tau}(\bar\bbv_N-\boldsymbol{\mu}_v)\non
&=&(\bar\bbv_N^{'}-\boldsymbol{\mu}_v^{'})(\bbS_\tau\bar\bbv_N+\boldsymbol{\Sigma}_{\tau}\boldsymbol{\mu}_v)
+\boldsymbol{\mu}_v^{'}(\bbS_{\tau}-\boldsymbol{\Sigma}_{\tau})\bar\bbv_N.
\end{eqnarray}

Since the elements of the vector $\bar\bbv_N$ are all i.i.d., we have
\begin{eqnarray}\label{b9}
\sqrt{[N^{\alpha_0}]}(\bar\bbv_N-\boldsymbol{\mu}_v)\stackrel{d}{\rightarrow}N(0,\boldsymbol{\Sigma}_v),\ \ as\ \ N\rightarrow\infty,
\end{eqnarray}
where $\boldsymbol{\Sigma}_v$ is an $m(s+1)$-dimensional diagonal matrix with each of the diagonal elements being $\sigma_v^2$.

Moreover, under Assumption 3, we have
\begin{eqnarray}\label{b15}
\bbS_{\tau}-\boldsymbol{\Sigma}_{\tau}\stackrel{i.p.}\rightarrow 0, \ \ as\ \ T\rightarrow\infty,
\end{eqnarray}
(one may see (\ref{h7}) below).
It follows from (\ref{b9}) and (\ref{b15}) that, if $\tau$ is fixed,
\begin{eqnarray}\label{b21}
&&\sqrt{[N^{\alpha_0}]}(\bar\bbv_N^{'}-\boldsymbol{\mu}_v^{'})(\bbS_{\tau}\bar v_N+\boldsymbol{\Sigma}_{\tau}\boldsymbol{\mu}_v)\non
&=&\sqrt{[N^{\alpha_0}]}\Big((\bar\bbv_N^{'}-\boldsymbol{\mu}_v^{'})\bbS_{\tau}(\bar\bbv_N-\boldsymbol{\mu}_v)
+(\bar\bbv_N^{'}-\boldsymbol{\mu}_v^{'})(\bbS_{\tau}-\boldsymbol{\Sigma}_{\tau})\boldsymbol{\mu}_v
+2(\bar\bbv_N^{'}-\boldsymbol{\mu}_v^{'})\boldsymbol{\Sigma}_{\tau}\boldsymbol{\mu}_v\Big)\non
&=&2\sqrt{[N^{\alpha_0}]}(\bar\bbv_N^{'}-\boldsymbol{\mu}_v^{'})\boldsymbol{\Sigma}_{\tau}\boldsymbol{\mu}_v+o_p(1)\stackrel{d}{\rightarrow}\mathcal{N}(0, 4\boldsymbol{\mu}_v^{'}\boldsymbol{\Sigma}_{\tau}^{'}\boldsymbol{\Sigma}_v\boldsymbol{\Sigma}_{\tau}\boldsymbol{\mu}_v).
\end{eqnarray}

When $\tau$ goes to infinity and satisfies (\ref{u12}), we have $\lim_{\tau\rightarrow\infty}\boldsymbol{\Sigma}_{\tau}=0$. In fact, we consider one element $\gamma(h)=Cov(f_{k,t}, f_{k,t+h})$ of $\boldsymbol{\Sigma}_{\tau}$:
\begin{eqnarray*}
\gamma(h)=E\Big(\sum^{+\infty}_{j_1=0}b_{j_1}\zeta_{k,t-j_1}\sum^{+\infty}_{j_2=0}b_{j_2}\zeta_{t+h-j_2}\Big)
=\sum^{+\infty}_{j_1=0}b_{j_1}b_{h+j_1}.
\end{eqnarray*}

Then
\begin{eqnarray*}
\sum^{+\infty}_{h=0}|\gamma(h)|=\sum^{+\infty}_{h=0}|\sum^{+\infty}_{j=0}b_jb_{h+j}|\leq\Big(\sum^{+\infty}_{j=0}|b_j|\Big)^2<+\infty.
\end{eqnarray*}
From this, we can see that $\gamma(h)\rightarrow 0$ as $h\rightarrow\infty$. So as $\tau\rightarrow\infty$, $\boldsymbol{\Sigma}_{\tau}\rightarrow 0$. Hence, under this case,
\begin{eqnarray}
\sqrt{[N^{\alpha_0}]}(\bar\bbv_N^{'}-\boldsymbol{\mu}_v^{'})(\bbS_{\tau}\bar v_N+\boldsymbol{\Sigma}_{\tau}\boldsymbol{\mu}_v)\stackrel{i.p.}\rightarrow 0.
\end{eqnarray}

Under Assumption 3, by Theorem 14 in Chapter 4 of \cite{H1970}, when $\tau$ is fixed, the sample covariance of the stationary time series $\{\bbf_t: t=1,2,\ldots,T\}$ has the following asymptotic property:
\begin{eqnarray}\label{h7}
\sqrt{T-\tau}\Big(vec\big(\boldsymbol{\widehat\gamma}(h)-\boldsymbol{\gamma}(h)\big), 0\leq h\leq\ell\Big)
\stackrel{d}{\rightarrow}N(0, \boldsymbol{\omega}),
\end{eqnarray}
\begin{eqnarray*}
\boldsymbol{\omega}=\lim_{N,T\rightarrow\infty}var(\sqrt{T-\tau}vec\big(\boldsymbol{\widehat\gamma}(h)-\boldsymbol{\gamma}(h)\big)).
\end{eqnarray*}

$\boldsymbol{\widehat\gamma}(h)=\frac{1}{T-h}\sum^{T-h}_{t=1}(\bbf_t-\bar\bbf^{(1)})(\bbf_{t+h}-\bar\bbf^{(2)})^{'}$, $\bar\bbf^{(1)}=\frac{1}{T-h}\sum^{T-h}_{t=1}\bbf_t$,
$\bar\bbf^{(2)}=\frac{1}{T-h}\sum^{T-h}_{t=1}\bbf_{t+h}$, and $\boldsymbol{\gamma}(h)=Cov(\bbf_t,\bbf_{t+h})$.
Note that the expression of $vec\big(\boldsymbol{\widehat\gamma}(h)\big)$ is
\begin{eqnarray*}
vec\Big(\boldsymbol{\widehat\gamma}(h)\Big)=\Big(\widetilde{cov}(1,1), \widetilde{cov}(2,1),\ldots,\widetilde{cov}(m,1),\ldots,\widetilde{cov}(1,m),\ldots,\widetilde{cov}(m,m)\Big)^{'},
\end{eqnarray*}
with $\widetilde{cov}(i,j)=\frac{1}{T-h}\sum^{T-h}_{t=1}f_{it}f_{j,t+h}-\frac{1}{T-h}\sum^{T-h}_{t=1}f_{it}\frac{1}{T-h}\sum^{T-h}_{t=1}f_{j,t+h}$.
%The asymptotic covariance between $\sqrt{T}\Big(vec\big(\boldsymbol{\widehat\gamma}(h)-\boldsymbol{\gamma}(h)\big)\Big)$ and $\sqrt{T}\Big(vec\big(\boldsymbol{\widehat\gamma}(r)-\boldsymbol{\gamma}(r)\big)\Big)$
%can be calculated as
%\begin{eqnarray*}
%\boldsymbol{\omega}(h,r)=\Big(Cov(f_{i_{1}t}f_{j_{1},t+h}, f_{i_2,t}f_{j_2,t+r}): 1\leq i_{1},j_{1},i_2,j_2\leq m\Big)_{m^2\times m^2}.
%\end{eqnarray*}

Here we would like to point out that although Theorem 14 of \cite{H1970} gives the CLT for the sample covariance $\boldsymbol{\check\gamma}=\frac{1}{T-h}\sum^{T-h}_{t=1}f_{it}f_{j,t+h}$, the asymptotic distribution of $\boldsymbol{\widehat\gamma}$ is the same as that of $\boldsymbol{\check\gamma}$ (one can verify it along similar lines).

The CLT in Theorem 14 of \cite{H1970} is provided for finite lags $h$ and $r$ only. If both $h$ and $r$ tend to infinity as $T\rightarrow\infty$, we develop a corresponding CLT in Lemma \ref{lem9}. %and the asymptotic variance is
%\begin{eqnarray}
%\boldsymbol{\omega}(h,r)
%=\lim_{h,r\rightarrow\infty}\Big(Cov(f_{i_{1}t}f_{j_{1},t+h}, f_{i_2,t}f_{j_2,t+r}): 1\leq i_{1},j_{1},i_2,j_2\leq m\Big)_{m^2\times m^2}.
%\end{eqnarray}

Moreover note that the expansion of $vec\Big(\bbS_{\tau}-\boldsymbol{\Sigma}_{\tau}\Big)$ has a form of
\begin{eqnarray*}
vec\Big(\bbS_{\tau}-\boldsymbol{\Sigma}_{\tau}\Big)=\left(
                                                      \begin{array}{c}
                                                        vec\Big(\boldsymbol{\widehat\gamma}(\tau)-\boldsymbol{\gamma}(\tau)\Big) \\
                                                        \vdots\\
                                                        vec\Big(\boldsymbol{\widehat\gamma}(\tau+s)-\boldsymbol{\gamma}(\tau+s)\Big) \\
                                                        vec\Big(\boldsymbol{\widehat\gamma}(\tau-1)-\boldsymbol{\gamma}(\tau-1)\Big) \\
                                                        \vdots \\
                                                        vec\Big(\boldsymbol{\widehat\gamma}(\tau+s-1)-\boldsymbol{\gamma}(\tau+s-1)\Big) \\
                                                        \vdots \\
                                                        vec\Big(\boldsymbol{\widehat\gamma}(\tau-s)-\boldsymbol{\gamma}(\tau-s)\Big) \\
                                                        \vdots \\
                                                        vec\Big(\boldsymbol{\widehat\gamma}(\tau)-\boldsymbol{\gamma}(\tau)\Big) \\
                                                      \end{array}
                                                    \right).
\end{eqnarray*}

In view of this and (\ref{h7}), we conclude
\begin{eqnarray}\label{b10}
\sqrt{T-\tau}\Big(vec\big(\bbS_{\tau}-\boldsymbol{\Sigma}_{\tau}\big)\Big)\stackrel{d}{\rightarrow}N(0,\boldsymbol{\Omega}),
\end{eqnarray}
where
\begin{eqnarray*}
\boldsymbol{\Omega}=\lim_{N,T\rightarrow\infty}var(\sqrt{T-\tau}vec\big(\bbS_{\tau}-\boldsymbol{\Sigma}_{\tau})).
\end{eqnarray*}

By (\ref{b10}) and Slutsky's theorem, we have, as $N,T\rightarrow\infty$,
\begin{eqnarray}\label{b11}
&&\sqrt{T-\tau}\boldsymbol{\mu}_v^{'}(\bbS_{\tau}-\boldsymbol{\Sigma}_{\tau})\bar\bbv_N
=\boldsymbol{\mu}_v^{'}\sqrt{T-\tau}(\bbS_{\tau}-\boldsymbol{\Sigma}_{\tau})(\bar\bbv_N-\boldsymbol{\mu}_v)
+\boldsymbol{\mu}_v^{'}\sqrt{T-\tau}(\bbS_{\tau}-\boldsymbol{\Sigma}_{\tau})\boldsymbol{\mu}_v\non
&=&\big((\bar\bbv_N^{'}-\boldsymbol{\mu}_v^{'})\otimes\boldsymbol{\mu}_v^{'}\big)\sqrt{T-\tau}vec(\bbS_{\tau}-\boldsymbol{\Sigma}_{\tau})
+(\bar\bbv_N^{'}\otimes\boldsymbol{\mu}_v^{'})\sqrt{T-\tau}vec(\bbS_{\tau}-\boldsymbol{\Sigma}_{\tau})\non
&=&(\bar\bbv_N^{'}\otimes\boldsymbol{\mu}_v^{'})\sqrt{T-\tau}vec(\bbS_{\tau}-\boldsymbol{\Sigma}_{\tau})+o_p(1)\stackrel{d}{\rightarrow}\mathcal{N}\Big(0, (\boldsymbol{\mu}_v^{'}\otimes\boldsymbol{\mu}_v^{'})\boldsymbol{\Omega}(\boldsymbol{\mu}_v\otimes\boldsymbol{\mu}_v)\Big),
\end{eqnarray}
where the first equality uses ${\rm vec}(\bbA\bbX\bbB)=(\bbB^{'}\otimes\bbA) {\rm vec}(\bbX)$, with $\bbA: p\times m$, $\bbB: n\times q$ and $\bbX: m\times n$ being three matrices; and $\otimes$ denoting the Kronecker product;
and the last asymptotic distribution uses the fact that
\begin{eqnarray}
\bar\bbv_N\stackrel{i.p.}{\rightarrow}\boldsymbol{\mu}_v,
\end{eqnarray}
which can be verified as in (\ref{b9}).

By (\ref{b21}), (\ref{b11}) and the independence between $\bbS_{\tau}$ and $\bar\bbv_N$, we have
\begin{eqnarray*}
&&\sqrt{\min([N^{\alpha_0}],T-\tau)}(\bar\bbv_N^{'}\bbS_{\tau}\bar\bbv_N-\boldsymbol{\mu}_v^{'}\boldsymbol{\Sigma}_{\tau}\boldsymbol{\mu}_v)\non
&=&\sqrt{\min([N^{\alpha_0}],T-\tau)}(\bar\bbv_N^{'}-\boldsymbol{\mu}_v^{'})(\bbS_{\tau}\bar\bbv_N+\boldsymbol{\Sigma}_{\tau}\boldsymbol{\mu}_v)
+\sqrt{\min([N^{\alpha_0}],T-\tau)}\boldsymbol{\mu}_v^{'}(\bbS_{\tau}-\boldsymbol{\Sigma}_{\tau})\bar\bbv_N\non
&&\stackrel{d}\rightarrow \mathcal{N}\Big(0,\sigma_0^2\Big),
\end{eqnarray*}
where the last step uses the fact that
\begin{eqnarray*}
[N^{\alpha_0}](\bar\bbv_N^{'}-\boldsymbol{\mu}_v^{'})\bbS_{\tau}\bar\bbv_N=[N^{\alpha_0}](\bar\bbv_N^{'}-\boldsymbol{\mu}_v^{'})\boldsymbol{\Sigma}_{\tau}\boldsymbol{\mu}_v+o_P(1).
\end{eqnarray*}

\end{proof}

%\begin{lem}\label{lem2}
%Under Assumptions 1 and 3, we have
%\begin{eqnarray}\label{u0}
%\frac{1}{T-\tau}\sum^{T-\tau}_{t=1}\bar u_t\bar u_{t+\tau}=O_P\Big(\max\big(\frac{\gamma_1(\tau)}{N}, \frac{1}{N\sqrt{T-\tau}}\big)\Big).
%\end{eqnarray}
%\end{lem}

\begin{proof}[Proof of Lemma \ref{lem2}]
First, we calculate the order of
\begin{eqnarray}
E\Big(\frac{1}{T-\tau}\sum^{T-\tau}_{t=1}\bar u_t\bar u_{t+\tau}\Big)^2.
\end{eqnarray}
From Assumption 1, it follows that
\begin{eqnarray}\label{u1}
&&E\Big(\frac{1}{T-\tau}\sum^{T-\tau}_{t=1}\bar u_t\bar u_{t+\tau}\Big)^2
=\frac{1}{N^4(T-\tau)^2}\sum^{T-\tau}_{t_1,t_2=1}\sum^{N}_{i_1,\ldots,i_4=1}E(u_{i_1t_1}u_{i_2,t_1+\tau}u_{i_3t_2}u_{i_4,t_2+\tau})\non
&=&\frac{1}{N^4(T-\tau)^2}\sum^{T-\tau}_{t_1,t_2=1}\sum^{N}_{i_1,\ldots,i_4=1}E\Big(\sum^{+\infty}_{j_1=0}\phi_{i_1j_1}\sum^{+\infty}_{s_1=-\infty}\xi_{j_1s_1}\nu_{j_1,t_1-s_1}
\sum^{+\infty}_{j_2=0}\phi_{i_2j_2}\sum^{+\infty}_{s_2=-\infty}\xi_{j_2s_2}\nu_{j_2,t_1+\tau-s_2}\non
& \times & \sum^{+\infty}_{j_3=0}\phi_{i_3j_3}\sum^{+\infty}_{s_3=-\infty}\xi_{j_3s_3}\nu_{j_3,t_2-s_3}
\sum^{+\infty}_{j_4=0}\phi_{i_4j_4}\sum^{+\infty}_{s_4=-\infty}\xi_{j_4s_4}\nu_{j_4,t_2+\tau-s_4}\Big).
\end{eqnarray}

Note that there are four random terms appearing in the expectation in (\ref{u1}), i.e. $\nu_{j_1,t_1-s_1}$, $\nu_{j_2,t_1+\tau-s_2}$, $\nu_{j_3,t_2-s_3}$, $\nu_{j_4,t_2+\tau-s_4}$. By Assumption 1, the expectation is not zero only if these four random terms are pairwise equivalent or all of them are equivalent. In view of this, we have
\begin{eqnarray}\label{u3}
E\Big(\frac{1}{T-\tau}\sum^{T-\tau}_{t=1}\bar u_t\bar u_{t+\tau}\Big)^2=\Phi_1+\Phi_2+\Phi_3+\Phi_4,
\end{eqnarray}
where
\begin{eqnarray}\label{u4}
\Phi_1&=&\frac{1}{N^4(T-\tau)^2}\sum^{T-\tau}_{t_1,t_2=1}\sum^{N}_{i_1,\ldots,i_4=1}
E\Big(\sum^{+\infty}_{j_1=0}\phi_{i_1j_1}\phi_{i_2j_1}\sum^{+\infty}_{s_1=-\infty}\xi_{j_1s_1}\xi_{j_1,s_1+\tau}\nu_{j_1,t_1-s_1}^2\Big)\non
& \times & E\Big(\sum^{+\infty}_{j_3\neq j_1}\phi_{i_3j_3}\phi_{i_4j_3}\sum^{+\infty}_{s_3=-\infty}\xi_{j_3s_3}\xi_{j_3,s_3+\tau}\nu_{j_3,t_2-s_3}^2\Big)\non
&=&\frac{1}{N^4(T-\tau)^2}\sum^{T-\tau}_{t_1,t_2=1}\sum^{N}_{i_1,\ldots,i_4=1}E(u_{i_1t_1}u_{i_2,t_1+\tau})E(u_{i_3t_2}u_{i_4,t_2+\tau})\non
&=&\frac{1}{N^4(T-\tau)^2}\sum^{T-\tau}_{t_1,t_2=1}\sum^{N}_{i_1,\ldots,i_4=1}\gamma_1(\tau)\gamma_2(|i_1-i_2|)\gamma_1(\tau)\gamma_2(|i_3-i_4|) = O\Big(\frac{\gamma_1^2(\tau)}{N^2}\Big),
\end{eqnarray}
where the first equality uses $\nu_{j_1,t_1-s_1}=\nu_{j_2,t_1+\tau-s_2}$ and $\nu_{j_3,t_2-s_3}=\nu_{j_4,t_2+\tau-s_4}$. The last equality uses
(2.6) in the main paper.

For $\Phi_2$,
\begin{eqnarray}\label{u5}
\Phi_2&=&\frac{1}{N^4(T-\tau)^2}\sum^{T-\tau}_{t_1,t_2=1}\sum^{N}_{i_1,\ldots,i_4=1}
E\Big(\sum^{+\infty}_{j_1=0}\phi_{i_1j_1}\phi_{i_3j_1}\sum^{+\infty}_{s_1=-\infty}\xi_{j_1s_1}\xi_{j_1,t_2-t_1+s_1}\nu_{j_1,t_1-s_1}^2\Big)\non
& \times & E\Big(\sum^{+\infty}_{j_2\neq j_1}\phi_{i_2j_2}\phi_{i_4j_2}\sum^{+\infty}_{s_2=-\infty}\xi_{j_2s_2}\xi_{j_2,t_2-t_1+s_2}\nu_{j_2,t_1+\tau-s_2}\Big)\non
&\leq&\frac{K}{N^4(T-\tau)^2}\sum^{T-\tau}_{t_2=1}\sum^{N}_{i_1,i_4=1}\sum^{+\infty}_{j_1=0}|\phi_{i_1j_1}|\sum^{N}_{i_3=1}|\phi_{i_3j_1}|
\sum^{+\infty}_{s_1=-\infty}|\xi_{j_1s_1}|\sum^{T-\tau}_{t_1=1}|\xi_{j_1,t_2-t_1+s_1}|\non
& \times & \sum^{N}_{i_2=1}|\phi_{i_2j_2}|\sum^{+\infty}_{j_2\neq j_1}|\phi_{i_4j_2}|\sum^{+\infty}_{s_2=-\infty}|\xi_{j_2s_2}| = O\Big(\frac{1}{N^2(T-\tau)}\Big),
\end{eqnarray}
where the first equality uses $\nu_{j_1,t_1-s_1}=\nu_{j_3,t_2-s_3}$ and $\nu_{j_2,t_1+\tau-s_3}=\nu_{j_4,t_2+\tau-s_4}$. The last equality uses
(2.4) in the main paper.

Similarly, for $\Phi_3$, we have
\begin{eqnarray}\label{u6}
\Phi_3&=&\frac{1}{N^4(T-\tau)^2}\sum^{T-\tau}_{t_1,t_2=1}\sum^{N}_{i_1,\ldots,i_4=1}
E\Big(\sum^{+\infty}_{j_1=0}\phi_{i_1j_1}\phi_{i_4j_1}\sum^{+\infty}_{s_1=-\infty}\xi_{j_1s_1}\xi_{j_1,t_2-t_1+\tau-s_1}\nu^2_{j_1,t_1-s_1}\Big)\non
& \times & \ E\Big(\sum^{+\infty}_{j_2\neq j_1}\phi_{i_2j_2}\phi_{i_3j_2}\sum^{+\infty}_{s_2=-\infty}\xi_{j_2s_2}\xi_{j_2,t_2-t_1-\tau+s_2}\nu_{j_2,t_1+\tau-s_2}^2\Big)\non
&\leq&\frac{K}{N^4(T-\tau)^2}\sum^{T-\tau}_{t_2=1}\sum^{N}_{i_1,i_2=1}\sum^{+\infty}_{j_1=0}|\phi_{i_1j_1}|\sum^{N}_{i_4=1}|\phi_{i_4j_1}|
\sum^{+\infty}_{s_1=-\infty}|\xi_{j_1s_1}|\sum^{T-\tau}_{t_1=1}|\xi_{j_1,t_2-t_1+\tau-s_1}|\non
& \times & \sum^{+\infty}_{j_2\neq j_1}|\phi_{i_2j_2}|\sum^{N}_{i_3=1}|\phi_{i_3j_2}|\sum^{+\infty}_{s_2=-\infty}|\xi_{j_2s_2}| = O\Big(\frac{1}{N^2(T-\tau)}\Big),
\end{eqnarray}
where the first equality uses $\nu_{j_1,t_1-s_1}=\nu_{j_4,t_2+\tau-s_4}$ and $\nu_{j_2,t_1+\tau-s_2}=\nu_{j_3,t_2-s_3}$. The last equality uses (2.4) in the main paper.

For $\Phi_4$,
\begin{eqnarray}\label{u7}
&& \Phi_4 = \frac{1}{N^4(T-\tau)^2}\sum^{T-\tau}_{t_1,t_2=1}\sum^{N}_{i_1,\ldots,i_4=1}
E\Big(\sum^{+\infty}_{j_1=0}\phi_{i_1j_1}\phi_{i_3j_1}\sum^{+\infty}_{s_1=-\infty}\xi_{j_1s_1}\xi_{j_1,t_2-t_1+s_1}\nu_{j_1,t_1-s_1}^4
\nonumber\\
&& \times \ \phi_{i_2j_1}\phi_{i_4j_1}\xi_{j_1,\tau+s_1}\xi_{j_1,t_2-t_1+\tau+s_1}\Big)
\nonumber\\
&&\leq \ \frac{K}{N^4(T-\tau)^2}\sum^{T-\tau}_{t_2=1}\sum^{N}_{i_1=1}
\sum^{+\infty}_{j_1=0}|\phi_{i_1j_1}|\sum^{N}_{i_3=1}|\phi_{i_3j_1}|\sum^{+\infty}_{s_1=-\infty}|\xi_{j_1s_1}|\sum^{T-\tau}_{t_1=1}|\xi_{j_1,t_2-t_1+s_1}|
\sum^{N}_{i_2=1}|\phi_{i_2j_1}|\sum^{N}_{i_4=1}|\phi_{i_4j_1}|
\nonumber\\
&& = \ O\Big(\frac{1}{N^3(T-\tau)}\Big),
\end{eqnarray}
where the first equality uses $\nu_{j_1,t_1-s_1}=\nu_{j_2,t_1+\tau-s_2}=\nu_{j_3,t_2-s_3}=\nu_{j_4,t_2+\tau-s_4}$ and the last equality uses (2.4) in the main paper.

Hence by (\ref{u3}), (\ref{u4}), (\ref{u5}), (\ref{u6}) and (\ref{u7}), we have
\begin{eqnarray}
E\Big(\frac{1}{T-\tau}\sum^{T-\tau}_{t=1}\bar u_t\bar u_{t+\tau}\Big)^2=O\Big(\max\left(\frac{\gamma^2_1(\tau)}{N^2}, \frac{1}{N^2(T-\tau)}\right)\Big).
\end{eqnarray}

Moreover,
\begin{eqnarray}
&&E\Big(\frac{1}{T-\tau}\sum^{T-\tau}_{t=1}\bar u_t\bar u_{t+\tau}\Big)
=E\Big(\frac{1}{(T-\tau)N^2}\sum^{T-\tau}_{t=1}\sum^{N}_{i,j=1}u_{it}u_{j,t+\tau}\Big)\non
&=&\frac{1}{(T-\tau)N^2}\sum^{T-\tau}_{t=1}\sum^{N}_{i,j=1}\gamma_1(\tau)\gamma_2(|i-j|)
=O\Big(\frac{\gamma_1(\tau)}{N}\Big).
\end{eqnarray}

Therefore, we have
\begin{eqnarray}\label{u8}
Var\Big(\frac{1}{T-\tau}\sum^{T-\tau}_{t=1}\bar u_t\bar u_{t+\tau}\Big)=O\Big(\max\left(\frac{\gamma^2_1(\tau)}{N^2}, \frac{1}{N^2(T-\tau)}\right)\Big).
\end{eqnarray}

By (\ref{u8}), we have proved (\ref{u0}).
\end{proof}
{
\setcounter{lem}{2}

\noindent{\bf \large Two lemmas for Lemma \ref{lem1}}
\medskip

This section is to generalize Theorem 8.4.2 of \cite{Ader1994} to the case where the time lag tends to infinity along with the sample size. To this end, we first list a crucial lemma below.

\begin{lem}[Theorem 2.1 of \cite{RW2000}]\label{lem8}

Let $\{X_{n,i}\}$ be a triangular array of mean zero random variables. For each $n=1,2,\ldots$, let $d=d_n$, $m^{'}=m_n$, and suppose $X_{n,1}, \ldots, X_{n,d}$ is an $m^{'}$-dependent sequence of random variables. Define $B^2_{n,\ell,a}\equiv Var\Big(\sum^{a+\ell-1}_{i=a}X_{n,i}\Big)$ and $B^2_n\equiv B^2_{n,d,1}\equiv Var\Big(\sum^{d}_{i=1}X_{n,i}\Big)$.

Let the following conditions hold. For some $\delta>0$ and some $-1\leq\gamma<1$:
\begin{eqnarray}
&&E|X_{n,i}|^{2+\delta}\leq\Delta_n \ \ for \ all \ i; \label{l1}\\
&&B^2_{n,\ell,a}/(\ell^{1+\gamma})\leq K_n \ \ for \ all \ a \ and \ for \ all \ k\geq m^{'}; \label{l2}\\
&&B^2_n/(d(m^{'})^{\gamma})\geq L_n; \label{l3}\\
&&K_n/L_n=O(1); \label{l4}\\
&&\Delta_n/L_n^{(2+\delta)/2}=O(1); \label{l5}\\
&&(m^{'})^{1+(1-\gamma)(1+2/\delta)}/d\rightarrow 0. \label{l6}
\end{eqnarray}
Then
\begin{eqnarray}
B^{-1}_n(X_{n,1}+\cdots+X_{n,d})\Rightarrow\mathcal{N}(0,1).
\end{eqnarray}
\end{lem}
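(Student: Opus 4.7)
The plan is to prove this CLT for arrays of $m_n'$-dependent variables via the classical Bernstein big-block/small-block decomposition, which is the standard route when the dependence length $m'$ is allowed to diverge with the sample size. First I would pick integer block lengths $p = p_n$ (big) and $q = q_n$ (small) with $q \ge m'$ and $p/q \to \infty$, then partition $\{1,\ldots,d\}$ into alternating big and small blocks, writing
\begin{equation*}
S_d := \sum_{i=1}^d X_{n,i} = \sum_{k=1}^r B_k + \sum_{k=1}^r T_k + R_n,
\end{equation*}
where $B_k$ and $T_k$ are the sums over the $k$-th big and small block, $r \asymp d/(p+q)$, and $R_n$ absorbs the leftover terms at the end. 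Because consecutive big blocks are separated by a small block of length $\ge m'$, the variables $\{B_k\}_{k=1}^r$ are mutually independent, reducing the problem to a Lyapunov CLT for a sum of independent blocks plus a negligibility statement for the small blocks.

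Next, I would show that the small-block and remainder contribution is $o_P(B_n)$. By (\ref{l2}), $\mathrm{Var}(T_k) \le K_n q^{1+\gamma}$ and $\mathrm{Var}(R_n) \le K_n (p+q)^{1+\gamma}$, so $\mathrm{Var}\bigl(\sum_k T_k + R_n\bigr) \lesssim r K_n q^{1+\gamma} + K_n (p+q)^{1+\gamma}$. Using the lower bound $B_n^2 \ge L_n d\,(m')^{\gamma}$ from (\ref{l3}) and $K_n/L_n = O(1)$ from (\ref{l4}), this ratio to $B_n^2$ tends to zero whenever $p/q \to \infty$, $p/d \to 0$, and $q$ is chosen of order $m'$.

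The core step is then Lyapunov's CLT for $\widetilde S_n := \sum_{k=1}^r B_k$. A Rosenthal-type moment inequality for $m'$-dependent sums (or, concretely, splitting each big block into $m'+1$ independent subsequences and applying Minkowski) combined with (\ref{l1}) yields a bound of the form $E|B_k|^{2+\delta} \le C\, \Delta_n\, p^{(2+\delta)(1+\gamma)/2}$, up to factors that are absorbed by $m'$. Summing over $k$ and normalizing by $B_n^{2+\delta} \ge (L_n d (m')^\gamma)^{(2+\delta)/2}$, the Lyapunov ratio becomes
\begin{equation*}
\frac{\sum_k E|B_k|^{2+\delta}}{B_n^{2+\delta}} \;\lesssim\; \frac{\Delta_n}{L_n^{(2+\delta)/2}} \cdot \frac{p^{(2+\delta)(1+\gamma)/2 - 1}}{d^{\delta/2}\,(m')^{\gamma(2+\delta)/2}},
\end{equation*}
and conditions (\ref{l5})--(\ref{l6}) are precisely calibrated so that a choice $q \asymp m'$ and $p \asymp (m')^{1+(1-\gamma)(1+2/\delta)-\varepsilon}$ for small $\varepsilon>0$ sends this to zero while simultaneously keeping $p = o(d)$ via (\ref{l6}).

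The main obstacle will be obtaining the clean $(2+\delta)$-moment bound on each big block that matches the i.i.d.\ scaling while accommodating the $m'$-dependence, and then reconciling the several constraints on $(p,q)$ into a single admissible window. The exponents appearing in (\ref{l4})--(\ref{l6}) are exactly what is needed to guarantee that such a window is nonempty; once $(p,q)$ are fixed and both the Lyapunov condition for the big blocks and the variance negligibility of the small blocks are established, Slutsky's theorem combines these into the stated convergence $B_n^{-1}(X_{n,1}+\cdots+X_{n,d}) \Rightarrow \mathcal{N}(0,1)$.
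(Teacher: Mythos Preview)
The paper does not supply a proof of this lemma: it is simply quoted verbatim as Theorem~2.1 of \cite{RW2000} and then invoked as a black box in the proof of Lemma~\ref{lem9}. So there is no ``paper's own proof'' to compare against.

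That said, your sketch is the right one and matches the argument in \cite{RW2000} itself: Bernstein big-block/small-block with small blocks of length $\ge m'$ to force independence of the big blocks, variance negligibility of the small-block sum via the upper bound (\ref{l2}) and the lower bound (\ref{l3})--(\ref{l4}), and a Lyapunov condition on the big blocks using (\ref{l1}), (\ref{l5}) and the exponent calibration (\ref{l6}). The one place to be careful is the $(2+\delta)$-moment bound for a big block: the clean route (and the one Romano and Wolf take) is to split each big block of length $p$ into $\lceil p/m'\rceil$ consecutive sub-blocks of length $\le m'$; alternate sub-blocks are then independent, so Rosenthal's inequality applies directly and gives $E|B_k|^{2+\delta}\le C\big[(p/m')\Delta_n (m')^{2+\delta}+\big((p/m')K_n(m')^{1+\gamma}\big)^{(2+\delta)/2}\big]$. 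This is slightly different from, and more transparent than, the ``$m'+1$ independent subsequences plus Minkowski'' you mention, and it makes the role of the exponent in (\ref{l6}) explicit when you optimize over $p$. With that refinement your outline goes through.
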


%Let us make some notations. For any $1\leq i,j\leq m$ and $0\leq h\leq T-1$,
%\begin{eqnarray}\label{w15}
%C_{ij}(h)=\frac{1}{T-h}\sum^{T-h}_{t=1}f_{i,t}f_{j,t+h}, \ \ \sigma_{ij}(h)=E(f_{i,t}f_{j,t+h}).
%\end{eqnarray}
We are now ready to state the following generalization.

\begin{lem}\label{lem9}
Let $\bbf_t=\sum^{+\infty}_{r=0}b_r\boldsymbol{\zeta}_{t-r}$ where $\boldsymbol{\zeta}_t=(\zeta_{1t}, \ldots, \zeta_{mt})$, consisting of i.i.d components with zero mean and unit variance, is an i.i.d sequence of $m$-dimensional random vector.
Assume that for some constant $\delta>0$, $E|\zeta_{it}|^{2+2\delta}<+\infty$; and the coefficients $\{b_r: r=0,1,2,\ldots\}$ satisfy $\sum^{+\infty}_{r=0}|b_r|<+\infty$.
Moreover, we assume that
\begin{eqnarray}\label{w13}
h=o\Big((T-h)^{\delta/(2\delta+2)}\Big)
\end{eqnarray}
and that
the covariance matrix $\Gamma$ of the random vector
\begin{eqnarray}\label{w7}
\Big(C_{ij}(h^{'}): i=1,\ldots,m; j=1,\ldots,m; h^{'}=h-s,\ldots,h+s\Big)
\end{eqnarray}
is positive definite, where $C_{ij}(h^{'})$ is defined in (3.21).
%\begin{eqnarray}\label{w7}
%\Gamma>0.
%\end{eqnarray}

Then, for any fixed positive constants $s$ and $m$,
\begin{eqnarray}
\Big(\sqrt{T-h^{'}}\big(C_{ij}(h^{'})-\sigma_{ij}(h^{'})\big): 1\leq i,j\leq m; h-s\leq h^{'}\leq h+s\Big)
\end{eqnarray}
converges in distribution to a normal distribution with mean $0$ and covariances
\begin{eqnarray}
\Big(\lim_{T\rightarrow\infty}(T-h)Cov\big(C_{i_1j_1}(h_1), C_{i_2j_2}(h_2)\big): 1\leq i_1,i_2,j_1,j_2\leq m; h-s\leq h_1,h_2\leq h+s\Big).
\end{eqnarray}
\end{lem}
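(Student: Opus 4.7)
\noindent\textbf{Proof proposal for Lemma \ref{lem9}.}

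The plan is to reduce the problem to the classical CLT for $m'$-dependent sequences with growing $m'$ (Lemma \ref{lem8}) by truncating the MA representation of $\bbf_t$. First, I fix a truncation level $K=K_T\to\infty$ to be specified later and set $\bbf_t^{(K)}=\sum_{r=0}^{K}b_r\boldsymbol{\zeta}_{t-r}$ together with $C_{ij}^{(K)}(h')=\frac{1}{T-h'}\sum_{t=1}^{T-h'}f_{i,t}^{(K)}f_{j,t+h'}^{(K)}$. By a direct second-moment bound using $E|\zeta_{it}|^{2+2\delta}<\infty$ and $\sum_r|b_r|<\infty$, one obtains
\begin{eqnarray*}
E\big|\sqrt{T-h'}\big(C_{ij}(h')-C_{ij}^{(K)}(h')\big)\big|^{2}=O\Big(\big(\sum_{r>K}|b_r|\big)^{2}\Big),
\end{eqnarray*}
so taking $K_T\to\infty$ slowly enough that $K_T+h=o((T-h)^{\delta/(2\delta+2)})$ (possible by (\ref{w13})) makes the truncation error $o_P(1)$ uniformly in $h'\in\{h-s,\dots,h+s\}$. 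Thus it suffices to prove the CLT for the truncated array.

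Second, by the Cramér--Wold device I reduce the joint statement to a univariate CLT for
\begin{eqnarray*}
S_T=\sum_{i,j,h'}\lambda_{ij,h'}\sqrt{T-h'}\big(C_{ij}^{(K)}(h')-EC_{ij}^{(K)}(h')\big)
\end{eqnarray*}
for arbitrary real $\lambda_{ij,h'}$. Rewriting $S_T=(T-h)^{-1/2}\sum_{t}X_{T,t}$ (after absorbing the harmless $O(s)$ boundary terms into lower order), the summands $X_{T,t}$ are centered, bounded in $L^{2+\delta}$ uniformly, and depend only on $\boldsymbol{\zeta}_{t-K_T},\dots,\boldsymbol{\zeta}_{t+h+s}$. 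Hence $\{X_{T,t}\}$ is an $m'$-dependent triangular array with $m'=K_T+h+s$.

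Third, I verify the six hypotheses of Lemma \ref{lem8} with $\gamma=0$. Condition (\ref{l1}) is immediate from $E|\zeta|^{2+2\delta}<\infty$ and $\sum|b_r|<\infty$, which yield a uniform bound $\Delta_n\leq C$. Conditions (\ref{l2})--(\ref{l4}) reduce, by a standard stationarity argument, to verifying that $\text{Var}(X_{T,t})$ is bounded and bounded away from zero; the upper bound follows from summable autocovariances of the product process, and the lower bound is exactly where positive definiteness of $\Gamma$ in (\ref{w7}) is used, since the limiting one-step variance equals $\boldsymbol{\lambda}^{'}\Gamma\boldsymbol{\lambda}>0$ for any nonzero $\boldsymbol{\lambda}$. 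Condition (\ref{l5}) holds because $L_n$ is bounded below by a positive constant. The decisive condition (\ref{l6}) becomes $(K_T+h+s)^{1+(1+2/\delta)}/(T-h)\to 0$, i.e.\ $(K_T+h)^{(2\delta+2)/\delta}=o(T-h)$, which is precisely (\ref{w13}) once $K_T$ is chosen subordinate to $h$.

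Finally, Lemma \ref{lem8} gives $S_T\Rightarrow\mathcal{N}(0,\boldsymbol{\lambda}^{'}\Gamma\boldsymbol{\lambda})$, and the same computation that identified $L_n$ identifies the joint asymptotic covariances as $\lim_{T\to\infty}(T-h)\,\text{Cov}(C_{i_1j_1}(h_1),C_{i_2j_2}(h_2))$, as claimed. The main obstacle is the lower bound in condition (\ref{l3}): one must control the limiting variance of $\sum_{t=a}^{a+\ell-1}X_{T,t}$ uniformly in the base point $a$ and in $\ell\geq m'$, showing it is of exact order $\ell$ rather than being absorbed by telescoping cancellations caused by the growing lag $h$. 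This is where the summability $\sum|b_r|<\infty$ must be combined carefully with the non-degeneracy of $\Gamma$, and it is the step most sensitive to whether $h$ grows faster than $K_T$.
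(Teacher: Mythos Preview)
Your strategy coincides with the paper's: truncate the MA representation, reduce to a scalar CLT via Cram\'er--Wold, and verify the Romano--Wolf conditions of Lemma~\ref{lem8} with $\gamma=0$, using positive definiteness of $\Gamma$ for the lower bound on $B_n^2$ and arriving at the same decisive requirement $(m')^{2+2/\delta}/(T-h)\to 0$ for (\ref{l6}).

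The one substantive difference is how the truncation is organized. The paper keeps the truncation level $k$ \emph{fixed}, establishes the CLT for each $k$ via Lemma~\ref{lem8}, and then passes to $k\to\infty$ through the classical Anderson approximation scheme (the $\lim_k\limsup_T$ argument on pp.~479--481 of \cite{Ader1994}). You instead let $K_T\to\infty$ with $T$. That is a legitimate alternative, but your stated truncation estimate
\[
E\big|\sqrt{T-h'}\big(C_{ij}(h')-C_{ij}^{(K)}(h')\big)\big|^{2}=O\Big(\big(\textstyle\sum_{r>K}|b_r|\big)^{2}\Big)
\]
is not correct as written: the left side contains the squared deterministic bias $(T-h')\big(\sigma_{ij}(h')-\sigma_{ij}^{(K)}(h')\big)^{2}$, and under only $\sum_r|b_r|<\infty$ this need not be $o(1)$ once $K_T+h$ is forced to satisfy $K_T+h=o\big((T-h)^{\delta/(2\delta+2)}\big)$ (take, e.g., $|b_r|\asymp r^{-3/2}$). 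The repair is immediate and is in fact what your definition of $S_T$ already uses: compare the \emph{centered} statistics
\[
\sqrt{T-h'}\Big[\big(C_{ij}(h')-\sigma_{ij}(h')\big)-\big(C_{ij}^{(K)}(h')-\sigma_{ij}^{(K)}(h')\big)\Big],
\]
which has mean zero and variance $O\big(\sum_{r>K}|b_r|\big)$ uniformly in $T$ and $h'$, hence $o_P(1)$ for any $K_T\to\infty$. The paper's fixed-$k$ route sidesteps this bias issue altogether, which is the small advantage it buys; your diagonal scheme, once the centering is stated correctly, buys a cleaner one-pass argument without the separate double-limit lemma.
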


\begin{proof}  For $1\leq i,j\leq m$ and $0\leq h\leq T-1$, write $f_{i,t,k}=\sum^{k}_{s^{'}=0}b_{s^{'}}\zeta_{i,t-s^{'}}$, $C_{ij}(h,k)=\frac{1}{T-h}\sum^{T-h}_{t=1}f_{i,t,k}f_{j,t+h,k}\\ =\frac{1}{T-h}\sum^{T-h}_{t=1}\sum^{k}_{s_1,s_2=0}b_{s_1}b_{s_2}\zeta_{i,t-s_1}\zeta_{j,t+h-s_2}$, and
\begin{eqnarray}\label{w0}
&&\sigma_{ij}(h,k)=E(f_{i,t,k}f_{j,t+h,k})\non
&=&\sum^{k}_{s_1,s_2=0}b_{s_1}b_{s_2}E(\zeta_{i,t-s_1}\zeta_{j,t+h-s_2})
=\left\{
   \begin{array}{ll}
     0, & i\neq j; \\
     \sum^{k-h}_{s_1=0}b_{s_1}b_{h+s_1}, & i=j; h=0,1,\ldots,k; \\
     0, & i=j; h=k+1,k+2.
   \end{array}
 \right.
\end{eqnarray}

The proof of this lemma is similar to that of Theorem 8.4.2 of \cite{Ader1994} and it can be divided into two steps:
\medskip

{\bf Step 1}: \ For any fixed k, the first step is to provide the asymptotic theorem for
\begin{eqnarray}
\Big(\sqrt{T-h^{'}}\big(C_{ij}(h^{'},k)-\sigma_{ij}(h^{'},k)\big): 1\leq i,j\leq m; h-s\leq h^{'}\leq h+s\Big);
\end{eqnarray}

{\bf Step 2}: \ The second step is to prove that for any $1\leq i,j\leq m$, in probability,
\begin{eqnarray}
\lim_{T\rightarrow\infty}\sqrt{T-h}\Big(C_{ij}(h^{'})-C_{ij}(h^{'},k)\Big)=0.
\end{eqnarray}

The second step can be verified as in Theorem 8.4.2 of \cite{Ader1994} (i.e. page 479-page 481) and the details are omitted here.

%Next, we prove the first part based on Lemma \ref{lem10}.

Consider Step 1 now. Let
\begin{eqnarray}
X_{T-h,t}(i,j)=\frac{1}{\sqrt{T-h}}\Big(f_{i,t,k}f_{j,t+h,k}-\sigma_{ij}(h,k)\Big), \ \ 1\leq i, j\leq m,
\end{eqnarray}
so that
\begin{eqnarray}
\sqrt{T-h}\Big(C_{ij}(h,k)-\sigma(h,k)\Big)=\sum^{T-h}_{t=1}X_{T-h,t}(i,j).
\end{eqnarray}

By simple calculations, we see that $f_{i,t,k}f_{j,t+h,k}$ is independent of $f_{i,g,k}f_{j,g+h,k}$ if $t$ and $g$ differ by more than $k+h$ when $i\neq j$ and differ by more than $k$ when $i=j$. So $\{f_{i,t,k}f_{j,t+h,k}: t=1,\ldots,T-h\}$ is a $(k+h)$ or $k$ dependent covariance stationary process with mean $\sigma_{ij}(h,k)$ and covariance
\begin{eqnarray}\label{w1}
&&Cov(f_{i,t,k}f_{j,t+h,k}, f_{i,g,k}f_{j,g+h,k})\non
&=&\sum^{k}_{s_1,\ldots,s_4=0}b_{s_1}b_{s_2}b_{s_3}b_{s_4}E(\zeta_{i,t-s_1}\zeta_{j,t+h-s_2}\zeta_{i,g-s_3}\zeta_{j,g+h-s_4})-\sigma_{ij}^2(h,k)\non
&=&\left\{
     \begin{array}{ll}
       A_1, & i\neq j; \\
       \sum^{4}_{q=1}A_q-\sigma_{ii}^2(h,k), & i=j,
     \end{array}
   \right.
\end{eqnarray}
where
\begin{eqnarray*}
&&A_1=\sum^{k}_{s_1=0}\sum^{k}_{s_2=0}b_{s_1}b_{s_2}b_{g-t+s_1}b_{g-t+s_2}, \ \ A_2=\sum^{k}_{b_1=0}\sum^{k}_{b_3=0}b_{s_1}b_{h+s_1}b_{s_3}b_{h+s_3},\non
&&A_3=\sum^{k}_{s_1=0}\sum^{k}_{s_3=0}b_{s_1}b_{t-g+h+s_3}b_{s_3}b_{g-t+h+s_1}, \ \ A_4=-2\sum^{k}_{s_1=0}b_{s_1}b_{h+s_1}b_{g-t+s_1}b_{g-t+h+s_1},
\end{eqnarray*}
where (\ref{w1}) uses the fact that $E(\zeta_{i,t-s_1}\zeta_{j,t+h-s_2}\zeta_{i,g-s_3}\zeta_{j,g+h-s_4})$ is not equal to zero if and only if the four terms $\zeta_{i,t-s_1}, \zeta_{j,t+h-s_2}, \zeta_{i,g-s_3}, \zeta_{j,g+h-s_4}$ are pairwise equivalent or they are all equivalent. %$\sigma(h,k)$ is $\sigma_{ij}(h,k)$ when $i=j$.

Hence for any $1\leq i, j\leq m; h-s\leq h^{'}\leq h+s$, $\{X_{T-h^{'},t}(i,j): t=1,\ldots,T-h^{'}\}$ is a $(k+h^{'})$ or $k$ dependent covariance stationary process. This implies that any linear combination of the process $\{\sum^{m}_{i,j=1}\sum^{h+s}_{h^{'}=h-s}a_{i,j,h^{'}}X_{T-h^{'},t}(i,j): t=1,\ldots,T-h-s\}$ is a $(k+h+s)$ dependent covariance stationary process.
Thus, we need to check Conditions (\ref{l1})--(\ref{l6}) for such a linear combination of the process. %$\{\sum^{m}_{i,j=1}\sum^{h+s}_{h^{'}=h-s}a_{i,j,h^{'}}X_{T-h^{'},t}(i,j): t=1,\ldots,T-h-s\}$.
Moreover, one should note that it is enough to justify those conditions for each stochastic process $\{X_{T-h^{'},t}(i,j): t=1,\ldots,T-h^{'}\}$, where $1\leq i, j\leq m; h-s\leq h^{'}\leq h+s$, since $s$ and $m$ are both fixed.

Observe that
\begin{eqnarray}\label{w6}
&&E\Big|X_{T-h^{'},t}(i,j)\Big|^{2+\delta}
=\Big(\frac{1}{T-h^{'}}\Big)^{(2+\delta)/2}E\Big|f_{i,t,k}f_{j,t+h,k}-\sigma_{ij}(h,k)\Big|^{2+\delta}\non
&\leq&K\Big(\frac{1}{T-h^{'}}\Big)^{(2+\delta)/2}\Big(E\Big|f_{i,t,k}f_{j,t+h,k}\Big|^{2+\delta}+\Big|\sigma_{i,j}(h,k)\Big|^{2+\delta}\Big)\non
&\leq&K\Big(\frac{1}{T-h^{'}}\Big)^{(2+\delta)/2},
\end{eqnarray}
where $K$ is a constant number, and we have also used (\ref{w0}) and the fact that
\begin{eqnarray}
&&E\Big|f_{i,t,k}f_{j,t+h,k}\Big|^{2+\delta}
=E\Big|\sum^{k}_{s_1,s_2=0}b_{s_1}b_{s_2}\zeta_{i,t-s_1}\zeta_{j,t+h-s_2}\Big|^{2+\delta}\non
&\leq&K\sum^{k}_{s_1,s_2=0}|b_{s_1}|^{2+\delta}|b_{s_2}|^{2+\delta}\Big(E|\zeta_{i,t-s_1}|^{4+2\delta}+E|\zeta_{j,t+h-s_2}|^{4+2\delta}\Big)
=O(1).
\end{eqnarray}

In view of (\ref{w6}), taking $\Delta_T=K\Big(\frac{1}{T-h^{'}}\Big)^{(2+\delta)/2}$,  we have
\begin{eqnarray}\label{w12}
E\Big|X_{T-h^{'},t}(i,j)\Big|^{2+\delta}\leq\Delta_T,
\end{eqnarray}
implying (\ref{l1}).

We obtain from (\ref{w1}) that
\begin{eqnarray}\label{w2}
&&B^2_{T,\ell,a}(i,j)\equiv Var(\sum^{a+\ell-1}_{t=a}X_{T-h^{'},t}(i,j))\non
&=&\frac{1}{T-h^{'}}\sum^{a+\ell-1}_{t=a}\sum^{a+\ell-1}_{g=a}Cov(f_{i,t,k}f_{j,t+h^{'},k}, f_{i,g,k}f_{j,g+h^{'},k})\non
&=&\left\{
     \begin{array}{ll}
       \frac{1}{T-h^{'}}\sum^{a+\ell-1}_{t=a}\sum^{a+\ell-1}_{g=a}A_1, & i\neq j; \\
       \frac{1}{T-h^{'}}\sum^{a+\ell-1}_{t=a}\sum^{a+\ell-1}_{g=a}\Big(\sum^{4}_{q=1}A_q-\sigma_{ii}^2(h^{'},k)\Big), & i=j.
     \end{array}
   \right.
\end{eqnarray}

Note that $A_2=\sigma_{ii}^2(h^{'},k)$. Below we only evaluate the remaining terms involving $A_1,A_3,A_4$. By the fact that $\sum^{+\infty}_{r=0}|b_r|<+\infty$, we have
\begin{eqnarray}\label{w3}
&&\Big|\frac{1}{T-h^{'}}\sum^{a+\ell-1}_{t=a}\sum^{a+\ell-1}_{g=a}A_1\Big|
=\Big|\frac{1}{T-h^{'}}\sum^{a+\ell-1}_{t=a}\sum^{a+\ell-1}_{g=a}\sum^{k}_{s_1=0}\sum^{k}_{s_2=0}b_{s_1}b_{s_2}b_{g-t+s_1}b_{g-t+s_2}\Big|\non
&\leq&\frac{K}{T-h^{'}}\sum^{a+\ell-1}_{t=a}\sum^{k}_{s_1=0}|b_{s_1}|\sum^{k}_{s_2=0}|b_{s_2}|\sum^{a+\ell-1}_{g=a}|b_{g-t+s_1}|=O\Big(\frac{\ell}{T-h^{'}}\Big).
\end{eqnarray}

Similarly, one may verify that
\begin{equation}\label{w5}\frac{1}{T-h^{'}}\sum^{a+\ell-1}_{t=a}\sum^{a+\ell-1}_{g=a}A_j=O\Big(\frac{\ell}{T-h^{'}}\Big), \ j=3,4.
\end{equation}

We conclude from (\ref{w2})-(\ref{w5}) that
\begin{eqnarray}\label{w9}
B^2_{T,\ell,a}(i,j)=O\Big(\frac{\ell}{T-h}\Big).
\end{eqnarray}

Taking $\ell=T-h$ in $B^{2}_{T,\ell,a}$, we have $B^2_T=O(1)$. Moreover, for any linear combination of the process $\{\sum\limits_{t}\sum^{m}_{i,j=1}\sum^{h+s}_{h^{'}=h-s}a_{i,j,h^{'}}X_{T-h^{'},t}(i,j): t=1,\ldots,T-h-s\}$, by the  assumption of $\Gamma>0$ (see (\ref{w7}), its variance $B^2_T$ is
\begin{eqnarray}\label{w8}
B^2_T=\bba^{'}\boldsymbol{\Gamma}\bba>0.
\end{eqnarray}

 In view of (\ref{w9}) and (\ref{w8}), we can take $\gamma=0$, $K_T=\widetilde{K}_1\frac{1}{T-h-s}$ and $L_T=\widetilde{K}_2\frac{1}{T-h-s}$ for the purpose of verifying (\ref{l2})--(\ref{l6}), where $\widetilde{K}_1$ and $\widetilde{K}_2$ are two constants. Then
\begin{eqnarray}\label{w11}
&&\frac{B_{T,\ell,a}^2}{\ell^{1+\gamma}}\leq K_T, \ \ for \ all \ a \ and \ all \ \ell\geq k+h;\non
&&\frac{B_T^2}{(T-h^{'})(k+h)^{\gamma}}\geq L_T.
\end{eqnarray}

Moreover, $K_T$, $L_T$ and $\Delta_T$ satisfy
\begin{eqnarray}\label{w10}
\frac{K_T}{L_T}=O(1) \ \ \mbox{and} \ \ \frac{\Delta_T}{L_T^{(2+\delta)/2}}=O(1).
\end{eqnarray}

By (\ref{w13}), we have that, for any fixed $k$,
\begin{eqnarray}\label{w14}
\frac{(k+h)^{2+2/\delta}}{T-h}\rightarrow 0, \ \ as \ T\rightarrow\infty.
\end{eqnarray}

From (\ref{w12}), (\ref{w11}), (\ref{w10}), (\ref{w14}) and Lemma \ref{lem8}, we conclude that
\begin{eqnarray*}
\Big(\sqrt{T-h^{'}}\big(C_{ij}(h^{'},k)-\sigma_{ij}(h^{'},k)\big): 1\leq i,j\leq m; h-s\leq h^{'}\leq h+s\Big)
\end{eqnarray*}
converges in distribution to a standard normal distribution with mean zero and covariances
\begin{eqnarray*}
\Big(\lim_{T\rightarrow\infty}(T-h)Cov\big(C_{i_1j_1}(h_1,k), C_{i_2j_2}(h_2,k)\big): 1\leq i_1,i_2,j_1,j_2\leq m; h-s\leq h_1,h_2\leq h+s\Big).
\end{eqnarray*}
Hence the proof of step 1 is completed.

\end{proof}
}

\end{document}